\def\arxiv{1}

\documentclass[11pt,letterpaper]{article}
\usepackage[margin=1in]{geometry}
\usepackage[utf8]{inputenc}
\usepackage[colorlinks=true,allcolors=blue]{hyperref}
\usepackage{amsmath}
\usepackage{amsfonts}
\usepackage{amsthm}
\usepackage{braket}
\usepackage{authblk}
\usepackage{physics}
\usepackage{comment}
\usepackage[colorlinks=true,allcolors=blue]{hyperref}
\usepackage{breakcites}
\usepackage{cleveref}
\usepackage[sanserif,basic]{complexity}
\usepackage{graphicx}
\usepackage{subcaption}

\newtheorem{theorem}{Theorem}[section]

\newtheorem{lemma}[theorem]{Lemma}
\newtheorem{claim}[theorem]{Claim}

\newtheorem{corollary}[theorem]{Corollary}

\newtheorem{remark}[theorem]{Remark}
\newtheorem{problem}{Problem}

\AddToHook{env/remark/begin}{\crefalias{theorem}{remark}}
\AddToHook{env/lemma/begin}{\crefalias{theorem}{lemma}}
\AddToHook{env/proposition/begin}{\crefalias{theorem}{proposition}}
\AddToHook{env/claim/begin}{\crefalias{theorem}{claim}}
\AddToHook{env/corollary/begin}{\crefalias{theorem}{corollary}}
\DeclareMathOperator*{\Ex}{\mathbb{E}}

\crefname{claim}{Claim}{Claims}
\crefname{theorem}{Theorem}{Theorems}
\crefname{lemma}{Lemma}{Lemmas}
\crefname{proposition}{Proposition}{Propositions}
\crefname{corollary}{Corollary}{Corollaries}
\crefname{appendix}{appendix}{appendices}
\Crefname{appendix}{Appendix}{Appendices}

\AddToHook{cmd/appendix/before}{%
  \crefalias{section}{appendix}%
  \crefalias{subsection}{appendix}%
}

\title{Worst-Case Quantum Algorithm for Optimal Polynomial Intersection Beyond  Decoded Quantum Interferometry}
\ifnum\arxiv=1
\author[1]{Shuji Horinaga}
\author[2]{Takashi Yamakawa}
\affil[1]{NTT Institute for Fundamental Mathematics}
\affil[2]{NTT Social Informatics Laboratories}
\else
\author{}
\fi

\begin{document}
\maketitle
\begin{abstract}
The Optimal Polynomial Intersection (OPI) problem asks us to find a low-degree polynomial over a finite field whose values lie in prescribed subsets on as many given inputs as possible. Decoded quantum interferometry (DQI) gives a quantum algorithm for OPI in parameter regimes beyond those achieved by the best known classical heuristics. Follow-up works improve the parameter regimes, but their analyses are limited to average-case settings. Recently, Sun and Wootters showed that, even in the worst case, OPI has a solution in a larger parameter regime than the one covered by DQI. However, they left open whether one can design a quantum algorithm that solves OPI in the worst-case beyond the DQI regime.

We give such a quantum algorithm. As a byproduct, we also improve the existential bound of Sun and Wootters in certain parameter regimes. In particular, when each subset contains roughly half of the field elements, our algorithm finds a solution with satisfaction rate $s=1$ whenever the rate satisfies $R>0.75$. This matches the previous average-case bound, whereas DQI cannot achieve $s=1$ unless $R=1$. Our existential bound guarantees the existence of a solution when $R> 0.7158$, improving over the previous threshold $R>0.7495$. More generally, our existential results extend to the Max-LINSAT problem with respect to arbitrary maximum distance separable (MDS) codes. The corresponding algorithmic results apply only to MDS codes whose dual admits an efficient list decoder. Our results are obtained through a novel application of a Brascamp--Lieb-type inequality in the MDS setting, which may have further applications.

\end{abstract}
\thispagestyle{empty}
\clearpage 

\newcommand{\takashi}[1]{{\color{red}(\textbf{Takashi}: #1)}}
\newcommand{\revise}[1]{{\color{red}#1}}
\newcommand{\Shuji}[1]{{\color{blue}(\textbf{Shuji}: #1)}}

\newcommand{\one}{\mathbf{1}}
\def \sample { \overset{\hspace{0.1em}\mathsf{\scriptscriptstyle\$}}{\leftarrow} }
\newcommand{\ra}{\rightarrow}
\newcommand{\la}{\leftarrow}
\newcommand{\negl}{\mathsf{negl}}
\newcommand{\secpar}{\lambda}
\newcommand{\A}{\mathcal{A}}
\newcommand{\B}{\mathcal{B}}
\newcommand{\siml}{\mathcal{S}}
\newcommand{\ora}{\mathcal{O}}
\newcommand{\bit}{\{0,1\}}
\newcommand{\td}{\mathsf{td}}
\newcommand{\defeq}{:=}
\newcommand{\oracle}{\mathcal{O}}
\newcommand{\enc}{\mathsf{Enc}}
\newcommand{\dec}{\mathsf{Dec}}
\newcommand{\calI}{\mathcal{I}}
\newcommand{\calS}{\mathcal{S}}
\newcommand{\calX}{\mathcal{X}}
\newcommand{\calY}{\mathcal{Y}}
\newcommand{\calZ}{\mathcal{Z}}
\newcommand{\calR}{\mathcal{R}}
\newcommand{\vecr}{{\mathbf{r}}}
\newcommand{\vecx}{\mathbf{x}}
\newcommand{\vecy}{\mathbf{y}}
\newcommand{\vecz}{\mathbf{z}}
\newcommand{\matA}{{\mathbf{A}}}
\newcommand{\matB}{{\mathbf{B}}}
\newcommand{\matD}{{\mathbf{D}}}
\newcommand{\reprogram}{\mathsf{Reprogram}}
\newcommand{\Func}{\mathsf{Func}}
\newcommand{\TD}{\mathsf{TD}}
\newcommand{\Samp}{\mathsf{Samp}}
\newcommand{\experiment}{\mathsf{Exp}}
\newcommand{\semiconst}{\mathsf{SC}}
\newcommand{\perm}{\mathsf{Perm}}
\newcommand{\FF}{\mathbb{F}}
\newcommand{\vecv}{\mathbf{v}}
\renewcommand{\vu}{\mathbf{u}}
\newcommand{\vx}{\mathbf{x}}
\newcommand{\vy}{\mathbf{y}}
\newcommand{\vz}{\mathbf{z}}
\newcommand{\ve}{\mathbf{e}}
\newcommand{\vc}{\mathbf{c}}
\newcommand{\vm}{\mathbf{m}}
\newcommand{\mA}{\mathbf{A}}
\newcommand{\mG}{\mathbf{G}}
\newcommand{\mH}{\mathbf{H}}
\newcommand{\valpha}{\boldsymbol \alpha}
\newcommand{\GRS}{\mathrm{GRS}}
\newcommand{\GRSDecode}{\mathrm{GRSDecode}}
\newcommand{\wt}{\mathsf{wt}}
\newcommand{\GRSDecodeError}{\mathrm{GRSDecodeError}}
\newcommand{\GRSListDecode}{\mathrm{GRSListDecode}}
\newcommand{\on}{\mathsf{on}}
\newcommand{\off}{\mathsf{off}}
\newcommand{\hw}{\mathsf{hw}}
\newcommand{\HW}{\mathcal{H}\mathcal{W}}
\newcommand{\HWl}{\HW_{\leq \alpha n }}
\newcommand{\HWg}{\HW_{> \alpha n }}
\newcommand{\etag}{\eta_{1}}
\newcommand{\etale}{\eta_{0}}
\newcommand{\bad}{\mathsf{BAD}}
\newcommand{\good}{\mathsf{GOOD}}
\newcommand{\rand}{\mathsf{rand}}
\newcommand{\MaxLINSAT}{\mathrm{MaxLINSAT}}
\newcommand{\OPI}{\mathrm{OPI}}
\newcommand{\mfol}{(m)}
\newcommand{\decode}{\mathsf{Decode}}
\newcommand{\encode}{\mathsf{Encode}}
\newcommand{\RS}{\mathrm{RS}}
\newcommand{\vzero}{\mathbf{0}}
\newcommand{\keylength}{{\ell_\mathsf{key}}}
\newcommand{\inlength}{{\ell_\mathsf{in}}}
\newcommand{\outlength}{{\ell_\mathsf{out}}}
\newcommand{\pilength}{{\ell_{\pi}}}
\newcommand{\accept}{\mathsf{acc}}
\newcommand{\poq}{\mathsf{poq}}
\newcommand{\concat}{||}
\newcommand{\Approx}{\mathsf{Approx}}
\newcommand{\gooderrors}{\mathcal{G}}
\newcommand{\baderrors}{\mathcal{B}}
\newcommand{\dist}{\mathcal{D}}
\newcommand{\bardist}{\widetilde{\mathcal{D}}}
\newcommand{\bardistprime}{\widetilde{\mathcal{D}'}}
\newcommand{\hashset}{\widetilde{\mathcal{H}}}
\newcommand{\Col}{\mathrm{Col}}

\newcommand{\poqsetup}{\mathsf{PoQRO}.\mathsf{Setup}}
\newcommand{\poqprove}{\mathsf{PoQRO}.\mathsf{Prove}}
\newcommand{\poqverify}{\mathsf{PoQRO}.\mathsf{Verify}}
\newcommand{\pk}{\mathsf{pk}}
\newcommand{\sk}{\mathsf{sk}}
\newcommand{\gen}{\mathsf{Gen}}
\newcommand{\prove}{\mathsf{Prove}}
\newcommand{\verify}{\mathsf{Verify}}
\newcommand{\prob}{\mathsf{prob}}
\newcommand{\sol}{\mathsf{sol}}
\newcommand{\QFT}{\mathsf{QFT}}
\newcommand{\learner}{\mathcal{L}}
\newcommand{\efficientlearner}{\widetilde{\mathcal{L}}}
\newcommand{\Lout}{L_{\mathsf{out}}}

\newcommand{\ecrhsetup}{\mathsf{ECRH}.\mathsf{Setup}}
\newcommand{\ecrhgen}{\mathsf{ECRH}.\mathsf{Gen}}
\newcommand{\ecrheval}{\mathsf{ECRH}.\mathsf{Eval}}
\newcommand{\ecrhequiv}{\mathsf{ECRH}.\mathsf{Equiv}}
\newcommand{\crs}{\mathsf{crs}}

\newcommand{\sigkeygen}{\mathsf{Sig}.\mathsf{KeyGen}}
\newcommand{\sigsign}{\mathsf{Sig}.\mathsf{Sign}}
\newcommand{\sigverify}{\mathsf{Sig}.\mathsf{Verify}}
\newcommand{\vk}{\mathsf{vk}}
\newcommand{\sigk}{\mathsf{sigk}}

\newcommand{\pkekeygen}{\mathsf{PKE}.\mathsf{KeyGen}}
\newcommand{\pkeenc}{\mathsf{PKE}.\mathsf{Enc}}
\newcommand{\pkedec}{\mathsf{PKE}.\mathsf{
Dec}}
\newcommand{\ek}{\mathsf{ek}}
\newcommand{\dk}{\mathsf{dk}}
\newcommand{\ct}{\mathsf{ct}}
\newcommand{\st}{\mathsf{st}}

\newcommand{\win}{\mathsf{win}}
\newcommand{\chal}{\mathcal{C}}
\newcommand{\inp}{\mathsf{inp}}

\tableofcontents
\thispagestyle{empty}
\clearpage
\pagenumbering{arabic}

\section{Introduction}
Quantum algorithms are believed to solve certain problems significantly faster than classical algorithms, a phenomenon often referred to as quantum advantage. Notable examples include integer factorization and the discrete logarithm problem, both of which can be solved in quantum polynomial time by Shor's algorithm~\cite{FOCS:Shor94}, while no polynomial-time classical algorithms are known for them. 

Recently, a new family of quantum algorithms~\cite{EC:CheLiuZha22,JACM:YamZha24,TQC:ChaTil24,DQI,STOC:ChaTil25,chailloux2025opixsoftdecoders,SODA:BDGJLS26,C:ChaHer26} has emerged, originating from Regev's quantum reduction~\cite{JACM:Regev09} between lattice problems. Among these, the algorithm known as decoded quantum interferometry (DQI)~\cite{DQI} is believed to achieve quantum advantage for a problem called the optimal polynomial intersection (OPI) problem. The OPI problem is defined as follows. 
\begin{problem}[Optimal Polynomial Intersection Problem]\label{prob:OPI}
Let $q$ be a prime power, let $k,n$ be integers with $0 \le k \le n \le q$, 
let $\valpha=(\alpha_1,\alpha_2,\ldots,\alpha_n) \in \mathbb{F}_q^n$ be distinct, 
let $S_i \subseteq \mathbb{F}_q$ for each $i \in [n]$, and let $s\in [0,1]$.

The optimal polynomial intersection problem
$\OPI(k,q,\valpha,\{S_i\}_{i\in[n]},s)$ is the following problem:
\begin{description}
    \item[Given:] $k$, $q$, $\valpha$, $s$, and quantum oracle access to the membership oracle of $S_i$ for each $i\in[n]$.\footnote{Existing works appear to treat the subsets $\{S_i\}_{i\in [n]}$ as being given explicitly, with the regime $q\le \poly(n)$ in mind. In contrast, in our formulation we provide only quantum oracle access to the membership oracle of each $S_i$. This choice is motivated by the desire to capture settings in which $q$ is superpolynomial, so that an explicit description of the sets may be too large. We note that existing algorithms can also be implemented using oracle access rather than an explicit description of the sets.}
    \item[Find:] A polynomial $P \in \mathbb{F}_q[X]$ such that $\deg(P)<k$ and
    \[
        \left|\{i\in[n] : P(\alpha_i)\in S_i\}\right| \ge s n .
    \]
\end{description}
\end{problem}
 
  For any $\valpha=(\alpha_1,\alpha_2,\ldots,\alpha_n) \in \mathbb{F}_q^n$ consisting of distinct elements and any subsets $S_i \subseteq \mathbb{F}_q$ satisfying $|S_i|=\rho q$ for every $i \in [n]$, DQI solves
$\OPI(k,q,\valpha,\{S_i\}_{i\in[n]},s)$ if the following is satisfied:
\[
R>2\left(1-\left(\sqrt{s \rho}+\sqrt{(1-s)(1-\rho)}\right)^2\right)
\]
where $R=k/n$.

It has been argued that this performance surpasses that of the best known heuristic classical algorithm, namely Prange's algorithm~\cite{Prange62}, thereby providing a candidate for quantum advantage.

While this is remarkable, it is natural to ask whether these parameters are optimal. For example, in the typical case where $\rho= 1/2$, the above bound does not allow one to achieve $s= 1$ unless $R= 1$, which is  meaningless.

In this direction, Chailloux~\cite{chailloux2025opixsoftdecoders} improved the bound in the average-case setting as follows:
\[
R> 1-\left(\sqrt{s \rho}+\sqrt{(1-s)(1-\rho)}\right)^4.
\]
This is indeed an improvement since $2(1-x^2)-(1-x^4)=(1-x^2)^2\ge 0$ for any $x\in \mathbb{R}$. 
This improvement is significant: it allows one to achieve $s=1$ even when $\rho= 1/2$, provided that $R> 0.75$.

However, an important caveat of this result is that the success guarantee holds only over a random choice of the sets $\{S_i\}_{i\in[n]}$. More precisely, what is shown is that, in the parameter regime described above, there exists a quantum polynomial-time algorithm that solves
$\OPI(k,q,\valpha,\{S_i+c_i\}_{i\in[n]},s)$ 
with overwhelming probability over the independent choices of shifts $c_i\gets \mathbb{F}_q$ for each $i\in[n]$, where $S_i+c_i=\{x+c_i: x\in S_i\}$. 
Thus, this analysis does not apply to worst-case choices of $\{S_i\}_{i\in[n]}$. To date, DQI remains the only known quantum algorithm with a worst-case success guarantee for OPI.

\paragraph{Existential bound for OPI.}
Putting quantum algorithms aside, it is also natural to ask for which parameters OPI admits a solution at all. In this direction, Sun and Wootters~\cite{SunWootters26} studied conditions for the existence of solutions to OPI. They proved that, in the worst case, OPI admits solutions beyond the parameter regime in which DQI is guaranteed to work.

In particular, they formulate their results in terms of the following two quantities:\footnote{They use the notation $\mu_1(\rho)$ and $\mu_0(\rho)$ to denote one half of $R_1(\rho)$ and $R_0(\rho)$ in our notation, respectively.}
\begin{description}
\item[Saturation bound:]
For $\rho\in(0,1)$, the saturation bound $R_1(\rho)$ is the smallest value of $R$ for which OPI admits an almost perfect solution, that is, a solution with $s\approx 1$.
\item[Improvement bound:]
For $\rho\in (0,1)$, the improvement bound $R_0(\rho)$ is the smallest value of $R$ for which the existential bound improves upon the algorithmic bound achieved by DQI.
\end{description}

For example, they prove that $R_1(1/2)<0.7495$. 
This is notable because even the average-case algorithmic result \cite{chailloux2025opixsoftdecoders} requires $R> 0.75$ to achieve $s\approx 1$ when $\rho= 1/2$. They also show that $R_0(1/2)<0.6225$.

However, their result is purely existential: it guarantees only the existence of solutions. They leave open the problem of turning this existential improvement into an algorithmic improvement over DQI in the worst-case setting.

\subsection{Our Result} 
We give the first worst-case quantum algorithm for OPI that improves upon DQI in certain parameter regimes.
 Since the full condition for our algorithm is somewhat complicated, we state here a simplified version and spell out the condition explicitly in the balanced case $\rho= 1/2$, which is the main case of interest.\footnote{Strictly speaking, when $q$ is odd, we cannot have $|S|=\frac{1}{2}q$. Thus, whenever we write $\rho=1/2$ in the introduction, it should be understood as meaning
$|S|=\left(\frac{1}{2}\pm o(1)\right)q.$} See \Cref{thm:main} for the full statement, and see \Cref{rem:case_of_rho_half} for how \Cref{thm:main_intro} is derived from \Cref{thm:main}. 

\begin{theorem}[Informal; balanced case]\label{thm:main_intro}
Let $q=p^e$ be a prime power such that $p=\omega(1)$. Let $n,k$ be positive integers such that
$n/2\le k\le n\le q$, and let $R=k/n$. 

If $R>0.75$, then there exists a quantum polynomial-time algorithm that solves
$\OPI(k,q,\valpha,\{S_i\}_{i\in[n]},1)$ for any distinct $\alpha_1,\ldots,\alpha_n\in\FF_q$ and any subsets $S_i\subseteq\FF_q$ such that $|S_i|\approx q/2$ for all $i\in[n]$. 

For $R\le 0.75$ and $s\in (1/2,1)$, 
suppose that the following are satisfied with a constant margin: 
\begin{align}
R> 1-\left(\sqrt{\frac{s}{2}}+\sqrt{\frac{1-s}{2}}\right)^4 \label{eq:theorem_cond_one_intro}
\end{align}
and
\begin{align}
H(R)+R\log \left(3\left(\frac{2s-1}{\sqrt{3}}\right)^{\frac{1}{1-R}}\right)<0, \label{eq:theorem_cond_two_intro}
\end{align}
where
$H(R)=-R\log R-(1-R)\log(1-R)$ 
is the binary entropy function. 

Then there exists a quantum polynomial-time algorithm that solves
$\OPI(k,q,\valpha,\{S_i\}_{i\in[n]},s)$ for any distinct $\alpha_1,\ldots,\alpha_n\in\FF_q$ and any subsets $S_i\subseteq\FF_q$ such that $|S_i|\approx q/2$ for all $i\in[n]$. 
\end{theorem}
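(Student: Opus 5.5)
Since this informal theorem is a corollary of the full statement (\Cref{thm:main}, specialized to $\rho=1/2$ as in \Cref{rem:case_of_rho_half}), the plan below outlines the argument behind \Cref{thm:main} and then specializes it.

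\textbf{Algorithm.} I would run a DQI-style reduction with a soft, Chailloux-type weighting, but analyze it in the worst case over $\{S_i\}$.
Let $f_i=\one_{S_i}-\rho$ be the centered indicator of $S_i$, and let $\hat f_i$ be its Fourier transform over $\FF_q$, which is supported on nonzero frequencies.
The algorithm prepares the state
\[
\sum_{\vy}\Big(\prod_i w_i(y_i)\Big)\ket{\vy}\ket{\mH^\top \vy},
\]
where $\mH$ is a parity-check matrix of the dual of $\RS$, i.e.\ a generator of the GRS code of dimension $n-k$.
The weights $w_i$ are chosen as Chailloux's soft amplitudes, which are powers of $|\hat f_i|$.
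It then uncomputes $\vy$ from the syndrome using a list decoder for the dual GRS code (Guruswami--Sudan), and applies the inverse QFT.
Measuring yields a polynomial $P$, and $\#\{i: P(\alpha_i)\in S_i\}$ equals a quadratic form in the amplitudes.

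In the average case the cross terms average out. In the worst case they do not, and the task is to bound them.

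\textbf{Error terms.} I would split the analysis of the output distribution into a main term and an interference term.
The main term reproduces Chailloux's value, which gives condition \eqref{eq:theorem_cond_one_intro}.
The interference term is a sum over pairs $\vy\neq\vy'$ with $\mH^\top(\vy-\vy')=0$, that is, over nonzero codewords $\vc$ of the dual code.
Each summand is a product $\prod_i g_i(c_i)$ of local correlation functions, $g_i(c)=\sum_y w_i(y)\overline{w_i(y+c)}\cdots$.
Because the dual of an RS code is MDS, the codewords of weight $d$ are controlled by the MDS weight enumerator, roughly $\binom{n}{d}(q-1)^{d-(n-k)}$ of them.
A worst-case bound on $\prod_i |g_i(c_i)|$ is not enough, since the $c_i$ are correlated across coordinates.
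Here I would invoke the Brascamp--Lieb-type inequality for MDS codes stated earlier in the paper. It bounds $\sum_{\vc\in C}\prod_i h_i(c_i)$ by $\prod_i \|h_i\|_{p}$ with exponent $p=1/(1-R)$ arising from the dimension ratio.
This is exactly where the term $\big((2s-1)/\sqrt3\big)^{1/(1-R)}$ in \eqref{eq:theorem_cond_two_intro} comes from.
Summing over the weight $d$, together with the factor $\binom{n}{Rn}$ (the entropy $H(R)$) and the factor $3^{R}$ from the number of admissible supports or level patterns, gives an exponent that is negative precisely under \eqref{eq:theorem_cond_two_intro}. The interference term is then exponentially small.

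\textbf{Decoding and the case $s=1$.} I also need the decoding radius to cover the support of the weights with overwhelming probability. Guruswami--Sudan list decoding handles weight up to $n-\sqrt{n(n-k)\cdots}$. I would set the Hamming weight of the prepared $\vy$ via a binomial-type truncation and verify that the list-decoding radius exceeds it under the stated margins. The hypotheses $p=\omega(1)$ and $k\ge n/2$ are used here, and for the Fourier estimates (flattening of characters).

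For $R>0.75$ and $\rho=1/2$, the Chailloux bound permits $s=1$. I would check that the second condition becomes vacuous in this limit, since $2s-1=1$ and the Brascamp--Lieb term can then be absorbed. Rounding or repeating then yields an exact solution with $s=1$.

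\textbf{Main obstacle.} I expect the hardest step to be the worst-case control of the interference sum: choosing the weights so that the local functions have small $L^{1/(1-R)}$ norms uniformly over all $S_i$ with $|S_i|\approx q/2$, while keeping the main term at Chailloux's value.
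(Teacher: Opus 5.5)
Your plan has genuine gaps in both the algorithm and the analysis. On the algorithmic side, you cannot ``uncompute $\vy$ from the syndrome using a list decoder.'' Beyond half the minimum distance of $C^\perp$, many low-weight $\vy$ share a syndrome, so no function of the syndrome disentangles the register. The paper's key move (\Cref{thm:Regev_like_new}) is to subtract a \emph{uniformly random} element of the list. It then proves only an inverse-polynomial overlap with the ideal state, using $\sum_{\vy}|\mathcal L_{\vy}|^{-1}\bigl|\sum_{\vx\in\mathcal L_{\vy}}a_{\vy,\vx}\bigr|^2\ge L^{-1}\|\ket{\psi_{\rm mid}}\|^2$. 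This step also requires the extra norm condition \Cref{eq:cond_three}, which is itself a sum over dual codewords. With only $1/\poly(n)$ overlap, a DQI-style quadratic form for the \emph{expected} number of satisfied constraints says nothing about the output. What you actually need is that the target $\sum_{\vx\in C}W(\vx)\ket{\vx}$ puts all but negligible mass on codewords satisfying at least $sn$ constraints. The paper proves this (\Cref{eq:cond_four_cor}) by comparing $|W|^2$ with $|W'|^2$ at a parameter $s<\tau$, which gains the factor $\bigl((\tau/s)^s((1-\tau)/(1-s))^{1-s}\bigr)^n$, and then reruns the dual-code estimate at $s$. Three smaller corrections follow. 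Condition \Cref{eq:theorem_cond_one_intro} is not a ``main term'' value: it says exactly that the weight of $\ve\sim|\hat W|^2$, concentrated near $\bigl(1-(\sqrt{\tau\rho}+\sqrt{(1-\tau)(1-\rho)})^2\bigr)n$, lies inside the list-decoding radius $(1-\sqrt{1-R})n$. Your amplitudes must be $\hat W_i$, which for $a\ne 0$ equal a constant times $\hat{\one}_{S_i}(a)$, phase included; taking ``powers of $|\hat f_i|$'' destroys the interference. For $s=1$, exactness comes from $W$ being supported on $S_1\times\cdots\times S_n$ when $\tau=1$, not from rounding.

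On the interference sum, you leave the decisive step as the ``main obstacle'' and attribute the constants incorrectly. Three ingredients are missing.
First, the cross term only involves errors outside the decoding radius, so the summand $\Gamma(\vu)$ is not a product. The paper bounds it by $\prod_{i:u_i\ne 0}|\nu_i(u_i)|$ via $\ell_1$ norms of a bivariate generating polynomial.
Second, \Cref{thm:MDS_BL} with $h_i(0)=1$ gives $\prod_i(1+M_i)^{1/Q}$, which is exponentially \emph{large}. You must exploit that nonzero dual codewords have weight at least $k+1$: tilt to $h_i(a)=t|\nu_i(a)|$ for $a\ne 0$, at a cost of $t^{-k}$, and choose $t^Q=R/(M(1-R))$. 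This yields $(1-R)\bigl(H(R)+R\log M\bigr)$.
Third, and most important, you need $\sum_{a\ne 0}|\mu_i(a)|^Q\le M$ uniformly over all $S_i$ of density $1/2$. Here $\mu_i(a)=\frac{\tau-\rho}{\rho(1-\rho)\sqrt q}\overline{\hat{\one}_{S_i}(a)}$, and with $\beta(a)=|\hat{\one}_{S_i}(a)|/\sqrt q$:
Parseval gives $\sum_{a\ne0}\beta(a)^2=1/4$;
the additive-energy bound via Green--Ruzsa (\Cref{thm:green_ruzsa}) gives $\sum_{a\ne0}\beta(a)^4\le 1/48+O(p^{-1})$;
H\"older interpolation for $2\le Q\le 4$ then gives $M=3\bigl((2\tau-1)/\sqrt3\bigr)^Q$.
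So the $3$ and the $\sqrt3$ come from the fourth moment, which is also where $p=\omega(1)$ enters, not from counting supports or level patterns. Finally, for $R>0.75$ the second condition is not vacuous. There $Q>4$, and you need the bias bound $\beta(a)\le\sin(\rho\pi)/\pi+O(p^{-2})$, giving $M=(2/\pi)^Q\pi^4/48$, for which $H(R)+R\log M<0$ still has to be checked.
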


\begin{figure}[t]
    \centering
    \includegraphics[width=0.8\linewidth]{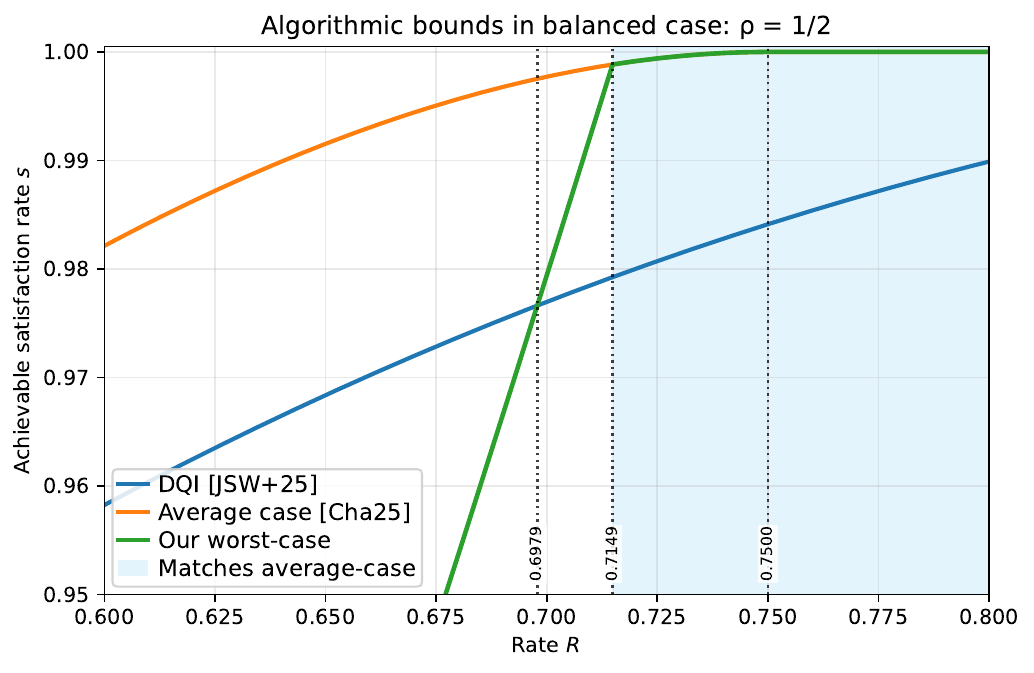}
    \caption{Comparison of achievable algorithmic satisfaction rates in the balanced case $\rho=1/2$.}
    \label{fig:balanced-case-comparison}
\end{figure}
\begin{figure}[t]
    \centering
    \includegraphics[width=0.8\linewidth]{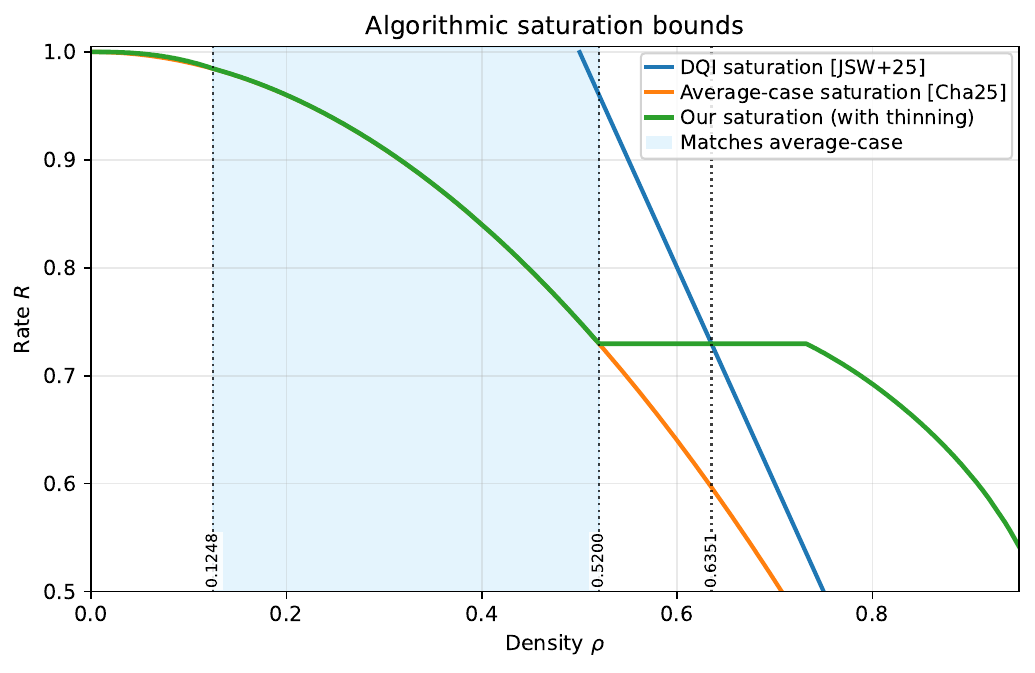}
    \caption{Comparison of algorithmic saturation bounds as functions of $\rho$. The direct application of \Cref{thm:main} yields a non-monotone curve, with a bump for $\rho>0.5200$. This bump is only an artifact of the proof. Indeed, by a simple thinning argument (see \Cref{lem:thinning}), we may replace the sets by smaller subsets, and hence replace $\rho$ by a smaller value, without reducing the target satisfaction rate. Thus, for $\rho>0.5200$, we use the horizontal bound obtained by applying \Cref{thm:main} at $\rho\approx 0.5200$, wherever this improves upon the bound obtained from a direct application of \Cref{thm:main} at the actual value of $\rho$.
    }
    \label{fig:saturation-bounds}
\end{figure}

See \Cref{fig:balanced-case-comparison} for a comparison with DQI and the average-case algorithm of \cite{chailloux2025opixsoftdecoders} in the balanced regime $\rho= 1/2$.

To interpret our bounds, we introduce algorithmic variants of the saturation and improvement bounds, denoted by $R^{\rm alg}_{1}(\rho)$ and $R^{\rm alg}_{0}(\rho)$, respectively. Here, $R^{\rm alg}_{1}(\rho)$ is the minimum rate for which a quantum polynomial-time algorithm finds an almost perfect solution, i.e., one with $s\approx 1$.\footnote{In fact, our result ensures that our algorithm finds an exact perfect solution, i.e., one with $s=1$, when $R> R_1^{\rm alg}(\rho)$, whereas DQI and the previous existence result of \cite{SunWootters26} only ensure $s=1-o(1)$.}
Similarly, $R^{\rm alg}_{0}(\rho)$ is the minimum rate for which a quantum polynomial-time algorithm improves upon DQI for a given $\rho\in(0,1)$.

In the balanced case $\rho = 1/2$, \Cref{thm:main_intro} gives
$R^{\rm alg}_{1}(1/2) \le 0.75$.
Equivalently, when $\rho = 1/2$, our algorithm finds a perfect solution for every $R > 0.75$.
This matches the condition achieved by the average-case analysis of \cite{chailloux2025opixsoftdecoders}.

More generally, for $R > 0.7149$, the satisfaction rate achieved by our worst-case algorithm matches that achieved by the average-case algorithm of \cite{chailloux2025opixsoftdecoders}.
This is because the first condition in \Cref{thm:main_intro}, namely \Cref{eq:theorem_cond_one_intro}, is identical to the corresponding condition in the average-case analysis of \cite{chailloux2025opixsoftdecoders}.
Although \Cref{thm:main_intro} also requires the additional condition \Cref{eq:theorem_cond_two_intro}, this condition is not the bottleneck when $s$ is sufficiently close to $1$; in this near-saturation regime, the bottleneck is instead \Cref{eq:theorem_cond_one_intro}.
This explains why our worst-case bound matches the previous average-case bound in this regime.  

Moreover, \Cref{thm:main_intro} gives
$R^{\rm alg}_{0}(1/2) < 0.6979$. 
Thus, when $\rho = 1/2$, our algorithm improves upon DQI for every $R > 0.6979$.

In \Cref{fig:saturation-bounds}, we compare our algorithmic saturation bound for various values of $\rho$ with those of DQI and the average-case algorithm of \cite{chailloux2025opixsoftdecoders}, based on the general bound in \Cref{thm:main}. 
Our algorithmic saturation bound improves upon that of DQI for
$0<\rho<0.6351$. In particular, our result gives the first worst-case quantum algorithm for OPI with a meaningful saturation bound for $\rho \le 1/2$.
Moreover, our algorithmic saturation bound matches the average-case bound of \cite{chailloux2025opixsoftdecoders} in the range
$0.1248<\rho<0.5200$.

\paragraph{Existential bound.}
While our main focus is on the algorithmic improvement, our analysis of the algorithm naturally extends to existential bounds, and this yields improvements over the bounds of \cite{SunWootters26} in certain parameter regimes.

Roughly speaking, our existential bound is similar to the algorithmic bound, except that it only requires \Cref{eq:theorem_cond_two_intro} and does not require \Cref{eq:theorem_cond_one_intro}. See \Cref{thm:existence} for the full statement, and see \Cref{rem:comparison_alg_existence} for a more detailed comparison with the algorithmic bound.

This gives a better bound than our algorithmic result when \Cref{eq:theorem_cond_one_intro} is the bottleneck. In particular, when $\rho= 1/2$, we obtain the improved saturation bound $R_1(1/2)<0.7158$, compared with the bound $R_1(1/2)<0.7495$ shown in \cite{SunWootters26}. More generally, our bound improves the achieved satisfaction rate $s$ for $0.7067<R<0.7494$. 
\ifnum\arxiv=1 See \Cref{fig:existential_balanced-case-comparison} for a comparison for the case $\rho= 1/2$.
\fi

For general $\rho$, our existential bound improves the saturation bounds of \cite{SunWootters26} and DQI~\cite{DQI} for $0<\rho<0.7251$. \ifnum\arxiv=1
See \Cref{fig:existential_saturation_bounds} for a comparison of the saturation bound for various values of $\rho$.
\fi 

\ifnum\arxiv=1 
\begin{figure}[!htbp]
    \centering
    \includegraphics[width=0.8\linewidth]{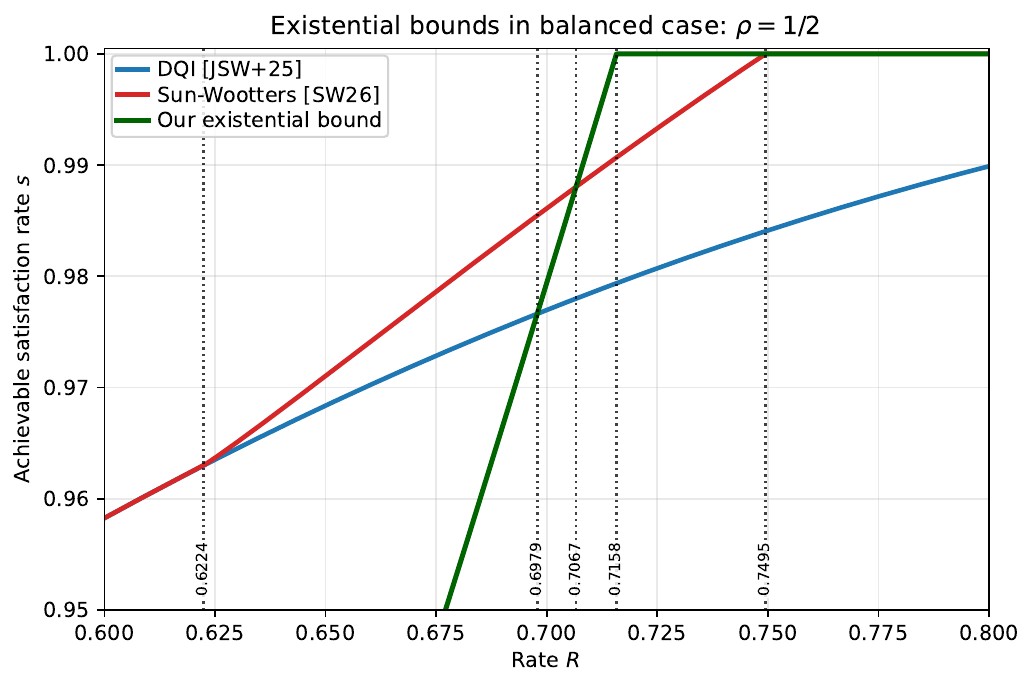}
    \caption{Comparison of existential satisfaction rates in the balanced case $\rho=1/2$.}
    \label{fig:existential_balanced-case-comparison}
\end{figure}
\begin{figure}[!htbp]
    \centering
    \includegraphics[width=0.8\linewidth]{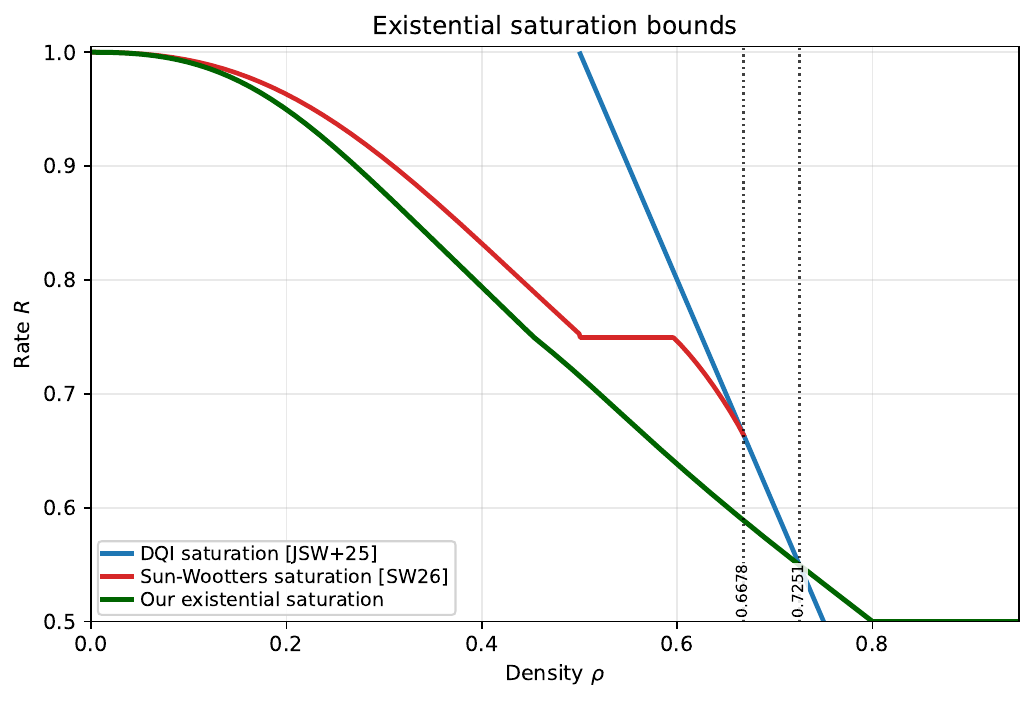}
    \caption{Comparison of existential saturation bounds as functions of $\rho$. The Sun--Wootters curve is based on \cite[Theorems 1.6 and 1.9]{SunWootters26}. While \cite[Theorem 1.9]{SunWootters26} gives a bound for general $\rho$, it has a bump for $\rho>1/2$. On the other hand, \cite[Theorem 1.6]{SunWootters26} gives a better bound in the balanced case $\rho\approx 1/2$. By a thinning argument similar to the one discussed in \Cref{sec:thinning}, one can increase the density $\rho$ without sacrificing the satisfaction rate. Thus, in the regime $\rho\ge 1/2$, whenever the bound from \cite[Theorem 1.9]{SunWootters26} is worse, we plot the thinned balanced-case bound, which gives $R\approx 0.7495$.
    }
    \label{fig:existential_saturation_bounds}
\end{figure}
\fi

\ifnum\arxiv=0
In \Cref{sec:comparison_existential}, we compare our existential bound with that of \cite{SunWootters26}. \Cref{fig:existential_balanced-case-comparison} shows the comparison for the case $\rho= 1/2$, and \Cref{fig:existential_saturation_bounds} gives a comparison of the saturation and improvement bounds for various values of $\rho$.
\fi

\paragraph{Generalization to MDS MaxLINSAT}
Our existential bound extends directly to the MaxLINSAT problem (Problem \ref{prob:MaxLINSAT}) with respect to any maximum distance separable (MDS) codes, 
which generalizes the OPI problem, similarly to the result of \cite{SunWootters26}. 

In contrast, our algorithmic bound does not extend to arbitrary MDS MaxLINSAT instances, since our algorithm relies on the list-decodability of Reed--Solomon codes~\cite{GS99,CHK2026}. 
Nevertheless, the algorithmic bound extends to MDS MaxLINSAT instances with respect to a code whose dual admits an efficient list-decoding algorithm.

\subsection{Technical Overview}
Our algorithm is very similar to those of \cite{STOC:ChaTil25,chailloux2025opixsoftdecoders}: we use Regev's reduction together with a decoder that decodes correctly with probability $1/\poly(n)$.  
The main challenge is to establish a worst-case correctness guarantee, whereas the previous works only proved an average-case guarantee.

For simplicity, we first focus on the case $s=1$. In this case, our algorithm can be viewed as a variant of the algorithm of Yamakawa and Zhandry \cite{JACM:YamZha24}. We therefore begin by reviewing their algorithm.

\paragraph{Review of Yamakawa--Zhandry algorithm.} 
Let $C$ be the Reed--Solomon code
\[
\RS_{\FF_q,\valpha,k-1}
\defeq
\{(f(\alpha_1),f(\alpha_2),\ldots,f(\alpha_n)) : f\in \FF_q[X]_{\deg\le k-1}\}.
\]
Then $\OPI(k,q,\valpha,\{S_i\}_{i\in[n]},1)$ is the problem of finding an element of $C\cap S$, where
$S=S_1\times S_2\times \cdots \times S_n.$ 
We assume that $|S_i|/q=\rho$ is a constant, with $\rho= 1/2$ being the main case of interest. 
Yamakawa and Zhandry \cite{JACM:YamZha24} gave the following quantum algorithm for this problem. 

Let $V$ and $W$ be the normalized indicator functions of $C$ and $S$, respectively. Starting from the state
\[
\sum_{\vx,\ve}V(\vx)W(\ve)\ket{\vx}\ket{\ve},
\]
the algorithm applies the quantum Fourier transform to both registers and then adds the first register to the second, obtaining
\[
\sum_{\vx,\ve}\hat V(\vx)\hat W(\ve)\ket{\vx}\ket{\vx+\ve}.
\]
Since $\hat V$ is the normalized indicator function of the dual code $C^\perp$, the first register is supported on $\vx\in C^\perp$. The algorithm then tries to recover $\vx$ from $\vx+\ve$ using a decoder for $C^\perp$ and subtracts the decoded value from the first register. If the decoder succeeds, the first register becomes $0$. If the decoder always succeeds, applying the inverse Fourier transform to the second register then yields a state proportional to
\[
\sum_{\vy\in\FF_q^n}V(\vy)W(\vy)\ket{\vy},
\]
whose support is contained in $C\cap S$.

Thus, the main issue is to bound the contribution of errors on which the decoder fails. In the formulation of \cite[Lemma 5.1]{JACM:YamZha24}, it suffices to bound the following two quantities:
\begin{align}
     &\sum_{\ve\in \bad}|\hat{W}(\ve)|^2, \label{eq:overview_cond_one}\\
     &q^{-(n-k)}\sum_{\vy \in \FF_q^n}\left|\sum_{\vx\in C^\perp}\one_{\bad}(\vy-\vx)\hat{W}(\vy-\vx)\right|^2. \label{eq:overview_cond_two}
\end{align}
Here, $\bad\subseteq \FF_q^n$ denotes a set of errors on which the decoder may fail:\footnote{In \cite[Lemma 5.1]{JACM:YamZha24}, $\bad$ is defined as a subset of pairs $(\vx,\ve)$. Here, we consider the special case where membership in $\bad$ depends only on $\ve$, which holds for most natural decoders, including the one considered in \cite{JACM:YamZha24}.
} for every $\ve\notin\bad$, we have $\mathsf{Decode}(\vx+\ve)=\vx$ for all $\vx\in C^\perp$.\footnote{Thus, $\bad$ may also contain errors that are in fact decoded correctly.} We denote by $\one_{\bad}$ the indicator function of $\bad$.

They bound these quantities with high probability over the random choices of $S_1,\ldots,S_n$ as follows.

Since $W$ is the product of the normalized indicator functions $W_i$ of $S_i$, we have
$\hat W(\ve)=\prod_{i=1}^n \hat W_i(e_i).$
Moreover,
$|\hat W_i(0)|^2=\frac{|S_i|}{q}=\rho,$ 
and over the random choice of $S_i$, the nonzero Fourier coefficients $\hat W_i(a)$ behave symmetrically over all $a\ne 0$.

Thus, on average, \Cref{eq:overview_cond_one} can be viewed as the probability that $\ve$ lies in $\bad$, where each coordinate is independently $0$ with probability $\rho$ and is otherwise uniformly distributed over $\FF_q\setminus\{0\}$. This probability is exponentially small when the decoder succeeds with high probability against such random errors.

The second quantity, \Cref{eq:overview_cond_two}, is controlled using the random phases of $\hat W$. These phases make the cross terms cancel, so that, with high probability,
\[
\sum_{\vy\in\FF_q^n}
\left|
\sum_{\vx\in C^\perp}
\one_{\bad}(\vy-\vx)\hat W(\vy-\vx)
\right|^2
\approx
\sum_{\vy\in\FF_q^n}
\sum_{\vx\in C^\perp}
\one_{\bad}(\vy-\vx)|\hat W(\vy-\vx)|^2.
\]
The RHS is
\[
|C^\perp|\sum_{\ve\in\bad}|\hat W(\ve)|^2.
\]
Since $|C^\perp|=q^{n-k}$, the prefactor $q^{-(n-k)}$ in \Cref{eq:overview_cond_two} cancels this factor. Hence \Cref{eq:overview_cond_two} reduces to \Cref{eq:overview_cond_one} in the average-case analysis.

\paragraph{Replacing $\bad$ with a Hamming tail via list-decoding.}
While the above analysis is relatively easy when $S_1,\ldots,S_n$ are chosen at random, the worst-case analysis seems much more difficult. Indeed, the work of \cite{JACM:YamZha24}, as well as follow-up works \cite{STOC:ChaTil25,chailloux2025opixsoftdecoders,SODA:BDGJLS26}, relies on the randomness in the choice of the subsets to analyze similar algorithms.

To analyze the quantities in \Cref{eq:overview_cond_one,eq:overview_cond_two} in the worst-case, we need to understand the structure of the set $\bad$. One possible approach is to let $\bad$ consist of all errors outside the unique decoding radius. However, this is too weak, as it only yields the same guarantee as DQI. We would therefore like to use a stronger decoder. For example, \cite{JACM:YamZha24} considered a decoder that first runs a list decoder and then outputs the decoded word if the resulting list contains a unique element. The difficulty with such a decoder, however, is that the corresponding set $\bad$ appears hard to analyze.

Thus, our idea, similar to \cite{STOC:ChaTil25,chailloux2025opixsoftdecoders}, is to 
use a decoder that runs list-decoder and then outputs a random element of the list. 
If we define $\bad$ to be the set of errors outside the list-decoding radius, then 
$
\bad=T_{rn}=\{\vx\in \FF_q^n: \wt(\vx)>rn\}$ 
where $r=1-\sqrt{1-R}-\Omega(1).$ 
Indeed, Reed--Solomon codes of rate $R'$ are known to be list-decodable from errors of Hamming weight at most
$(1-\sqrt{R'}-\Omega(1))n$~\cite{GS99,CHK2026}.\footnote{
Here we apply this fact to $C^\perp$, whose rate is $1-R$. Although $C^\perp$ is not necessarily a Reed--Solomon code, it is a generalized Reed--Solomon code, which has the same list-decoding radius as Reed--Solomon codes.}

The issue with setting $\bad=T_{rn}$ is that the decoder may no longer always correct errors outside $\bad$. Thus, we need to carefully handle the decoding step of the algorithm. 
Recall that we consider a decoder that runs a list-decoder and then outputs a uniformly random element from the list. This may not always result in correct decoding, but since the list size is $\poly(n)$ within the list-decoding radius, the decoder succeeds with probability at least $1/\poly(n)$ whenever $\ve\notin\bad$.

This is not enough to ensure that the algorithm generates a state that is negligibly close to the target state. However, it is sufficient to ensure that the generated state has inverse-polynomial overlap with the desired state. Since measuring the desired state yields a correct solution with probability $1$, measuring such a state yields a correct solution with probability $1/\poly(n)$, which is sufficient for our purpose.

Although this argument may sound simple, the actual proof requires some care. For example, our analysis crucially relies on the specific choice of decoder: it selects a uniformly random element from the decoded list. This is significantly different from the average-case analyses in \cite{STOC:ChaTil25,chailloux2025opixsoftdecoders}, where any decoder that correctly decodes with probability $1/\poly(n)$ suffices.

In addition, to make the argument go through, we need an additional condition that controls the norm of the target state. This condition, which corresponds to \Cref{eq:cond_three} in \Cref{thm:Regev_like_new}, does not appear in the average-case analysis. Fortunately, this additional condition is easier to satisfy than the other two conditions, so we ignore it in the rest of this overview. 
See \Cref{thm:Regev_like_new} and its proof for details. 

Thus, the problem is now to bound \Cref{eq:overview_cond_one,eq:overview_cond_two} for $\bad=T_{rn}$ in the worst case.

When $\bad=T_{rn}$, bounding \Cref{eq:overview_cond_one} in the worst case is straightforward, assuming only the sizes of the sets $S_i$. As discussed above, if $W_i$ is the normalized indicator function of $S_i$, then $|\hat{W}_i(0)|^2=\frac{|S_i|}{q}=\rho.$  
Thus, the total squared amplitude of $\hat{W}_i$ on nonzero frequencies is $1-\rho$. Therefore, \Cref{eq:overview_cond_one} can be interpreted as the upper-tail probability that a sum of independent Bernoulli random variables with mean $(1-\rho)n$ exceeds $rn$. In particular, if
$r>1-\rho$ by a constant margin, then the Chernoff bound implies that \Cref{eq:overview_cond_one} is exponentially small for $\bad=T_{rn}$. 

On the other hand, bounding \Cref{eq:overview_cond_two} remains highly nontrivial even for $\bad=T_{rn}$. This is the main technical challenge that we solve in this work. We explain how to do this below.

\paragraph{Decomposition into shift correlations.}
The first step is to express the value of \Cref{eq:overview_cond_two} in the following form:
\[
    \sum_{\vu\in C^\perp}\Gamma(\vu),
\]
where
\[
\Gamma(\vu)=\sum_{\ve\in \FF_q^n}
\one_{T_{rn}}(\ve)\one_{T_{rn}}(\ve+\vu)\hat W(\ve)\overline{\hat W(\ve+\vu)}.
\]
This follows from a straightforward calculation. See \Cref{lem:split_to_sum_of_gamma}.

Moreover, the value of \Cref{eq:overview_cond_one} is equal to $\Gamma(0)$. Since we already know that this is negligible, it suffices to show that
\begin{align*} \sum_{\vu\in C^\perp\setminus\{0\}}\Gamma(\vu) \end{align*}
is negligible.

\paragraph{Bounding $\Gamma$ via generating polynomials.}
Next, we bound $\Gamma(\vu)$ by introducing a bivariate generating polynomial whose coefficients keep track of the two weights $\wt(\ve)$ and $\wt(\ve+\vu)$.

For $i\in [n]$ and $b\in \FF_q$, define the polynomial
\[
  K_i^{(b)}(X,Y)=\sum_{a\in \FF_q}
  \hat{W}_i(a)\overline{\hat{W}_i(a+b)}
  X^{\one_{a\ne 0}}Y^{\one_{a+b\ne 0}}.
\]
Here, $\one_{a\ne 0}$ is $1$ if $a\ne 0$ and $0$ otherwise, and $\one_{a+b\ne 0}$ is defined similarly.

Then it is easy to see that
\[
   \Gamma(\vu)=\sum_{a,b> rn}[X^aY^b]\prod_{i=1}^{n}K_i^{(u_i)}(X,Y),
\]
where $[X^aY^b]F(X,Y)$ denotes the coefficient of $X^aY^b$ in $F(X,Y)$. See the proof of \Cref{lem:upper_bound_gamma_by_K}.

By straightforward calculations, we have
\[
K_i^{(0)}(X,Y)=\rho+(1-\rho)XY,
\]
and, for $b\ne 0$,
\[
K_i^{(b)}(X,Y)=\mu_i(b)(\rho X+\rho Y+(1-2\rho)XY),
\]
where
\[
\mu_i(b)=\frac{1}{|S_i|}\sum_{x\in S_i}\omega_p^{-\Tr(bx)}.
\]
See the proofs of \Cref{lem:K_zero,lem:K_non_zero} with $\tau=1$.

For a polynomial $F(X,Y)\in \mathbb{C}[X,Y]$, let $\|F(X,Y)\|_1$ denote its $\ell_1$ norm, namely the sum of the absolute values of all coefficients. Then
\[
\|K_i^{(0)}(X,Y)\|_1=1
\]
and, for all $b\in \FF_q\setminus\{0\}$,
\[
\|K_i^{(b)}(X,Y)\|_1=
\begin{cases}
|\mu_i(b)| & \rho\le 1/2,\\
(4\rho-1)|\mu_i(b)| & \rho>1/2.
\end{cases}
\]
Below, we assume $\rho\le 1/2$ for simplicity, in which case $\|K_i^{(b)}(X,Y)\|_1=|\mu_i(b)|$ for every $b\ne 0$.\footnote{The case $\rho>1/2$ can be analyzed similarly, with an additional multiplicative loss of $(4\rho-1)^{\wt(\vu)}$.} 

Combining this with the expression above for $\Gamma(\vu)$, we obtain
\[
|\Gamma(\vu)|
\le
\prod_{i=1}^{n}\|K_i^{(u_i)}(X,Y)\|_1
=
\prod_{i:u_i\ne 0}|\mu_i(u_i)|,
\]
where $u_i$ denotes the $i$th coordinate of $\vu$. Therefore, our task reduces to showing that
\begin{align}
\sum_{\vu\in C^\perp\setminus \{0\}}\prod_{i:u_i\ne 0}|\mu_i(u_i)| \label{eq:sum_prod_mu_overview}
\end{align}
is negligible.

\paragraph{Applying the Brascamp--Lieb inequality.}
The crux of our proof is to upper-bound \Cref{eq:sum_prod_mu_overview} using a variant of the Brascamp--Lieb inequality~\cite{BL76} adapted to MDS codes. 

Following the entropy-subadditivity approach to Brascamp--Lieb-type inequalities developed by  \cite{CarlenCordero2009}, we prove the following inequality using the fact that $C^\perp$ is an MDS code of rate $1-R$; see \Cref{thm:MDS_BL}. For any nonnegative functions $h_i:\FF_q\rightarrow \mathbb{R}_{\ge 0}$ for $i\in[n]$, it holds that
\[
\sum_{\vu\in C^\perp}\prod_{i\in[n]} h_i(u_i)
\le
\prod_{i\in[n]}
\left(
\sum_{a\in\mathbb F_q}h_i(a)^Q
\right)^{1/Q},
\]
where
$Q=\frac{1}{1-R}.$

The first naive attempt to use this inequality for bounding \Cref{eq:sum_prod_mu_overview} is to set $h_i(0)=1$ and $h_i(a)=|\mu_i(a)|$ for $a\ne 0$. Unfortunately, this is not strong enough. The reason is that this direct application does not sufficiently exploit the fact that every nonzero element of $C^\perp$ has Hamming weight at least $k+1$. To make use of this information, we introduce a tilting parameter $t\ge 1$ and define $h_i^{(t)}$ as follows:
\[
h_i^{(t)}(a)=
\begin{cases}
1 & \text{if } a=0,\\[1.2ex]
t|\mu_i(a)| & \text{if } a\ne 0.
\end{cases}
\]
Since $\wt(\vu)\ge k+1>k$ for all $\vu\in C^\perp\setminus\{0\}$, we have
\[
\prod_{i:u_i\ne 0}|\mu_i(u_i)|
\le
t^{-k}\prod_{i\in[n]}h_i^{(t)}(u_i).
\]
Therefore,
\[
\sum_{\vu\in C^\perp\setminus \{0\}}\prod_{i:u_i\ne 0}|\mu_i(u_i)|
\le
t^{-k}\sum_{\vu\in C^\perp}\prod_{i\in[n]}h_i^{(t)}(u_i),
\]
where we included the term $\vu=0$ in the sum, which only makes the quantity larger.

Applying the above Brascamp--Lieb-type inequality to $h_i^{(t)}$, we obtain
\[
t^{-k}\sum_{\vu\in C^\perp}\prod_{i\in[n]} h_i^{(t)}(u_i)
\le
t^{-k}
\prod_{i\in[n]}
\left(
\sum_{a\in\mathbb F_q}h_i^{(t)}(a)^Q
\right)^{1/Q}.
\]
By the definition of $h_i^{(t)}$, the RHS is equal to
\[
t^{-k}
\prod_{i\in[n]}
\left(
1+t^Q M_i
\right)^{1/Q},
\]
where
\[
M_i=\sum_{a\in\mathbb F_q\setminus\{0\}}|\mu_i(a)|^Q.
\]
Taking 
$M\ge \max_{i\in[n]} M_i,$ 
we get
\[
\sum_{\vu\in C^\perp\setminus \{0\}}\prod_{i:u_i\ne 0}|\mu_i(u_i)|
\le
t^{-k}(1+t^Q M)^{n/Q}.
\]

We now optimize over $t$. Setting 
$t^Q=\frac{R}{M(1-R)}$ 
gives
\[
\frac{1}{n}
\log
\left(
t^{-k}(1+t^Q M)^{n/Q}
\right)
=
\frac{1}{Q}\left(H(R)+R\log M\right),
\]
where $H(\cdot)$ is the binary entropy function. Thus, it suffices to show that $M$ is small enough so that
\[
H(R)+R\log M<0
\]
with a constant margin.

\paragraph{Bounding $M$ via Fourier bias bounds.}
It remains to give a good upper bound $M$ of $M_i$. Recall that $M_i$ is the sum of the $Q$th powers of the absolute values of the Fourier biases $\mu_i$. Note that $Q\ge 2$ when $R\ge 1/2$, which is the regime considered in this work. 

First, bounding $M_i$ is easy when $Q=2$. By a straightforward calculation, we have
\[
\sum_{a\in\mathbb F_q\setminus\{0\}}|\mu_i(a)|^2=\frac{1-\rho}{\rho}.
\]
See \Cref{lem:Fourier_quadratic_sum} where we note 
$|\mu_i(a)|=\frac{1}{\rho\sqrt{q}}|\hat{\one}_{S_i}(a)|$. 

Moreover, as shown in \cite{SunWootters26},\footnote{Strictly speaking, they prove this for prime fields, but the argument can be generalized to extension fields.} for any $a\in \FF_q\setminus\{0\}$, we have
\[
|\mu_i(a)|\le \frac{\sin(\rho \pi)}{\rho \pi}+O(p^{-2}),
\]
where $p$ is the characteristic of $\FF_q$. See \Cref{lem:Fourier_max}.

Combining the above, we obtain
\[
\begin{aligned}
M_i
&=\sum_{a\in\mathbb F_q\setminus\{0\}}|\mu_i(a)|^Q\\
&\le
\sum_{a\in\mathbb F_q\setminus\{0\}}|\mu_i(a)|^2\left(\max_{a\in\mathbb F_q\setminus\{0\}}|\mu_i(a)|\right)^{Q-2}
\\
&\le
\frac{1-\rho}{\rho}
\left(\frac{\sin(\rho \pi)}{\rho \pi}+O(p^{-2})\right)^{Q-2}.
\end{aligned}
\]
This gives a worst-case upper bound on $M_i$ depending only on $\rho$ and $R$, ignoring the subconstant $O(p^{-2})$ term. 

In the actual proof, we use a slightly sharper bound on $M_i$ by also controlling the fourth moment \[ \sum_{a\in\FF_q\setminus\{0\}}|\mu_i(a)|^4. \] This fourth-moment estimate is obtained through an additive-combinatorial bound on the additive energy of $S_i$. Interpolating this fourth-moment bound with the Fourier bias bound above yields the final expression for $M$ used in the theorem. We omit these details in this overview and refer the reader to \Cref{sec:Fourier}.

Substituting the bound on $M$ into the condition
\[
H(R)+R\log M<0
\]
yields the parameter regime in which the algorithm is correct in the worst case.

This completes the overview of the proof of worst-case correctness of the algorithm for the case $s=1$.

\paragraph{Generalizing to the case of $s<1$.}
Here, we briefly discuss how to generalize the above algorithm and analysis to the case of $s<1$. For this, we rely on the idea of \cite{chailloux2025opixsoftdecoders}. The idea is simple. Let $\tau<1$ be a constant slightly larger than $s$. We modify the function $W$ as $W(\ve)=\prod_{i\in[n]}W_i(e_i)$, where
\[
W_i(e_i)=
\begin{cases}
\sqrt{\dfrac{\tau}{|S_i|}} & \text{if } e_i\in S_i,\\[1.2ex]
\sqrt{\dfrac{1-\tau}{q-|S_i|}} & \text{if } e_i\notin S_i.
\end{cases}
\]

By repeating a similar analysis to the case of $s=1$, we can show that a quantum polynomial-time algorithm prepares a state that has inverse-polynomial overlap with the normalized variant of
\[
\sum_{\vy\in\FF_q^n}V(\vy)W(\vy)\ket{\vy}.
\]

Intuitively, measuring this state in the standard basis should yield a solution to the OPI problem with satisfaction rate at least $s$. Indeed, $V$ restricts the state to codewords in $C$, while each $|W_i|^2$ places total mass $\tau>s$ on $S_i$. Thus, the resulting distribution biases codewords toward those that satisfy many of the constraints.

Making this intuition rigorous is somewhat delicate: one has to show that, after restricting to $C$, the mass of codewords with satisfaction rate below $s$ is negligible. Roughly, we prove this by comparing the above weighted state with the analogous state defined using the parameter $s$. This completes the extension from the case $s=1$ to the case of general $s<1$.

\subsection{Related Work}
\paragraph{Regev's reduction and its variants.}
The general idea of performing decoding in the Fourier domain first appeared in the work of Regev~\cite{JACM:Regev09}, who showed a quantum reduction from the shortest integer solution (SIS) problem to the learning with errors (LWE) problem.
Stehl\'{e}, Steinfeld, Tanaka, and Xagawa~\cite{AC:SSTX09}, and Lyubashevsky, Peikert, and Regev~\cite{JACM:LyuPeiReg13}, extended Regev's reduction to the setting of ideal lattices.
These works form part of the foundation of lattice-based cryptography; see, e.g., \cite{EPRINT:Peikert15} for a survey on the subject.

Brakerski, Kirshanova, Stehl\'{e}, and Wen~\cite{PKC:BKSW18} used Regev's reduction to show an equivalence between the LWE problem and a problem called the extrapolated dihedral coset problem.

Chen, Liu, and Zhandry~\cite{EC:CheLiuZha22} applied Regev's idea to design a quantum algorithm for solving the SIS problem under the $\ell_\infty$ norm, in a parameter regime where no efficient classical algorithm was known.
However, Kothari, O'Donnell, and Wu~\cite{kothari2026exponentialquantumspeedupmathrmsisinfty}, building on earlier works~\cite{ISS12,II24}, showed that there is a classical algorithm for the SIS problem under the $\ell_\infty$ norm with better parameters than the quantum algorithm of~\cite{EC:CheLiuZha22}.
On the other hand, the authors of~\cite{kothari2026exponentialquantumspeedupmathrmsisinfty} argue that their dequantization result does not seem to extend to the OPI problem.

Yamakawa and Zhandry~\cite{JACM:YamZha24} used Regev's reduction in the context of query complexity, giving the first example of an efficiently verifiable problem that admits a super-polynomial quantum speedup relative to a random oracle.

Briaud, Dinur, Ghosal, Jain, Lou, and Sahai~\cite{SODA:BDGJLS26} adapted the result of~\cite{JACM:YamZha24} to the problem of finding solutions to systems of multivariate equations, giving a quantum algorithm for such problems that outperforms the best known classical algorithms.

Jordan, Shutty, Wootters, Zalcman, Schmidhuber, King, Isakov, Khattar, and Babbush~\cite{DQI}, building on Regev's idea, gave a quantum algorithm called decoded quantum interferometry (DQI), which solves the MaxLINSAT problem in a certain parameter regime.
We remark that their algorithm also works for the MaxXORSAT problem, which is the special case of the MaxLINSAT problem over $\mathbb{F}_2$, whereas our results, similarly to~\cite{SunWootters26}, only apply over fields with super-constant characteristic.
Improving upon DQI for MaxLINSAT, even existentially, is therefore left as an open problem.

Chailloux and Tillich~\cite{TQC:ChaTil24} applied Regev's reduction in the context of linear codes, giving a quantum algorithm for finding a short codeword.

Chailloux and Tillich~\cite{STOC:ChaTil25} extended Regev's reduction using decoders that succeed in decoding only with probability $1/\poly(n)$.
They used this idea to improve the parameter regime for solving the OPI problem over DQI, but their analysis works only in the average-case setting over the random choice of the constraint subsets.
Chailloux and Hermouet~\cite{C:ChaHer26} further improved the result of~\cite{STOC:ChaTil25} to show an equivalence between certain variants of the SIS and LWE problems.
Chailloux~\cite{chailloux2025opixsoftdecoders} adapted the result of~\cite{C:ChaHer26} to the OPI setting and further improved the parameter regimes in which a quantum algorithm can solve OPI.
However, this result also provides only an average-case analysis.

\paragraph{Comparison with Sun--Wootters.}
The work of Sun and Wootters~\cite{SunWootters26} is closely related to ours.
Indeed, they were the first to observe that no known quantum algorithm for OPI improves upon DQI in the worst case. Their results are purely existential, whereas our main contribution is algorithmic. Thus, even in parameter regimes where their existential bound is stronger, it does not directly yield a quantum algorithm.

At a high level, the techniques are different: their approach is closer to the analysis of DQI~\cite{DQI}, whereas ours is closer to the algorithms of~\cite{JACM:YamZha24,chailloux2025opixsoftdecoders}. 
A common feature is that both works need to control Fourier sums over the dual code, although the exact quantities are different. The quantity considered by Sun and Wootters appears in \cite[Equation (1.19)]{SunWootters26}, whereas our corresponding quantities appear in \Cref{eq:sum_prod_mu_overview} in the overview and in \Cref{lem:Z_is_negl} in the general case. 

Sun and Wootters also observe that a similar quantity appears in the seemingly unrelated context of leakage-resilient secret sharing~\cite{JC:BDIR21,C:MPSW21,ISIT:MNPW22}.
However, later works~\cite{C:KleKom23,C:Nguyen24,TCC:Kasser24} improved the results on leakage-resilient secret sharing without bounding the above quantity.
Thus, improving this bound does not necessarily lead to improvements in the context of leakage-resilient secret sharing. It remains an open question whether our techniques can improve the state of the art in leakage-resilient secret sharing.

\paragraph{Brascamp--Lieb inequality in coding theory.}
A recent work of Brakensiek, Chen, Dhar, and Zhang~\cite{brakensiek2025combinatorialboundslistrecovery} relies on a variant of the Brascamp--Lieb inequality to prove list-recoverability bounds for various families of (folded) linear codes. While their inequality is related to the entropic subadditivity underlying our MDS Brascamp--Lieb inequality, our application is quite different: they use the entropic bound directly to control the size of recovered lists, whereas we use the Gibbs variational principle to derive an inequality for sums of products over codewords, and then apply it to bound Fourier-bias sums over the dual code.

\section{Preliminaries}

\paragraph{Basic notations.}
For a set $X$, $|X|$ is the cardinality of $X$.
For a non-empty finite set $X$, 
we denote by $x\gets X$ to mean that $x$ is uniformly taken from $X$.
For a distribution $D$ over a set $X$, we denote by $x\gets D$ to mean that $x\in X$ is taken according to the distribution $D$. 

For $x\in \mathbb{C}$, $\overline{x}$ means its complex conjugate. 


A function $p:\mathbb{N}\rightarrow \mathbb{R}_{\ge 0}$ is said to be polynomially bounded if there exists a constant $c>0$ such that $p(n)\le n^c$ for all sufficiently large $n$. 
We often write $\poly$ to denote an unspecified function that is polynomially bounded.  
A function $\mu:\mathbb{N}\rightarrow \mathbb{R}_{\ge 0}$ is said to be negligible if for every constant $c>0$, $\mu(n)<n^{-c}$ for all sufficiently large $n$.
We often write $\negl$ to denote an unspecified negligible function. 

For a positive integer $n$, $[n]$ means the set $\{1,...,n\}$.
For a random variable $X$, $\Ex[X]$ denotes its expected value.
For a quantum or classical algorithm $\A$, we denote $y\gets \A(x)$ to mean that $\A$ outputs $y$ on input $x$. 
For a finite set $S\subseteq X$, $\one_{S}:X\rightarrow \{0,1\}$ denotes the indicator function of $S$, i.e., 
\[
\one_S(x)= 
\begin{cases}
1& x\in S\\
0& \text{otherwise}
\end{cases}.
\]

Throughout, $\log$ denotes the logarithm to base $2$. 

A probability distribution on a finite set $X$ is a function $P:X\rightarrow \mathbb{R}_{\ge 0}$ such that $\sum_{x\in X}P(x)=1$. 
For a probability distribution $P$ on a finite set $X$, let
$H(P)=-\sum_{x\in X}P(x)\log P(x)$ denote its Shannon entropy, where
$0\log 0$ is interpreted as $0$. 
For $p\in[0,1]$, we also write
$H(p)=-p\log p-(1-p)\log(1-p)$
for the binary entropy function, i.e., the entropy of the Bernoulli
distribution with parameter $p$. The meaning of $H$ will be clear from the context.


\paragraph{Notations for quantum states.}
For a not necessarily normalized state $\ket{\psi}$, we denote by $\|\ket{\psi}\|$ to mean its Euclidean norm. 
For quantum states $\rho$ and $\sigma$, we denote by $\TD(\rho,\sigma)$ to mean the trace distance between them, and
$F(\rho,\sigma)$ to mean the (root) fidelity between them, i.e., $F(\rho,\sigma)=\Tr\sqrt{\sqrt{\rho}\sigma\sqrt{\rho}}$.

The following inequality is known as the Fuchs–van de Graaf inequality.
\begin{align}
\TD(\rho,\sigma)\le \sqrt{1-F(\rho,\sigma)^2}. \label{eq:Fuchs–van_de_Graaf}
\end{align}

\subsection{Finite Fields}\label{sec:finite_field}
We review basic notations and facts on finite fields. The following exposition is taken from  \cite{JACM:YamZha24}. 

For a prime power $q=p^e$, $\FF_q$ denotes a field of order $q$. 
We denote by $0$ to mean $(0,...,0)\in \FF_q^n$ where $n$ will be clear from the context. 

For $\vx\in \FF_q^n$ and $i\in[n]$, we use $x_i$ to mean the $i$th coordinate of $x$.  
For $\vx\in \FF_q^n$ and $\vy\in \FF_q^n$, we define $\vx\cdot \vy\defeq \sum_{i=1}^{n}x_iy_i$.


The trace function $\Tr:\FF_q\rightarrow \FF_p$ is defined by
\begin{align*}
    \Tr(x)\defeq \sum_{i=0}^{e-1}x^{p^{i}}. 
\end{align*}
The trace function is $\FF_p$-linear, i.e., for any $a,b\in \FF_p$ and $x,y\in \FF_q$, we have 
\begin{align*}
    \Tr(ax+by)=a\Tr(x)+b\Tr(y).
\end{align*}
We let $\omega_p\defeq e^{2\pi i/p}$. 
For any $\vx\in \FF_q^n \setminus \{\vzero\}$, we have 
\begin{align}\label{eq:sum_is_zero}
\sum_{\vy\in \FF_q^n}\omega_p^{\Tr(\vx\cdot \vy)}=0.     
\end{align}

For $\vx\in \FF_q^n$, we denote by $\wt(\vx)$ to mean the Hamming weight of $\vx$, i.e., 
$\wt(\vx)\defeq |\{i\in [n]: x_i \neq 0\}|$.
For $\vx\in \FF_q^n$ and a subset $S\subseteq [n]$, we denote by $\vx_S$ to mean $(x_i)_{i\in S}$. 
For an integer $0\le w \le n$, we define subsets $B_{w}\subseteq \FF_q^n$ and $T_{w}\subseteq \FF_q^n$ as follows:\footnote{While $B_w$ and $T_w$ also depend on $q$ and $n$, we omit this dependence from the notation as this will be clear from the context.}
\begin{align*}
    &B_w=\{\vx\in \FF_q^n: \wt(\vx)\le w\},
    &T_w=\{\vx\in \FF_q^n: \wt(\vx)> w\}
\end{align*}
Note that $B_w$ and $T_w$ are disjoint and that $B_w\cup T_w=\FF_q^n$. 
\subsection{Quantum Fourier Transform over Finite Fields}
We review known facts on quantum Fourier transform over finite fields. The following exposition is taken from  \cite{JACM:YamZha24}. 

On a quantum system over a finite field $\FF_q$ for a prime power $q=p^e$, a quantum Fourier transform is a unitary denoted by $\QFT_{\FF_q}$ such that for any $x\in \FF_q$, 
\begin{align*}
    \QFT_{\FF_q}\ket{x} = \frac{1}{\sqrt{q}}\sum_{z\in \FF_q}\omega_p^{\Tr(x \cdot z)}\ket{z}. 
\end{align*} 
A quantum Fourier transform over $\FF_q$ can be approximated to within error $\epsilon$ in time polynomial in $\log q$ and $\log 1/\epsilon$~\cite{BCW02,DHI06}. For ease of exposition, we ignore the approximation error in the rest of the paper since it can be made exponentially small by a polynomial-size quantum circuit.  

For any positive integer $n$ and $\vx\in \FF_q^n$, 
by the linearity of the QFT, we have 
\begin{align*}
    \QFT_{\FF_q}^{\otimes n}\ket{\vx} = \frac{1}{q^{n/2}}\sum_{\vz\in \FF_q^n}\omega_p^{\Tr(\vx \cdot \vz)}\ket{\vz}
\end{align*} 
by the definition of $\QFT_{\FF_q}$ and linearity of $\Tr$. 

For a function $f:\FF_q^n\ra \mathbb{C}$, we define 
\begin{align*}
    \hat{f}(\vz)\defeq \frac{1}{q^{n/2}}\sum_{\vx\in \FF_q^n}f(\vx)\omega_p^{\Tr(\vx \cdot \vz)}. 
\end{align*}
Then it is easy to see that we have 
\begin{align*}
    \QFT_{\FF_q}^{\otimes n}\sum_{\vx\in \FF_q^n} f(\vx)\ket{\vx} = \sum_{\vz\in \FF_q^n} \hat{f}(\vz)\ket{\vz}.
\end{align*}
For functions $f:\FF_q^n\ra \mathbb{C}$ and $g:\FF_q^n\ra \mathbb{C}$, $f\cdot g$ and $f\ast g$ denote the point-wise product and convolution of $f$ and $g$, respectively, i.e.,
\begin{align*}
&(f\cdot g)(\vx)\defeq f(\vx)\cdot g(\vx)\\
    &(f\ast g)(\vx)\defeq \sum_{\vy\in \FF_q^n}f(\vy)\cdot g(\vx-\vy).
\end{align*}

The following lemmas are standard. See e.g., \cite{JACM:YamZha24} for the proofs.

\begin{lemma}[Parseval's identity]\label{lem:Parseval}
For any $f:\FF_q^n\ra \mathbb{C}$
and $g:\FF_q^n\ra \mathbb{C}$, we have 
\begin{align*} 
    \sum_{\vx\in \FF_q^n}f(\vx)\overline{g(\vx)}=\sum_{\vz\in \FF_q^n}\hat{f}(\vz)\overline{\hat{g}(\vz)}.
\end{align*}
In particular, we have 
\begin{align*} 
    \sum_{\vx\in \FF_q^n}|f(\vx)|^2=\sum_{\vz\in \FF_q^n}|\hat{f}(\vz)|^2.
\end{align*}
\end{lemma}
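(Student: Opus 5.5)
The plan is to expand both sides using the definition of the Fourier transform and then apply the orthogonality relation \Cref{eq:sum_is_zero}. Writing $\hat f(\vz)=q^{-n/2}\sum_{\vx}f(\vx)\omega_p^{\Tr(\vx\cdot\vz)}$ and $\overline{\hat g(\vz)}=q^{-n/2}\sum_{\vy}\overline{g(\vy)}\omega_p^{-\Tr(\vy\cdot\vz)}$, we get
\[
\sum_{\vz}\hat f(\vz)\overline{\hat g(\vz)}=\frac{1}{q^{n}}\sum_{\vx,\vy\in\FF_q^n}f(\vx)\overline{g(\vy)}\sum_{\vz\in\FF_q^n}\omega_p^{\Tr((\vx-\vy)\cdot\vz)}.
\]
Here we use that $\Tr$ is $\FF_p$-linear (in particular additive) and takes values in $\FF_p$, so that $\omega_p^{\Tr(\vx\cdot\vz)}\omega_p^{-\Tr(\vy\cdot\vz)}=\omega_p^{\Tr((\vx-\vy)\cdot\vz)}$, with the exponent read as an element of $\FF_p\cong\mathbb{Z}/p\mathbb{Z}$ so that $\omega_p^{(\cdot)}$ is well defined. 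The complex conjugate of $\omega_p^{a}$ is $\omega_p^{-a}$, which justifies the form of $\overline{\hat g}$.

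The inner sum over $\vz$ equals $q^n$ when $\vx=\vy$. When $\vx\ne\vy$ it vanishes by \Cref{eq:sum_is_zero}, applied to the nonzero vector $\vx-\vy$. Hence only the diagonal terms $\vx=\vy$ survive, the factor $q^{-n}$ cancels the $q^n$, and the expression collapses to $\sum_{\vx}f(\vx)\overline{g(\vx)}$, as claimed. The second identity follows by taking $g=f$.

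All sums are finite, so interchanging the order of summation needs no justification. There is no real obstacle: the only point needing care is that $\omega_p^{\Tr(\cdot)}$ is a character of the additive group of $\FF_q^n$, which is exactly what the linearity of the trace provides.
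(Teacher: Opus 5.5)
Your proof is correct: expanding both Fourier transforms, combining the phases via additivity of $\Tr$, and collapsing the $\vz$-sum with the orthogonality relation \Cref{eq:sum_is_zero} is the standard argument. The paper does not prove this lemma itself; it treats it as standard and defers to \cite{JACM:YamZha24}, which uses essentially the same computation.
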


\begin{lemma}\label{lem:QFT_prod}
Suppose that we have $f_i:\FF_q \rightarrow \mathbb{C}$ for $i\in [n]$ and $f:\FF_q^n\ra \mathbb{C}$ is defined by
\begin{align}\label{eq:f_product}
    f(\vx)\defeq \prod_{i\in [n]}f_i(\vx_i)
\end{align}
where $\vx=(\vx_1,\vx_2,...,\vx_n)$. 
Then, we have 
\begin{align*}
    \hat{f}(\vz)=\prod_{i\in [n]}\hat{f}_i(\vz_i)
\end{align*}
where $\vz=(\vz_1,\vz_2,...,\vz_n)$. 
\end{lemma}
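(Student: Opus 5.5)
The plan is to expand both sides using the definition of the Fourier transform and then exploit the fact that the exponential factorizes across coordinates. For $f(\vx)=\prod_{i\in[n]}f_i(x_i)$, we write
\[
\hat f(\vz)=\frac{1}{q^{n/2}}\sum_{\vx\in\FF_q^n}\prod_{i\in[n]}f_i(x_i)\,\omega_p^{\Tr(\vx\cdot\vz)} .
\]
The first step is to use the $\FF_p$-linearity of $\Tr$ from \Cref{sec:finite_field}, together with additivity, to get $\Tr(\vx\cdot\vz)=\sum_{i\in[n]}\Tr(x_iz_i)$, where each term lies in $\FF_p$. Since $\omega_p^{a}$ depends only on $a\bmod p$, this gives
\[
\omega_p^{\Tr(\vx\cdot\vz)}=\prod_{i\in[n]}\omega_p^{\Tr(x_iz_i)} .
\]

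The second step is to split the normalization as $q^{-n/2}=\prod_{i\in[n]}q^{-1/2}$. The summand is then a product of $n$ factors, and the $i$th factor depends only on $x_i$. Summing over $\vx\in\FF_q^n=\FF_q\times\cdots\times\FF_q$ therefore factorizes by distributivity, since a sum over a product set of a product of single-coordinate functions equals the product of the individual sums:
\[
\hat f(\vz)=\prod_{i\in[n]}\Bigl(\frac{1}{\sqrt q}\sum_{x_i\in\FF_q}f_i(x_i)\,\omega_p^{\Tr(x_iz_i)}\Bigr)=\prod_{i\in[n]}\hat f_i(z_i),
\]
where in the last equality we recognize the one-dimensional Fourier transform (the case $n=1$ of the definition).

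There is no real obstacle here. The only point needing care is the first step: the exponent $\Tr(\vx\cdot\vz)$ is an element of $\FF_p$ rather than an integer, so $\omega_p^{(\cdot)}$ must be read as well defined on $\FF_p$ via integer representatives. Once that is noted, turning the exponential of a sum into a product of exponentials is legitimate.
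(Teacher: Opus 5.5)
Your proof is correct: additivity of the trace gives $\omega_p^{\mathrm{Tr}(\vx\cdot\vz)}=\prod_{i}\omega_p^{\mathrm{Tr}(x_iz_i)}$, and the sum over the product set $\FF_q^n$ then factorizes coordinatewise. The paper gives no proof of its own; it calls the lemma standard and refers to Yamakawa--Zhandry, and your direct computation is exactly that standard argument.
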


\begin{lemma}[Convolution theorem]\label{lem:convolution}
For functions $f:\FF_q^n\ra \mathbb{C}$ and $g:\FF_q^n\ra \mathbb{C}$, the following  holds.
\begin{align}\label{eq:conv_one}
    \widehat{f\cdot g} = \frac{1}{q^{n/2}}(\hat{f} \ast \hat{g}). 
\end{align}
\end{lemma}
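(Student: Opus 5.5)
We prove the convolution identity $\widehat{f\cdot g}=q^{-n/2}(\hat f\ast\hat g)$ by expanding both sides from the definition of the Fourier transform and matching them. The cleanest route is to start from the right-hand side, write each of $\hat f$ and $\hat g$ as a sum over $\FF_q^n$, and collapse one of the resulting sums using the orthogonality relation \Cref{eq:sum_is_zero}.

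Fix $\vz\in\FF_q^n$. By definition,
\[
(\hat f\ast\hat g)(\vz)=\sum_{\vy\in\FF_q^n}\hat f(\vy)\hat g(\vz-\vy)
=\frac{1}{q^{n}}\sum_{\vy}\sum_{\vx,\vx'}f(\vx)g(\vx')\,\omega_p^{\Tr(\vx\cdot\vy)}\omega_p^{\Tr(\vx'\cdot(\vz-\vy))}.
\]
Using the $\FF_p$-linearity of $\Tr$ and bilinearity of the dot product, the exponent becomes $\Tr(\vx'\cdot\vz)+\Tr((\vx-\vx')\cdot\vy)$. We exchange the order of summation so that the sum over $\vy$ is innermost. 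By \Cref{eq:sum_is_zero}, $\sum_{\vy}\omega_p^{\Tr((\vx-\vx')\cdot\vy)}$ equals $q^n$ if $\vx=\vx'$ and $0$ otherwise. This leaves
\[
(\hat f\ast\hat g)(\vz)=\sum_{\vx}f(\vx)g(\vx)\,\omega_p^{\Tr(\vx\cdot\vz)}=q^{n/2}\,\widehat{f\cdot g}(\vz).
\]
Dividing by $q^{n/2}$ gives \Cref{eq:conv_one}.

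Every step is a finite interchange of sums, so the argument has no real obstacle. The only point that needs care is the exponent bookkeeping: we must check that $\omega_p^{\Tr(\cdot)}$ is additive in its argument, which holds because $\Tr$ is additive and $\omega_p^{a+b}=\omega_p^{a}\omega_p^{b}$ for $a,b\in\FF_p$. We must also confirm that the sign conventions in $\hat g(\vz-\vy)$ produce exactly the character $\omega_p^{\Tr((\vx-\vx')\cdot\vy)}$ needed to apply \Cref{eq:sum_is_zero}.
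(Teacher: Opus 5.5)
Your proof is correct. You expand $\hat f\ast\hat g$, combine the exponents using the additivity of $\Tr$, and collapse the $\vy$-sum with the character orthogonality relation, which yields exactly $q^{n/2}\,\widehat{f\cdot g}$. The paper does not prove this lemma itself but cites it as standard (pointing to \cite{JACM:YamZha24}), and your direct expansion is that standard argument, so there is nothing to add.
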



\subsection{Error Correcting Codes.}\label{sec:codes}
In this section, we review basic definitions on error correcting codes.

\paragraph{Codes.} 
A code of length $n\in \mathbb{N}$ over an alphabet $\Sigma$ (which is a finite set) is a subset $C\subseteq \Sigma^n$.

The minimum distance of a code $C$ is defined as the 
smallest Hamming distance between two distinct elements of $C$, that is, $\min_{\vx\ne \vy\in C}|\{i\in[n]:x_i\ne y_i\}|$ where $\vx=(x_1,\ldots,x_n)$ and $\vy=(y_1,\ldots,y_n)$.


\paragraph{Linear codes.}
A code $C$ is said to be a linear code if its alphabet is $\Sigma=\FF_q$ for some prime power $q$ and $C\subseteq \FF_q^n$ is a linear subspace of $\FF_q^n$.
When the dimension of $C$ is $k$, we call $R=k/n$ the rate of $C$. 

We say that a linear code $C$ of length $n$ and dimension $k$ is a maximum distance separable (MDS) code if its minimum distance is equal to $n-k+1$. Note that this is equivalent to $\min_{\vx\in C\setminus \{0\}}\wt(\vx)=n-k+1$. 

\paragraph{Dual codes.}
Let $C$ be a linear code of length $n$ and dimension $k$ over $\FF_q$. The \emph{dual code} $C^\perp$ of $C$ is defined as the orthogonal complement of $C$ as a linear space over $\FF_q$, i.e., 
$$C^{\perp}\defeq \{\vz\in \FF_q^n: \vx \cdot \vz = 0 \text{~for~all~}\vx \in C\}.$$
$C^{\perp}$ is a linear code of length $n$ and dimension $n-k$ over $\FF_q$. 

It is well-known that a dual of an MDS code is also an MDS code. 

The following lemma is also well-known. See e.g., \cite{JACM:YamZha24} for the proof. 

\begin{lemma}\label{lem:fourier_dual}
For a linear code $C\subseteq \FF_q^n$, if we define 
\[
f(\vx)\defeq 
\begin{cases}
\frac{1}{\sqrt{|C|}}& \vx\in C\\
0& \text{otherwise}
\end{cases},
\]
then we have 
\[
\hat{f}(\vz)= 
\begin{cases}
\frac{1}{\sqrt{|C^{\perp}|}}& \vz\in C^{\perp}\\
0& \text{otherwise}
\end{cases}.
\]
In other words, when the dimension of $C$ is $k$,  
\[
\hat{\one}_{C}=q^{(2k-n)/2}\one_{C^\perp}.
\]
\end{lemma}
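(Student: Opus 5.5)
The plan is to compute $\hat f$ directly from the definition of the Fourier transform and then to evaluate a character sum over $C$. By definition,
\[
\hat f(\vz)=\frac{1}{q^{n/2}\sqrt{|C|}}\sum_{\vx\in C}\omega_p^{\Tr(\vx\cdot\vz)}.
\]
I will split into the two cases $\vz\in C^\perp$ and $\vz\notin C^\perp$.

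\emph{Case $\vz\in C^\perp$.} Here $\vx\cdot\vz=0$ for every $\vx\in C$. Each summand is therefore $1$, and the sum equals $|C|$. Using $|C|=q^k$ and $|C^\perp|=q^{n-k}$, this gives
\[
\hat f(\vz)=\frac{\sqrt{|C|}}{q^{n/2}}=q^{(k-n)/2}=\frac{1}{\sqrt{|C^\perp|}}.
\]

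\emph{Case $\vz\notin C^\perp$.} The map $\phi:C\to\FF_q$ given by $\vx\mapsto\vx\cdot\vz$ is $\FF_q$-linear and not identically zero, so it is surjective. Each fiber of $\phi$ is a coset of $\ker\phi$, hence has size $|C|/q$. The sum therefore becomes
\[
\frac{|C|}{q}\sum_{y\in\FF_q}\omega_p^{\Tr(y)}.
\]
This inner sum is the case $n=1$, $\vx=1$ of \Cref{eq:sum_is_zero}, so it vanishes. Hence $\hat f(\vz)=0$.

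Finally, the reformulation $\hat{\one}_C=q^{(2k-n)/2}\one_{C^\perp}$ follows by linearity. Since $\one_C=\sqrt{|C|}\,f=q^{k/2}f$, we get
\[
\hat{\one}_C=q^{k/2}\,q^{(k-n)/2}\,\one_{C^\perp}=q^{(2k-n)/2}\one_{C^\perp}.
\]
There is no real obstacle in this argument. The only step needing care is the surjectivity/fiber-counting reduction to \Cref{eq:sum_is_zero}.
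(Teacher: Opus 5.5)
Your proof is correct and is the standard orthogonality-of-characters computation; the paper does not prove this lemma itself and instead defers to Yamakawa--Zhandry for it. The only step with real content is the vanishing of $\sum_{\vx\in C}\omega_p^{\Tr(\vx\cdot\vz)}$ for $\vz\notin C^\perp$, and you handle it correctly: $\vx\mapsto\vx\cdot\vz$ is a nonzero, hence surjective, $\FF_q$-linear functional on $C$ with fibers of equal size $|C|/q$, which reduces the sum to \Cref{eq:sum_is_zero} with $n=1$.
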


\paragraph{List decoding.}
We say that a linear code $C\subseteq \FF_q^n$ is $(r,L)$-list-decodable if there is a classical deterministic polynomial-time algorithm $\mathsf{ListDecode}$ that satisfies the following:\footnote{Strictly speaking, the list-decodability is defined for a family of codes with increasing parameter sizes rather than one with a fixed parameter to make the efficiency requirement meaningful.}  
For any $\vx\in C$ and $\ve\in B_{rn}$, $\mathsf{ListDecode}(\vx+\ve)$  outputs $\mathcal{L}\subseteq \FF_q^n$ such that 
$|\mathcal{L}|\le L$ and $\vx \in \mathcal{L}$. 

Without loss of generality, we assume the following. 
 Let $Y\subseteq \FF_q^n$ be a set defined as 
 $$Y=\{\vx+\ve\mid \vx\in C, \ve\in  B_{rn}\}.$$ 
 For any $\vy \in Y$, $\mathsf{ListDecode}(\vy)$ outputs 
 $$\mathcal{L}_\vy=\{\vx\in C \mid 
 \vy-\vx \in B_{rn}\}.$$

This assumption is without loss of generality. Indeed, by the definition of list decodability, for every $\vx\in C$ such that $\vy-\vx\in B_{rn}$, the list output by $\mathsf{ListDecode}(\vy)$ contains $\vx$. Hence the output list contains $\mathcal L_{\vy}$. Moreover, membership in $\mathcal L_{\vy}$ can be efficiently checked, since it suffices to check whether $\vx\in C$ and whether $\wt(\vy-\vx)\le rn$. Therefore, by filtering out all elements that do not belong to $\mathcal L_{\vy}$, we may assume that the output of $\mathsf{ListDecode}(\vy)$ is exactly $\mathcal L_{\vy}$.


\paragraph{(Generalized) Reed--Solomon codes.}
 We review the definition and known facts on (generalized) Reed--Solomon codes. 
 See e.g., \cite[Section 6]{Lindell_Coding} for more details.

Let $q$ be a prime power, $n\le q$ be a positive integer, $0\le d< n$ be an integer,  
$\valpha=(\alpha_1,\alpha_2,\ldots,\alpha_n)\in \FF_q^n$ be a pairwise distinct tuple, and $\vecv=(v_1,...,v_n)\in ({\FF_q\setminus \{0\}})^n$.   
A generalized Reed--Solomon code $\GRS_{\FF_q,\valpha,d,\vecv}$ over $\FF_q$ w.r.t. $\valpha$, $d$, and $\vecv$ is defined as follows:
\[
\GRS_{\FF_q,\valpha,d,\vecv}\defeq \{(v_1f(\alpha_1),v_2f(\alpha_2)...v_nf(\alpha_n)):f\in \FF_q[x]_{deg\leq d}\}
\]
where $\FF_q[x]_{deg\leq d}$ denotes the set of polynomials over $\FF_q$ of degree at most $d$. 
We remark that $\GRS_{\FF_q,\valpha,d,\vecv}$ is a linear code over $\FF_q$ that has length $n$ and dimension $d+1$.  
It is easy to see that a generalized Reed--Solomon code is an MDS code. 

A Reed--Solomon code is a special case of a generalized Reed--Solomon code where  $\vecv=(1,1,\ldots,1)$. 
We denote it by $\RS_{\FF_q,\valpha,d}$ (which is equivalent to $\GRS_{\FF_q,\valpha,d,(1,1,\ldots,1)})$. 
The dual of $\RS_{\FF_q,\valpha,d}$ is $\GRS_{\FF_q,\valpha,n-d-2,\vecv}$ for some $\vecv\in \FF_q^n$~\cite[Claim 6.3]{Lindell_Coding}. 

It is known that generalized Reed--Solomon codes are list-decodable for relative noise ratio up to $1-\sqrt{R}$ where $R=(d+1)/n$ is the rate~\cite{GS99,CHK2026}. 
More precisely, for any constant $\epsilon>0$, there is a polynomial $\poly$ such that  
$\GRS_{\FF_q,\valpha,d,\vecv}$ is $(\ell,L)$-list-decodable where $\ell=1-\sqrt{R}-\epsilon$ and $L=\poly(n)$.

\subsection{Other Lemmas}
We rely on the following well-known lemmas.
\begin{lemma}[Chernoff Bound]\label{lem:Chernoff}
Let $X_1,...,X_n$ be independent random variables taking values in $\bit$, $X\defeq \sum_{i\in [n]}X_i$, and $\mu\defeq \Ex[X]$.
For any $\delta\geq 0$, it holds that 
\begin{align*}
    \Pr[X\geq (1+\delta)\mu]\leq e^{-\frac{\delta^2 \mu}{2+\delta}}.
\end{align*}
\end{lemma}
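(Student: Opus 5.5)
The statement to prove is the Chernoff bound, \Cref{lem:Chernoff}. I will use the standard exponential moment (Bernstein--Chernoff) method. Write $p_i=\Pr[X_i=1]$, so $\mu=\sum_i p_i$.

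\emph{Step 1: moment generating function.} Fix $\lambda>0$. By Markov's inequality applied to $e^{\lambda X}$,
\[
\Pr[X\ge (1+\delta)\mu]\le e^{-\lambda(1+\delta)\mu}\,\Ex[e^{\lambda X}].
\]
By independence, $\Ex[e^{\lambda X}]=\prod_i \Ex[e^{\lambda X_i}]=\prod_i\bigl(1+p_i(e^\lambda-1)\bigr)$. Using $1+y\le e^y$, this is at most $e^{\mu(e^\lambda-1)}$.

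\emph{Step 2: optimize $\lambda$.} Choose $\lambda=\ln(1+\delta)$. For $\delta=0$ the claimed bound is $1$, which is trivial, so take $\delta>0$. This choice gives the classical bound
\[
\Pr[X\ge(1+\delta)\mu]\le\left(\frac{e^{\delta}}{(1+\delta)^{1+\delta}}\right)^{\mu}=e^{-\mu\,((1+\delta)\ln(1+\delta)-\delta)}.
\]

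\emph{Step 3: the elementary inequality.} It remains to show that for all $\delta\ge 0$,
\[
(1+\delta)\ln(1+\delta)-\delta\ge \frac{\delta^2}{2+\delta},
\]
which is equivalent to $\ln(1+\delta)\ge \frac{2\delta}{2+\delta}$. Let $g(\delta)=\ln(1+\delta)-\frac{2\delta}{2+\delta}$. Then $g(0)=0$ and
\[
g'(\delta)=\frac{1}{1+\delta}-\frac{4}{(2+\delta)^2}=\frac{\delta^2}{(1+\delta)(2+\delta)^2}\ge 0,
\]
so $g\ge 0$ on $[0,\infty)$. To recover the first form, multiply $\ln(1+\delta)\ge\frac{2\delta}{2+\delta}$ by $1+\delta$ and subtract $\delta$; this gives $\frac{2\delta(1+\delta)-\delta(2+\delta)}{2+\delta}=\frac{\delta^2}{2+\delta}$. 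Combining this with Step 2 gives the lemma.

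This last calculus step is the only place requiring care, and it is routine. If $\mu=0$, every $X_i$ is almost surely $0$ and the bound equals $1$, so it holds trivially.
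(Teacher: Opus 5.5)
Your proof is correct. The Markov/exponential-moment step, the bound $1+p_i(e^\lambda-1)\le e^{p_i(e^\lambda-1)}$, the choice $\lambda=\ln(1+\delta)$, and the verification of $\ln(1+\delta)\ge \frac{2\delta}{2+\delta}$ via $g'(\delta)=\frac{\delta^2}{(1+\delta)(2+\delta)^2}\ge 0$ all check out.

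The paper does not prove this lemma. It lists it among ``well-known lemmas'' and uses it only as a black box, to make $\sum_{\ve\in T_{rn}}|\hat W(\ve)|^2$ exponentially small in the proof of \Cref{cor:Regev_like_new}. Your argument is the standard textbook proof that this citation relies on.
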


\begin{lemma}[H\"{o}lder's inequality]\label{lem:Holder}
Let $n$ be a positive integer, $(x_1,x_2,...,x_n)\in \mathbb{C}^n$, $(y_1,y_2,...,y_n)\in \mathbb{C}^n$, and $p,q>1$ be reals such that $1/p+1/q=1$. Then it holds that 
\[
\sum_{i=1}^{n}|x_i y_i|\le \left(\sum_{i=1}^{n}|x_i|^p\right)^{\frac{1}{p}}
\left(\sum_{i=1}^{n}|y_i|^q\right)^{\frac{1}{q}}.
\]
\end{lemma}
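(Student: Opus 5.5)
This is the standard H\"{o}lder inequality, and I would prove it from Young's inequality. First dispose of the degenerate cases. If $A\defeq\left(\sum_i|x_i|^p\right)^{1/p}=0$, then every $x_i=0$ and the left-hand side is $0$; the case $B\defeq\left(\sum_i|y_i|^q\right)^{1/q}=0$ is symmetric. Otherwise, both sides are homogeneous of degree one in $\vx$ and in $\vy$ separately. So after replacing $x_i$ by $x_i/A$ and $y_i$ by $y_i/B$, it suffices to prove
\[
\sum_i |x_i||y_i|\le 1
\quad\text{whenever}\quad
\sum_i|x_i|^p=\sum_i|y_i|^q=1 .
\]

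The key step is Young's inequality: for reals $a,b\ge 0$,
\[
ab\le \frac{a^p}{p}+\frac{b^q}{q}.
\]
I would derive it from concavity of $\ln$. If $a=0$ or $b=0$ it is trivial. Otherwise, since $1/p+1/q=1$,
\[
\ln\!\left(\frac{1}{p}a^p+\frac{1}{q}b^q\right)\ge \frac{1}{p}\ln a^p+\frac{1}{q}\ln b^q=\ln(ab),
\]
and exponentiating gives the claim.

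Apply Young's inequality with $a=|x_i|$ and $b=|y_i|$, then sum over $i$. This gives
\[
\sum_i|x_iy_i|\le \frac{1}{p}\sum_i|x_i|^p+\frac{1}{q}\sum_i|y_i|^q=\frac{1}{p}+\frac{1}{q}=1,
\]
which is the normalized inequality. Undoing the scaling gives the statement.

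No step is a genuine obstacle. The only point needing care is Young's inequality, which is exactly weighted AM--GM via Jensen's inequality for the concave function $\ln$, as used above. The complex entries cause no difficulty, because only the absolute values $|x_i|$ and $|y_i|$ ever enter.
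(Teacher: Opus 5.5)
Your argument is correct and complete. The degenerate cases, the homogeneity normalization, Young's inequality via concavity of $\ln$ with weights $1/p+1/q=1$, and the final summation are all sound, and complex entries are harmless because only $|x_i|,|y_i|$ enter. The paper gives no proof of this lemma; it lists H\"{o}lder's inequality among the ``well-known lemmas'' it relies on, so your standard textbook argument has no paper proof to compare against.
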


\begin{lemma}\label{lem:2norm_to_tracenorm}
Let $\ket{\psi_1}$ and $\ket{\psi_2}$ be not necessarily normalized states of the same size.
Suppose that we have  
\[
\|\ket{\psi_1}-\ket{\psi_2}\| \le \epsilon.
\]
Then we have 
\[
\TD\left(\frac{\ketbra{\psi_1}{\psi_1}}{\|\ket{\psi_1}\|^2}-\frac{\ketbra{\psi_2}{\psi_2}}{\|\ket{\psi_2}\|^2}\right) \le \frac{\epsilon}{\max\{\|\ket{\psi_1}\|,\|\ket{\psi_2}\|\}}.
\]
\end{lemma}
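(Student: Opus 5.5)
We treat $\ket{\psi_1}$ and $\ket{\psi_2}$ as vectors and assume, by symmetry of the statement, that $\|\ket{\psi_1}\|\ge\|\ket{\psi_2}\|$, so the maximum is $\|\ket{\psi_1}\|$. If $\epsilon\ge \|\ket{\psi_1}\|$, the claimed bound is at least $1$ and holds trivially, because trace distance never exceeds $1$. We may therefore assume $\epsilon<\|\ket{\psi_1}\|$; both states are then nonzero and the normalized states $\ket{\phi_j}=\ket{\psi_j}/\|\ket{\psi_j}\|$ are well defined.

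Next we use the pure-state formula $\TD(\ketbra{\phi_1}{\phi_1},\ketbra{\phi_2}{\phi_2})=\sqrt{1-|\braket{\phi_1|\phi_2}|^2}$. This equals $\min_\theta \|\ket{\phi_1}-e^{i\theta}\ket{\phi_2}\|\cdot\sqrt{1-|\braket{\phi_1|\phi_2}|^2}/\sqrt{2-2|\braket{\phi_1|\phi_2}|}$. Since $\sqrt{1-c^2}\le\sqrt{2(1-c)}$ for $c\in[0,1]$, the trace distance is at most $\|\ket{\phi_1}-\ket{\phi_2}\|$. It therefore suffices to bound the Euclidean distance between the normalized vectors.

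Write $a=\|\ket{\psi_1}\|$ and $b=\|\ket{\psi_2}\|$, with $|a-b|\le\epsilon$ by the triangle inequality. The core step is to bound $\|\ket{\psi_1}/a-\ket{\psi_2}/b\|$. A naive triangle inequality gives $\|\ket{\psi_1}/a-\ket{\psi_2}/a\|+\|\ket{\psi_2}\|\,|1/a-1/b|\le \epsilon/a+|b-a|/a\le 2\epsilon/a$. This loses a factor of $2$, so we need a sharper estimate.

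To remove the factor $2$, we use the geometric fact that for nonzero vectors $u,v$,
\[
\left\|\frac{u}{\|u\|}-\frac{v}{\|v\|}\right\|\le \frac{\|u-v\|}{\max\{\|u\|,\|v\|\}}\cdot \sqrt{2}\Big/\sqrt{1+\text{(something nonnegative)}}.
\]
More concretely, we project onto the unit sphere. With $\|u\|=a\ge b=\|v\|$, compute $\|u-v\|^2=a^2+b^2-2ab\cos\theta$, where $\theta$ is the angle between $u$ and $v$, and $\|u/a-v/b\|^2=2-2\cos\theta$. We then need to show $a^2(2-2\cos\theta)\le a^2+b^2-2ab\cos\theta$, which rearranges to $(a-b)(a-b+2b\cos\theta... )$. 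Expanding, $a^2+b^2-2ab\cos\theta-2a^2+2a^2\cos\theta=(b^2-a^2)+2a\cos\theta(a-b)=(a-b)(2a\cos\theta-a-b)$. This quantity can be negative, so the angular step must instead be done through the trace distance directly.

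We therefore work with the trace distance itself. We have $\TD^2=1-c^2$ with $c=|\braket{\phi_1|\phi_2}|$, and $\sin\theta=\sqrt{1-c^2}$ for the Hermitian angle. The distance from $u$ to the line spanned by $v$ is $a\sin\theta$, and it is at most $\|u-v\|\le\epsilon$. Hence $\TD=\sin\theta\le\epsilon/a=\epsilon/\max\{a,b\}$, which is the claimed bound.

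This also makes the earlier case split unnecessary, apart from requiring the states to be nonzero. The main subtlety, and the main obstacle, is choosing the point-to-line distance identity rather than comparing the normalized vectors directly; the normalized-vector route gives a factor of $2$ or fails in the angular inequality.
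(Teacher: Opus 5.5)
Your final argument is correct and is essentially the paper's proof. The paper shows $\|\psi_2\|^2(1-|c|^2)\le\|\psi_1-\psi_2\|^2$, with $c=\langle\tilde\psi_1|\tilde\psi_2\rangle$, via the explicit identity $\|\psi_1-\psi_2\|^2-\|\psi_2\|^2(1-|c|^2)=\bigl|\,\|\psi_1\|-\|\psi_2\|c\,\bigr|^2\ge 0$, which is exactly your point-to-line distance bound. It then applies this in both directions and keeps the better bound, which corresponds to your reduction to $\|\psi_1\|\ge\|\psi_2\|$. The detours through $\|\phi_1-\phi_2\|$ and the abandoned angular inequality can simply be deleted.
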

\begin{proof}
Let $\ket{\tilde{\psi}_1}=\frac{\ket{\psi_1}}{\|\psi_1\|}$ and $\ket{\tilde{\psi}_2}=\frac{\ket{\psi_2}}{\|\psi_2\|}$.

We have 
\begin{align*}
1-\left|\braket{\tilde{\psi}_1}{\tilde{\psi}_2}\right|^2\le 
\frac{\|\ket{\psi_1}-\ket{\psi_2}\|^2}{\|\ket{\psi_2}\|^2}.
\end{align*}
Indeed, 
\begin{align*}
&\|\ket{\psi_1}-\ket{\psi_2}\|^2-\|\ket{\psi_2}\|^2\left(1-\left|\braket{\tilde{\psi}_1}{\tilde{\psi}_2}\right|^2\right)\\
=~&\|\ket{\psi_1}\|^2-\braket{\psi_1}{\psi_2}-\braket{\psi_2}{\psi_1}+\|\ket{\psi_2}\|^2\left|\braket{\tilde{\psi}_1}{\tilde{\psi}_2}\right|^2\\
=~& \left|\|\ket{\psi_1}\|-\|\ket{\psi_2}\|\braket{\tilde{\psi}_1}{\tilde{\psi}_2}\right|^2\\
\ge~& 0. 
\end{align*}
Therefore, 
\begin{align*}
    \TD\left(\ketbra{\tilde{\psi}_1}{\tilde{\psi}_1},\ketbra{\tilde{\psi}_2}{\tilde{\psi}_2}\right)
    &=\sqrt{1-\left|\braket{\tilde{\psi}_1}{\tilde{\psi}_2}\right|^2}\\
    &\le \frac{\|\ket{\psi_1}-\ket{\psi_2}\|}{\|\ket{\psi_2}\|}\\
    &\le \frac{\epsilon}{\|\ket{\psi_2}\|}.
\end{align*}

Similarly, we can show
\[
\TD\left(\ketbra{\tilde{\psi}_1}{\tilde{\psi}_1},\ketbra{\tilde{\psi}_2}{\tilde{\psi}_2}\right)\le \frac{\epsilon}{\|\ket{\psi_1}\|}.
\]
Thus, the lemma follows. 
\end{proof}

\subsection{Maximum Linear Satisfaction and Optimal Polynomial Intersection Problems}

The Optimal Polynomial Intersection (OPI) Problem is defined in Problem \ref{prob:OPI}. 

Here, we define Maximum Linear Satisfaction (MaxLINSAT) Problem~\cite{DQI} that generalizes OPI.  

\begin{problem}[Maximum Linear Satisfaction Problem]\label{prob:MaxLINSAT}
Let $q$ be a prime power, let $k\le n$ be positive integers, let
$\mA \in \mathbb{F}_q^{n\times k}$, let $S_i \subseteq \mathbb{F}_q$ for each
$i\in[n]$, and let $0\le s \le 1$.

The maximum linear satisfaction problem
$\MaxLINSAT(q,\mA,\{S_i\}_{i\in[n]},s)$ is the following problem:
\begin{description}
    \item[Given:] $q$, $\mA$, $s$, and quantum oracle access to the membership
    oracles for the sets $S_i$, $i\in[n]$.
    \item[Find:] A vector $\vy\in \mathbb{F}_q^n$ in the image of $\mA$ such that\footnote{\cite{DQI} formalizes MaxLINSAT as the problem of outputting $\vx\in\FF_q^k$ such that $\vy=\mA\vx$ satisfies the condition below. This formulation is equivalent to ours, since given such a vector $\vy$, one can efficiently recover a corresponding $\vx$ by Gaussian elimination.}
    \[
        \left|\{i\in[n] : \vy_i \in S_i\}\right| \ge s n.
    \]
\end{description}
\end{problem}

As observed in \cite{DQI}, OPI is a special case of MaxLINSAT. 
Indeed, $\OPI(k,q,\valpha,\{S_i\}_{i\in[n]},s)$  is equivalent to 
$\MaxLINSAT(q,\mA(\RS_{\FF_q,\valpha,k-1}),\{S_i\}_{i\in[n]},s)$ where $\mA(\RS_{\FF_q,\valpha,k-1})$ is the generator matrix of the Reed--Solomon code $\RS_{\FF_q,\valpha,k-1}$.  
\section{MDS Brascamp--Lieb Inequality}
We show a theorem that can be seen as a variant of the Brascamp--Lieb inequality~\cite{BL76} adapted to the setting of MDS codes. 
\begin{theorem}[MDS Brascamp--Lieb inequality]\label{thm:MDS_BL}
Let $q$ be a prime power and let $n,k$ be positive integers with $k\le n$. 
Let $C\subseteq \FF_q^n$ be an MDS code of dimension $k$, let $R=\frac{k}{n}$, and let $h_i:\FF_q\rightarrow \mathbb{R}_{\ge 0}$ for $i\in [n]$. 
Then it holds that 
\[
\sum_{\vu\in C}\prod_{i\in [n]} h_i(u_i)
\le
\prod_{i\in [n]}
\left(
\sum_{a\in\mathbb F_q}h_i(a)^{1/R}
\right)^{R}.
\]
\end{theorem}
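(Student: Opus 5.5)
We follow the entropy-subadditivity route of Carlen and Cordero-Erausquin, as announced in the overview. The combinatorial input from the MDS property is this: for every $k$-subset $J\subseteq[n]$, the projection $\vu\mapsto \vu_J$ is a bijection from $C$ onto $\FF_q^J$. Injectivity holds because a nonzero codeword vanishing on $J$ would have weight at most $n-k<n-k+1$, and a dimension count then gives surjectivity. Consequently, for any random variable $\vu$ supported on $C$, the coordinates $\vu_J$ determine $\vu$, and $H(\vu)=H(\vu_J)\le\sum_{i\in J}H(u_i)$. Averaging over all $\binom{n}{k}$ subsets $J$, each coordinate lies in a fraction $k/n=R$ of them, which yields the Shearer-type inequality
\[
H(\vu)\le R\sum_{i\in[n]}H(u_i).
\]
This is the entropic dual of the claimed inequality. (When $k=n$ we have $C=\FF_q^n$ and $R=1$, so the claim is an equality, consistent with the argument below.)

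Next we dualize via the Gibbs variational principle. We may assume the left-hand side $Z:=\sum_{\vu\in C}\prod_i h_i(u_i)$ is positive. Let $P$ be the Gibbs distribution on $C$, $P(\vu)=Z^{-1}\prod_i h_i(u_i)$. Then
\[
\log Z = H(P)+\Ex_P\Big[\sum_i \log h_i(u_i)\Big],
\]
where terms with $h_i=0$ carry zero probability. Let $P_i$ denote the marginal of $P$ on coordinate $i$. Applying the entropy inequality gives
\[
\log Z\le \sum_i\Big( R\,H(P_i)+\Ex_{P_i}[\log h_i]\Big)=R\sum_i\Big(H(P_i)+\Ex_{P_i}\big[\log h_i^{1/R}\big]\Big).
\]
Each bracket is bounded by the variational principle (Gibbs' inequality) on $\FF_q$: for any distribution $P_i$ and nonnegative $g$, $H(P_i)+\Ex_{P_i}[\log g]\le \log\sum_a g(a)$, with $0\log0$ conventions handled since $P_i$ vanishes where $h_i$ does. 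Taking $g=h_i^{1/R}$ gives $\log Z\le \sum_i R\log\sum_a h_i(a)^{1/R}$, which is the claim.

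The main obstacle is the entropy step, specifically making sure that the MDS property gives exactly the uniform fractional cover with weight $R$. Everything else is the routine Gibbs/duality computation. Both ingredients are standard, so the proof should be short; the only care needed is with zero values of $h_i$, which we handle by restricting to the support of $P$, and with the degenerate case $Z=0$.
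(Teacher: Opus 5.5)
Your proof is correct and follows essentially the same route as the paper: the MDS property makes every $k$-coordinate projection of $C$ injective, averaging subadditivity over all $k$-subsets gives $H(P)\le R\sum_i H(P_i)$, and the Gibbs variational principle is applied once on $C$ and once on each copy of $\FF_q$. The only differences are cosmetic. You plug in the Gibbs measure directly instead of taking a supremum over distributions, and you handle zeros of $h_i$ by restricting to the support of $P$ rather than by the paper's $h_i+\epsilon$ perturbation and limit.
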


Our proof follows the entropy-subadditivity approach to Brascamp-Lieb
inequalities developed by~\cite{CarlenCordero2009}.

To prove \Cref{thm:MDS_BL}, we first recall a lemma that can be viewed as the finite-alphabet form of the Gibbs variational principle; see, e.g., \cite[Lemma 4.10]{van2016probability}. 
\begin{lemma}[Gibbs variational principle]\label{lem:Gibbs}
Let $X$ be a finite set and let $f:X\rightarrow \mathbb{R}$. 
Then it holds that 
\[
\log\sum_{v\in X}2^{f(v)}
=
\sup_{P\in \mathcal{P}(X)}
\left\{\sum_{u\in X}P(u)f(u)+H(P)
\right\}
\]
where $\mathcal{P}(X)$ denotes the set of all probability distributions on $X$
and $H(\cdot)$ denotes the Shannon entropy. 
\end{lemma}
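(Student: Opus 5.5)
The statement just before this point is the Gibbs variational principle (\Cref{lem:Gibbs}), a standard fact. Since the paper cites it, I sketch the standard proof here. It is also the tool I would then use for \Cref{thm:MDS_BL}.

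\textbf{Proof of \Cref{lem:Gibbs}.} Set $Z=\sum_{v\in X}2^{f(v)}$ and define the Gibbs distribution $G(u)=2^{f(u)}/Z$. For any $P\in\mathcal P(X)$ we have
\[
\sum_u P(u)f(u)+H(P)=\sum_u P(u)\log\frac{2^{f(u)}}{P(u)}=\log Z-D(P\|G),
\]
where $D$ is the Kullback--Leibler divergence and terms with $P(u)=0$ are dropped. By Gibbs' inequality, equivalently Jensen applied to the concave $\log$, we have $D(P\|G)\ge 0$, with equality at $P=G$. Hence the supremum equals $\log Z$ and is attained at $G$.

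\textbf{Plan for \Cref{thm:MDS_BL}.} Without loss of generality all $h_i>0$, by continuity or by restricting supports. Apply \Cref{lem:Gibbs} on $X=C$ with $f(\vu)=\sum_i\log h_i(u_i)$. It then suffices to show, for every distribution $P$ on $C$ with coordinate marginals $P_i$, that
\[
\sum_i \Ex_P[\log h_i(u_i)]+H(P)\le R\sum_i\log\sum_a h_i(a)^{1/R}.
\]

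\textbf{Entropy subadditivity from the MDS property.} This is the key step. Every $k$ coordinates of an MDS code of dimension $k$ form an information set. So for each $k$-subset $J$, the projection $\vu\mapsto\vu_J$ is injective on $C$, and therefore $H(P)=H(P_J)\le\sum_{i\in J}H(P_i)$. Averaging over all $\binom nk$ subsets $J$, each coordinate appears with frequency $k/n$, which gives
\[
H(P)\le R\sum_i H(P_i).
\]

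\textbf{Finishing.} The left side is now at most
\[
\sum_i\Bigl(\Ex_{P_i}[\log h_i]+R\,H(P_i)\Bigr)=R\sum_i\Bigl(\Ex_{P_i}[\log h_i^{1/R}]+H(P_i)\Bigr).
\]
By \Cref{lem:Gibbs} on $X=\FF_q$, each bracket is at most $\log\sum_a h_i(a)^{1/R}$. Exponentiating gives the claim.

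The main obstacle is the entropy inequality. The injectivity of coordinate projections onto $k$-subsets is exactly where the MDS hypothesis enters. Everything else is two applications of the variational principle.
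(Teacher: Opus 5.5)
Your proof is correct and is essentially the paper's own argument. The paper likewise introduces the Gibbs distribution $\tilde P(u)=2^{f(u)}/\sum_{v}2^{f(v)}$ and rewrites the objective as $\log\sum_v 2^{f(v)}+\sum_{u:P(u)>0}P(u)\log\frac{\tilde P(u)}{P(u)}$, which is your $\log Z-D(P\|G)$. It then shows the second term is at most $0$ by Jensen's inequality, with equality at $P=\tilde P$.
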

\Cref{lem:Gibbs} follows from  \cite[Lemma~4.10]{van2016probability} by restricting to finite sets and taking one of the two probability measures to be the uniform measure.
For the reader's convenience, we provide a proof of \Cref{lem:Gibbs}  in \Cref{sec:proof_Gibbs}. 

Below, we prove \Cref{thm:MDS_BL}.
\begin{proof}[Proof of \Cref{thm:MDS_BL}]
     We first reduce to the case where each $h_i$ is strictly positive on $\FF_q$.
Suppose the theorem has been proved under this assumption. For general
nonnegative $h_i$, fix $\epsilon>0$ and apply the strictly positive case to 
$h_{i,\epsilon}=h_i+\epsilon$. 
Letting $\epsilon\rightarrow 0$ then yields the desired inequality, since all
sets involved are finite and both sides are continuous in the values of the
functions $h_i$.

Thus, in the remainder of the proof, we assume that $h_i(u_i)>0$ for all
$i\in[n]$ and all $u_i\in\FF_q$.
     
     Since $C$ is an MDS code, any pair of distinct codewords cannot agree on more than $k-1$ coordinates. Thus,  for any $I\subseteq [n]$ with $|I|=k$, if we let $\pi_I:C\rightarrow \FF_q^k$ be the corresponding projection function, i.e., 
     $
     \pi_I(x_1,\ldots,x_n)=(x_{i_1},...,x_{i_k})
     $
    where $I=\{i_1,...,i_k\}$ for $i_1<i_2<\cdots<i_k$, then $\pi_I$ is injective. 

    Let $P$ be a probability distribution on $C$. 
    For $i\in [n]$, 
    let $P_i$ be the marginal distribution of $P$ on the $i$th coordinate, 
    and for $I=\{i_1,...,i_k\}$ where $i_1<i_2<\cdots<i_k$, 
    let $P_I$ be the marginal distribution of $P$ on coordinates in $I$. 
    That is, 
    $$P_i(u_i)=\sum_{(u_j)_{j\in [n]\setminus \{i\}}\in \FF_q^{n-1}}P(u_1,\ldots,u_n),$$
    and 
$$P_I(u_{i_1},\ldots,u_{i_k})=\sum_{(u_j)_{j\in [n]\setminus I}\in \FF_q^{n-k}}P(u_1,\ldots,u_n),$$
    where  we extend $P$ to a
probability distribution on $\mathbb F_q^n$ by setting $P(\vu)=0$ for
$\vu\notin C$. 

Then we have 
    \[
    H(P)=H(P_I)\le \sum_{i\in I}H(P_i).
    \]
where the equality follows from the injectivity of $\pi_I$ and the inequality follows from subadditivity of Shannon entropy.
    
 Averaging over all subsets $I\subseteq [n]$ with size $k$, 
 \begin{align}
 H(P)\le \frac{1}{\binom{n}{k}}\sum_{I:|I|=k}\sum_{i\in I}H(P_i)
 = \frac{\binom{n-1}{k-1}}{\binom{n}{k}}\sum_{i\in [n]}H(P_i)=R\sum_{i\in [n]}H(P_i). \label{eq:entropy_bound}
 \end{align}
 We apply \Cref{lem:Gibbs} with
 $X=C$ and 
 $f(\vu)=\sum_{i\in[n]} \log  h_i(u_i)$ where $\vu=(u_1,\ldots,u_n)$.
 Then we obtain
 \begin{align*}
 \log \sum_{\vu \in C}\prod_{i\in [n]}h_i(u_i)=\sup_{P\in \mathcal{P}(C)}
\left\{\sum_{\vu\in C}P(\vu)\sum_{i\in [n]}\log h_i(u_i)+H(P)
\right\}. 
\end{align*}
Moreover, the quantity in the supremum can be upper bounded as follows:
\begin{align*}
\sum_{\vu\in C}P(\vu)\sum_{i\in [n]}\log h_i(u_i)+H(P)
&=
\sum_{i\in [n]}\sum_{u_i\in \FF_q}P_i(u_i)\log h_i(u_i)+H(P)\\
&\le  \sum_{i\in [n]}\left(\sum_{u_i\in \FF_q}P_i(u_i)\log h_i(u_i)+RH(P_i)\right)\\
&=\sum_{i\in [n]}R\left( \sum_{u_i\in \FF_q}P_i(u_i)\log h_i(u_i)^{1/R}+H(P_i)\right)\\
&\le \sum_{i\in [n]}R\log \sum_{u_i\in \FF_q}h_i(u_i)^{1/R}
 \end{align*}
 where the first inequality follows from \Cref{eq:entropy_bound} and the second inequality follows from applying  \Cref{lem:Gibbs} with
 $X=\FF_q$ and 
 $f(u_i)=\log  h_i(u_i)^{1/R}$.

 Combining the above, we obtain
 \[
\log \sum_{\vu \in C}\prod_{i\in [n]}h_i(u_i)\le \sum_{i\in[n]}R\log \sum_{u_i\in \FF_q}h_i(u_i)^{1/R}.
 \]
 Exponentiating both sides gives the desired bound.
\end{proof}
\section{Fourier Bias Bounds}\label{sec:Fourier}
In this section, we show a theorem that controls sum of $Q$-th power of Fourier transform of characteristic functions.  
Below, $O$ hides constants that may depend on $\rho$ and $Q$. 

\begin{theorem}\label{thm:Fourier_Qth_power_sum}
    Let $q=p^e$ be a prime power and $S\subseteq \FF_q$ be a subset with $|S|=\rho q$, and $Q\ge 2$ be a real.  
It holds that  
\[
\frac{1}{q^{Q/2}}\sum_{a\in \FF_q\setminus \{0\}}|\hat{\one}_{S}(a)|^{Q}\le 
\begin{cases}
    \left(\rho(1-\rho)\right)^{\frac{4-Q}{2}}\left(\rho_0^3\left(\frac{2}{3}-\rho_0\right)\right)^{\frac{Q-2}{2}}+O(p^{-1}) 
    ~~~&Q\le 4\\
    \left(\frac{\sin(\rho \pi)}{\pi}\right)^{Q-4}\rho_0^3\left(\frac{2}{3}-\rho_0\right)+O(p^{-1}) &Q\ge 4.
\end{cases}
\]
where $\rho_0=\min\{\rho,1-\rho\}$.
\end{theorem}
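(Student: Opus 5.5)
Normalize by setting $f(a)=q^{-1/2}|\hat{\one}_S(a)|$ for $a\ne 0$, so that $f(a)=\frac{1}{q}\bigl|\sum_{x\in S}\omega_p^{\Tr(ax)}\bigr|$ and the left-hand side equals $\sum_{a\ne 0}f(a)^Q$. The plan rests on three moment-type estimates for $f$ on $\FF_q\setminus\{0\}$, combined by interpolation.

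The first estimate is the second moment. By Parseval (\Cref{lem:Parseval}), $\sum_{a}|\hat{\one}_S(a)|^2=\rho q$. Since $|\hat{\one}_S(0)|^2=\rho^2q$, this gives $\sum_{a\ne0}f(a)^2=\rho(1-\rho)$.

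The second estimate is the maximum, namely $\max_{a\ne0}f(a)\le \frac{\sin(\rho\pi)}{\pi}+O(p^{-1})$. This is the Fourier bias bound attributed to \cite{SunWootters26}, which the overview calls \Cref{lem:Fourier_max}; there it is stated as $|\mu(a)|\le\frac{\sin\rho\pi}{\rho\pi}+O(p^{-2})$, and $f=\rho|\mu|$. For extension fields, I would reduce to the prime-field case by pushing forward along $x\mapsto \Tr(ax)$. This writes $\sum_{x\in S}\omega_p^{\Tr(ax)}=\sum_{c\in\FF_p}m_c\,\omega_p^{c}$ with $0\le m_c\le q/p$ and $\sum_c m_c=|S|$. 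The modulus of such a sum is maximized by placing full mass on a contiguous arc of residues, which yields $\frac{q}{p}\bigl|\sin(\pi\rho p)/\sin(\pi/p)\bigr|\cdot$ and hence $\frac{\sin\rho\pi}{\pi}+O(p^{-1})$ after normalizing.

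The third estimate, and the main obstacle, is the fourth moment: $\sum_{a\ne0}f(a)^4\le \rho_0^3(\tfrac23-\rho_0)+O(p^{-1})$. First, $\sum_a|\hat{\one}_S(a)|^4=E(S)/q^{0}\cdot q^{-2}\cdot q^{2}$ up to normalization. Precisely, $\sum_{a}f(a)^4\cdot q^{0}=E(S)/q^3$, where $E(S)=|\{(x_1,x_2,x_3,x_4)\in S^4:x_1+x_2=x_3+x_4\}|$ is the additive energy. Subtracting the $a=0$ term $\rho^4$ leaves us needing $E(S)\le q^3(\rho^4+\rho_0^3(\frac23-\rho_0))+O(q^3/p)$. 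Complementation preserves the nonzero Fourier coefficients up to sign, so we may assume $\rho\le1/2$, i.e.\ $\rho_0=\rho$. The target then becomes $E(S)\lesssim \frac{2}{3}\rho^3q^3$ minus lower-order corrections, which is the energy of an arithmetic progression of length $\rho q$ inside $\FF_p$. Concretely, an interval $[0,\rho p)$ has energy $\approx \frac23(\rho p)^3$; the claimed bound $\rho^4+\rho^3(\frac23-\rho)=\frac23\rho^3$ matches exactly. So the needed statement is the rearrangement-type inequality $E(S)\le \frac23|S|^3+O(q^3/p)$ for $S$ in any $\FF_q$ with $|S|\le q/2$. I would try to prove it via $E(S)=\sum_x r(x)^2$ with $r=\one_S*\one_{-S}$. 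Since $\sum r=|S|^2$, and the level sets satisfy $|\{x:r(x)\ge t\}|\ge 2t-|S|$ (a Cauchy–Davenport/Kneser-type lower bound, valid up to $O(q/p)$ error from subgroups of $\FF_q$), the bathtub principle forces $\sum r^2$ to be at most its value for the triangular profile, which is $\frac23|S|^3$. The subgroup issue in extension fields is where I expect the real difficulty; I would handle it with Kneser's theorem, noting that proper $\FF_p$-subspaces have index at least $p$, which produces the $O(p^{-1})$ loss. If that route stalls, the fallback is Pollard's theorem for sums of level sets.

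Finally, the three estimates are combined by interpolation. For $2\le Q\le4$, Hölder's inequality (\Cref{lem:Holder}) with $\theta=\frac{Q-2}{2}$ gives $\sum f^Q\le(\sum f^2)^{(4-Q)/2}(\sum f^4)^{(Q-2)/2}$, which is the first case. For $Q\ge4$, we use $\sum f^Q\le(\max f)^{Q-4}\sum f^4$, which is the second case. The $O(p^{-1})$ errors propagate with constants depending only on $\rho$ and $Q$.
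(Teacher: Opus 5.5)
Your overall plan matches the paper's: Parseval for the second moment, additive energy for the fourth moment, a Fourier-bias bound for the maximum, and then H\"older (for $Q\le 4$) or max-times-fourth-moment (for $Q\ge 4$). Your extreme-point argument for the maximum is a valid, self-contained alternative to the paper's route, which decomposes into level sets, applies the prime-field bound, and then uses Jensen on $\sin(\pi u)$. One fix there: the arc value should read $\frac{q}{p}\sin(\pi\rho)/\sin(\pi/p)$. The fractional bin costs you $O(p^{-1})$ rather than $O(p^{-2})$, which still suffices here.

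\textbf{The gap is in the fourth-moment step, which is the heart of the theorem.} The level-set inequality you rely on, $|\{x: r(x)\ge t\}|\ge 2t-|S|$, is false. At $t=|S|$ it demands $|S|$ shifts $x$ with $S+x=S$, but for an interval only $x=0$ qualifies. Even the triangular-profile version $|\{x:r(x)\ge t\}|\ge 2|S|-2t+1$ fails for individual $t$. Take $S=\{0,1,3\}$ in $\mathbb{Z}/p\mathbb{Z}$ with $r=\one_S*\one_{-S}$: then $r(0)=3$ and $r(x)\le 1$ for all $x\ne 0$, so $|\{r\ge 2\}|=1<3$. Only the cumulative statement holds, namely Pollard's theorem $\sum_x\min\{t,r(x)\}\ge t(2|S|-t)$. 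So a bathtub argument over individual level sets has no valid input to work with.

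What is missing is the conversion of this cumulative bound into an energy bound, together with a version valid in $\FF_p^e$. The paper uses the identity $r(x)^2=r(x)+2\sum_{t=1}^{|S|-1}\max\{r(x)-t,0\}$. This gives $E(S)=|S|^2+2\sum_{t}\bigl(|S|^2-\sum_x\min\{t,r(x)\}\bigr)$. Each bracket is then bounded by $(|S|-t)^2+tp^{e-1}$ using Green--Ruzsa's extension of Pollard to arbitrary finite abelian groups: $\sum_x\min\{t,r(x)\}\ge t\min\{|G|,2|S|-D(G)-t\}$, where $D(G)=p^{e-1}$ is the largest proper subgroup size. Summing over $t$ yields $E(S)\le\frac{2}{3}|S|^3+O(q^3/p)$. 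This is exactly the bound you correctly identified as the target. Invoking Kneser's theorem does not substitute for the Green--Ruzsa step. Kneser controls only the support of $r$ (the case $t=1$), not the truncated sums for every $t$, so by itself it does not deliver the $O(p^{-1})$ loss in extension fields. You mention Pollard as a fallback, and it is the right tool, but the argument only goes through with the identity above and the abelian-group version of Pollard.
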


To prove \Cref{thm:Fourier_Qth_power_sum}, we prove the following lemmas.

\begin{lemma}\label{lem:Fourier_quadratic_sum}
    Let $q$ be a prime power and $S\subseteq \FF_q$ be a subset with $|S|=\rho q$.  
It holds that  
\[
\frac{1}{q}\sum_{a\in \FF_q\setminus \{0\}}|\hat{\one}_{S}(a)|^2= \rho(1-\rho). 
\]
\end{lemma}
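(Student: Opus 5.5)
The plan is to apply Parseval's identity (\Cref{lem:Parseval}) with $n=1$ and $f=g=\one_S$, and then split off the zero frequency. With the paper's normalization $\hat{f}(z)=q^{-1/2}\sum_x f(x)\omega_p^{\Tr(xz)}$, Parseval reads
\[
\sum_{a\in\FF_q}|\hat{\one}_S(a)|^2=\sum_{x\in\FF_q}|\one_S(x)|^2=|S|=\rho q .
\]

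Next I would compute the zero-frequency term directly. Since $\omega_p^{\Tr(0)}=1$, we have $\hat{\one}_S(0)=q^{-1/2}|S|=\rho\sqrt{q}$. Hence $|\hat{\one}_S(0)|^2=\rho^2 q$.

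Subtracting this from the Parseval identity gives
\[
\sum_{a\neq 0}|\hat{\one}_S(a)|^2=\rho q-\rho^2 q=\rho(1-\rho)q .
\]
Dividing by $q$ yields exactly $\rho(1-\rho)$, which is the claimed identity.

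There is no real obstacle here. The only point needing care is the normalization of the Fourier transform: the factor $q^{-1/2}$ is what makes Parseval an exact isometry, so the final division by $q$ lands precisely on $\rho(1-\rho)$ with no stray constants.
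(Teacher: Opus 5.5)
Your proposal is correct and is essentially the paper's own proof: Parseval gives $\sum_{a}|\hat{\one}_S(a)|^2=|S|$, the zero-frequency term is $|S|^2/q$, and subtracting and dividing by $q$ yields $\rho(1-\rho)$.
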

\begin{proof}
We have 
\begin{align*}
\sum_{a\in \FF_q}|\hat{\one}_{S}(a)|^2=\sum_{a\in \FF_q}|\one_{S}(a)|^2=|S|,
\end{align*}
and 
\begin{align*}
|\hat{\one}_{S}(0)|^2=\left|\frac{1}{\sqrt{q}}\sum_{x\in \FF_q}\one_S(x)\right|^2=\frac{|S|^2}{q}.
\end{align*}
The lemma follows from them. 
\end{proof}

\begin{lemma}\label{lem:Fourier_quartic_sum}
    Let $q$ be a prime power and $S\subseteq \FF_q$ be a subset with $|S|=\rho q$.  
It holds that  
\[
\frac{1}{q^2}\sum_{a\in \FF_q\setminus \{0\}}|\hat{\one}_{S}(a)|^4\le \rho_0^3\left(\frac{2}{3}-\rho_0\right)+O(p^{-1})
\]
where $\rho_0=\min\{\rho,1-\rho\}$.
\end{lemma}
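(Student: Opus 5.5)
The plan is to rewrite the fourth moment as an additive energy, reduce to $\rho\le 1/2$ by complementation, and bound the energy with a Pollard-type estimate on representation counts. The step I am least sure of is the Pollard-type estimate over $\FF_q$ itself (the third paragraph).

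\textbf{Reduction to additive energy.} Expanding $|\hat{\one}_S(a)|^4$ and summing over all $a\in\FF_q$ using \Cref{eq:sum_is_zero} gives
\[
\sum_{a\in\FF_q}|\hat{\one}_S(a)|^4=\frac{E(S)}{q}, \qquad E(S)=\#\{(x_1,x_2,x_3,x_4)\in S^4: x_1+x_2=x_3+x_4\}.
\]
The $a=0$ term is $|S|^4/q^2=\rho^4q^2$. Hence
\[
\frac{1}{q^2}\sum_{a\ne 0}|\hat{\one}_S(a)|^4=\frac{E(S)}{q^3}-\rho^4 .
\]
For $a\neq 0$ we have $\hat{\one}_{\FF_q\setminus S}(a)=-\hat{\one}_S(a)$, so the left-hand side is invariant under $S\mapsto \FF_q\setminus S$. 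We may therefore assume $\rho=\rho_0\le 1/2$. The target then becomes
\[
E(S)\le \left(\rho^4+\tfrac23\rho^3-\rho^4\right)q^3+O(q^3/p)=\tfrac23|S|^3+O(q^3/p),
\]
which is the classical statement that an arithmetic progression essentially maximizes additive energy.

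\textbf{Energy via a Pollard-type inequality.} Write $r(z)=|S\cap(z-S)|$, so that $E(S)=\sum_z r(z)^2$, with $\sum_z r(z)=|S|^2$ and $0\le r(z)\le |S|$. Let $N_i=\#\{z:r(z)\ge i\}$. Then
\[
\sum_{i\le t}N_i=\sum_z\min(r(z),t), \qquad E(S)=\sum_{i\ge1}(2i-1)N_i .
\]
Suppose we have a lower bound of Pollard type,
\[
\sum_{i\le t}N_i\ \ge\ t\,(2|S|-t)-O(tq/p)\qquad\text{for all } 1\le t\le |S| .
\]
Since the total $\sum_i N_i=|S|^2$ is fixed and the weights $2i-1$ are increasing, Abel summation shows that $E(S)$ is largest when the partial sums $\sum_{i\le t}N_i$ are as small as allowed. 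That is the profile $N_i\approx 2|S|-2i+1$, and it yields
\[
E(S)\le\sum_{i=1}^{|S|}(2i-1)(2|S|-2i+1)+O(|S|^2q/p)=\tfrac23|S|^3+O(q^3/p).
\]
Here $|S|\le q/2$ guarantees that the constraint $\min(q,\cdot)$ in Pollard's theorem never binds.

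\textbf{Main obstacle: Pollard over $\FF_q$.} Establishing the displayed Pollard-type inequality in $\FF_q=\FF_p^e$ with only an $O(tq/p)$ loss is the crux; for $e=1$ it is exactly Pollard's theorem in $\mathbb{Z}_p$. For general $e$, I would invoke the Kneser-type generalization of Pollard's theorem to abelian groups (Green--Ruzsa / Grynkiewicz). It gives $\sum_{i\le t}N_i\ge t(2|S|-t-|H|)$ up to the usual correction, where $H$ is the stabilizer of the relevant sumset. When $H$ is trivial this is what we need, since the correction is then $O(t)$. When $H$ is nontrivial, it is an $\FF_p$-subspace of size at least $p$. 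In that case I would reduce to the quotient $\FF_q/H$: $S$ is then, up to $O(q/p)$ elements per coset, a union of $H$-cosets, and its energy is $|H|^3$ times the energy of its image. Iterating on the quotient, the total loss stays $O(q^3/p)$, because each iteration costs a relative error of order $1/p$ and the cosets are large.

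If this bookkeeping becomes messy, I would try a fallback: a direct compression or rearrangement argument on $\FF_p^e$ along a chain of subspaces, replacing $S$ by a set of product form $H\times(\text{interval})$. That would recover the bound $\tfrac23|S|^3$, which such sets attain.
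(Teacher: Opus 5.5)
Your route matches the paper's. You use the identity $\sum_a|\hat{\one}_S(a)|^4=E(S)/q$, complementation to reach $\rho\le 1/2$, a Pollard-type lower bound on $\sum_x\min\{t,r(x)\}$, and then summation over $t$. Your Abel-summation step is the paper's identity $E(S)=|S|^2+2\sum_{t=1}^{|S|-1}\sum_x\max\{r(x)-t,0\}$, and your extremal profile gives the same main term $\frac{2|S|^3+|S|}{3}$.

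The only gap is the step you flagged, and it closes in one line using the theorem you already name. The Green--Ruzsa generalization of Pollard (\Cref{thm:green_ruzsa}) gives $\sum_x\min\{t,r_{A,B}(x)\}\ge t\cdot\min\{|G|,|A|+|B|-D(G)-t\}$, where $D(G)$ is the size of the largest proper subgroup. For $(\FF_q,+)\cong(\mathbb{Z}/p\mathbb{Z})^e$ this is $D(G)=p^{e-1}=q/p$. Taking $A=B=S$ with $|S|\le q/2$, the minimum is attained by the second term. Hence $\sum_x\min\{t,r(x)\}\ge t(2|S|-t)-tq/p$. This is exactly your displayed inequality, with constant $1$ and no case analysis.

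The trivial-versus-nontrivial stabilizer dichotomy is therefore a red herring. Every proper subgroup $H$ of $\FF_q$ has $|H|\le q/p$, so a loss of $t|H|$ already fits inside your $O(tq/p)$ budget.

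You should drop the quotient-iteration sketch, because as written it is not sound:
\begin{itemize}
\item A Kneser-type stabilizer of a sumset says nothing about $S$ itself being close to a union of $H$-cosets. The phrase ``$O(q/p)$ exceptions per coset'' is vacuous once $|H|\le q/p$.
\item The identity $E(S)=|H|^3E(\bar S)$ holds only for exact unions of cosets.
\item Nothing controls how the $O(1/p)$ losses accumulate over up to $e$ levels, whereas the lemma's $O(p^{-1})$ must be uniform in $e$.
\end{itemize}
The compression fallback is likewise unnecessary.
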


We prove \Cref{lem:Fourier_quartic_sum} in \Cref{sec:proof_quartic_sum} using results from additive combinatorics. 

\begin{lemma}\label{lem:Fourier_max}
    Let $q=p^e$ be a prime power and $S\subseteq \FF_q$ be a subset with $|S|=\rho q$. 
  For any   $a\in \FF_q\setminus \{0\}$, 
it holds that  
\[
\frac{|\hat{\one}_{S}(a)|}{\sqrt{q}}\le\frac{\sin(\rho \pi)}{\pi} + O(p^{-2}).
\]
\end{lemma}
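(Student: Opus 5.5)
We need to bound $|\sum_{x\in S}\omega_p^{\Tr(ax)}|$ by $q\sin(\rho\pi)/\pi + O(q/p^{2})$. The plan is to reduce to a weighted distribution on $\FF_p$ and then apply a rearrangement-type extremal argument on the circle.

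\textbf{Step 1: reduce to $\FF_p$.} Fix $a\ne 0$. The map $x\mapsto \Tr(ax)$ is a surjective $\FF_p$-linear map $\FF_q\to\FF_p$, so every fiber has size $q/p$. Let $m_j=|\{x\in S:\Tr(ax)=j\}|$ for $j\in\FF_p$. Then
\[
0\le m_j\le q/p, \qquad \sum_j m_j=\rho q, \qquad \hat{\one}_S(a)=\frac{1}{\sqrt q}\sum_{j\in\FF_p}m_j\omega_p^{j}.
\]
Setting $w_j=m_j p/q\in[0,1]$, we get
\[
\frac{|\hat{\one}_S(a)|}{\sqrt q}=\frac{1}{p}\Bigl|\sum_j w_j\omega_p^j\Bigr|,
\]
with $\sum_j w_j=\rho p$. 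So it suffices to prove
\[
\frac1p\Bigl|\sum_j w_j\omega_p^j\Bigr|\le \frac{\sin(\rho\pi)}{\pi}+O(p^{-2})
\]
for all weights $w\in[0,1]^p$ with $\sum_j w_j=\rho p$.

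\textbf{Step 2: the extremal configuration.} Choose a phase $\theta$ so that the sum equals $\sum_j w_j\cos(2\pi j/p-\theta)$. This is a linear functional in $w$ over the polytope $[0,1]^p\cap\{\sum_j w_j=\rho p\}$. Its maximum is attained by a greedy choice: put weight $1$ on the $\lfloor\rho p\rfloor$ indices with the largest $\cos(2\pi j/p-\theta)$, and the fractional remainder on the next index. These indices form a contiguous arc of $p$-th roots of unity centred at $\theta$, plus one fractional endpoint. For a full arc of $L$ consecutive roots, the geometric sum has modulus
\[
\frac{|\sin(\pi L/p)|}{|\sin(\pi/p)|}.
\]

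\textbf{Step 3: estimate, and the main obstacle.} Divide by $p$. We have $p\sin(\pi/p)=\pi(1-\pi^2/(6p^2)+\dots)$, so
\[
\frac{\sin(\pi L/p)}{p\sin(\pi/p)}=\frac{\sin(\pi L/p)}{\pi}\bigl(1+O(p^{-2})\bigr).
\]
With $L=\rho p$ this gives the main term $\sin(\rho\pi)/\pi$.

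The main obstacle is the non-integrality of $\rho p$: the fractional endpoint weight. Bounding it crudely, by $|\sin(\pi\lfloor\rho p\rfloor/p)-\sin(\rho\pi)|$ plus the extra term of size $1/p$, loses $O(p^{-1})$, not $O(p^{-2})$. To get $O(p^{-2})$, I would optimize over $\theta$ as well. The maximum over $\theta$ of the greedy value is a concave-like interpolation, and one can compare it with the continuous integral
\[
\frac{1}{2\pi}\int_{-\rho\pi}^{\rho\pi}\cos t\,dt=\frac{\sin(\rho\pi)}{\pi}.
\]
The idea is to view the optimal weighted sum as a midpoint-rule discretization of this integral, whose error is $O(p^{-2})$, including the fractional endpoint handled symmetrically by splitting it across both ends with the best $\theta$.

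If this sharper comparison proves delicate, I would fall back on noting that $\rho q=|S|$ is an integer. Then, when $q=p$, $\rho p$ is an integer and the computation in Step 3 is exact. In general, with the constant depending on $\rho$, the $O(p^{-2})$ should follow from the midpoint-rule bound, since the endpoint contribution enters at second order near the optimal $\theta$.
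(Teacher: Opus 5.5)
Your Step 1 is exactly the paper's reduction: your $m_j$ are the paper's $c_t$. Steps 2--3 correctly handle the case where $\rho p$ is an integer, so in effect you reprove the prime-field fact that the paper simply cites from Sun--Wootters.

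The gap is the case $e\ge 2$, where $\rho p=|S|/p^{e-1}$ is typically not an integer and a fractional weight appears. This is precisely the part of the lemma that goes beyond the cited fact. For it you offer only a heuristic (``midpoint-rule discretization'', ``should follow'', ``enters at second order'') rather than an argument. Your fallback does not close the gap either: integrality of $|S|=\rho q$ gives integrality of $\rho p$ only when $e=1$.

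The missing idea is concavity of $u\mapsto\sin(\pi u)$ on $[0,1]$; no optimization over $\theta$ is needed. Take your extremal vertex, with ones on a set $A$ of size $L=\lfloor\rho p\rfloor$ and weight $f=\rho p-L$ on one further index $j_0$. Write $w=(1-f)\one_A+f\one_{A\cup\{j_0\}}$, a convex combination of indicators of sets of sizes $L$ and $L+1$. Apply the triangle inequality to this decomposition, rather than to ``arc plus one point'' (which is what costs you order $p^{-1}$). Then apply your integer-size bound to each piece. This gives
\[
\Bigl|\sum_j w_j\omega_p^j\Bigr|\le\frac{(1-f)\sin(\pi L/p)+f\sin(\pi(L+1)/p)}{\sin(\pi/p)}\le\frac{\sin(\pi\rho)}{\sin(\pi/p)}.
\]
Dividing by $p$ leaves only the error from $p\sin(\pi/p)=\pi+O(p^{-2})$, which is uniform in the set size.

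The paper does the same thing in layer-cake form, which also makes your vertex analysis unnecessary. It writes $\sum_{x\in S}\omega_p^{-\Tr(xa)}=\sum_{j=1}^{p^{e-1}}\sum_{t\in T_j}\omega_p^{-t}$, with level sets $T_j=\{t: c_t\ge j\}$ of integer size. It applies the prime-field bound to each $T_j$ and recombines via Jensen, using that the average of $|T_j|/p$ over $j$ is exactly $\rho$.
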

\Cref{lem:Fourier_max} was proven in 
\cite[Fact 4.12]{SunWootters26} for the special case where $q=p$ is a prime. 
 It can be easily generalized to the case of prime powers. We give a proof in \Cref{sec:proof_Fourier_max} for completeness.  

Below, we prove \Cref{thm:Fourier_Qth_power_sum}  using \Cref{lem:Fourier_quadratic_sum,lem:Fourier_quartic_sum,lem:Fourier_max}. 

\begin{proof}[Proof of \Cref{thm:Fourier_Qth_power_sum}]
First, we prove the case of $2\le Q\le 4$.

Define $\alpha=\frac{Q-2}{2}$.
Then we have $0\le \alpha\le 1$ and 
$Q=2(1-\alpha)+4\alpha$.

Then we have 
\begin{align*}
\frac{1}{q^{Q/2}}\sum_{a\in \FF_q\setminus \{0\}}|\hat{\one}_{S}(a)|^{Q}
&=\frac{1}{q^{Q/2}}\sum_{a\in \FF_q\setminus \{0\}}\left(|\hat{\one}_{S}(a)|^2\right)^{1-\alpha}\left(|\hat{\one}_{S}(a)|^4\right)^{\alpha}\\
&\le \left(\frac{1}{q}\sum_{a\in \FF_q\setminus \{0\}}|\hat{\one}_{S}(a)|^2\right)^{1-\alpha}\left(\frac{1}{q^2}\sum_{a\in \FF_q\setminus \{0\}}|\hat{\one}_{S}(a)|^4\right)^{\alpha}\\
&\le \left(\rho(1-\rho)\right)^{1-\alpha}\left(\rho_0^3\left(\frac{2}{3}-\rho_0\right)\right)^{\alpha}+O(p^{-1}) 
\end{align*}
where the first inequality follows from H\"{o}lder's inequality (\Cref{lem:Holder}) and the second inequality follows from \Cref{lem:Fourier_quadratic_sum,lem:Fourier_quartic_sum}. 
This completes the proof for the case of $2\le Q \le 4$ noting that 
$1-\alpha=\frac{4-Q}{2}$ and $\alpha=\frac{Q-2}{2}$.

 Next, we prove the case of $Q\ge 4$. We have 
 \begin{align*}
\frac{1}{q^{Q/2}}\sum_{a\in \FF_q\setminus \{0\}}|\hat{\one}_{S}(a)|^{Q}
&\le \left(\max_{a\in \FF_q\setminus\{0\}}\left\{
\frac{|\hat{\one}_{S}(a)|}{\sqrt{q}}
\right\}\right)^{Q-4}\left(\frac{1}{q^2}\sum_{a\in \FF_q\setminus \{0\}}|\hat{\one}_{S}(a)|^4\right)\\
&\le \left(\frac{\sin(\rho \pi)}{\pi}\right)^{Q-4}\rho_0^3\left(\frac{2}{3}-\rho_0\right)+O(p^{-1}), 
\end{align*}
where the second inequality follows from \Cref{lem:Fourier_max,lem:Fourier_quartic_sum}.

This completes the proof of \Cref{thm:Fourier_Qth_power_sum}. 
\end{proof}

\subsection{Proof of  \Cref{lem:Fourier_quartic_sum}}\label{sec:proof_quartic_sum}
In this section, we prove \Cref{lem:Fourier_quartic_sum}. 
We first review necessary background on additive combinatorics.  
\paragraph{Preliminaries from additive combinatorics.}

We review notations and results from additive combinatorics 

Let $(G,+)$ be a finite abelian group.
For  subsets 
$A,B\subseteq G$ and 
$x\in G$, define 
\[
r_{A,B}(x)=|\{(a,b)\in A\times B:a+b=x\}|.
\]

The following theorem was shown by Green and Ruzsa~\cite{GR05} as a generalization of Pollard's theorem~\cite{Pollard1974} and Kneser's theorem~\cite{Kneser1953}.
\begin{theorem}[{\cite[Proposition 6.1]{GR05}}]\label{thm:green_ruzsa}
Let $(G,+)$ be a finite abelian group, $A,B\subseteq G$ be subsets, and 
$t$ be a positive integer with $t\le \min\{|A|,|B|\}$.
 Then it holds that  
\[
\sum_{x\in G}\min\{t,r_{A,B}(x)\}\ge t\cdot \min\{|G|,|A|+|B|-D(G)-t\},
\]
where $D(G)$ is the maximum size of a proper subgroup of $G$. 
\end{theorem}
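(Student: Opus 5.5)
Since $\sum_{x}\min\{t,r_{A,B}(x)\}=\sum_{j=1}^{t}|\{x: r_{A,B}(x)\ge j\}|$, I will write $N_t(A,B)$ for this quantity and prove the bound by the transform method Pollard used for $\mathbb{Z}/p$. Kneser's theorem will play the role that Cauchy--Davenport plays in the prime case.

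\textbf{Base case $t=1$.} Here $N_1(A,B)=|A+B|$. By Kneser's theorem, with $H$ the stabilizer of $A+B$,
\[
|A+B|\ge |A+H|+|B+H|-|H|\ge |A|+|B|-|H|.
\]
If $H=G$ then $A+B=G$ and $N_1=|G|$. Otherwise $|H|\le D(G)$, so $|A+B|\ge |A|+|B|-D(G)$. Either way $N_1\ge\min\{|G|,|A|+|B|-D(G)-1\}$, with one unit to spare.

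\textbf{General $t$: the Dyson $e$-transform.} For $e\in G$ put $A_e=A\cup(B+e)$ and $B_e=B\cap(A-e)$. Then $|A_e|+|B_e|=|A|+|B|$, and $B_e+e\subseteq A$. I will check the pointwise bound
\[
\min\{t,r_{A_e,B_e}(x)\}\le \min\{t,r_{A,B}(x)\}\quad\text{for every }x.
\]
This should follow from an injection of representations in the style of the proof of Cauchy--Davenport. I expect this to be the main obstacle: the truncation at $t$ must be handled exactly as Pollard does, and one has to ensure the transform preserves $t\le\min\{|A_e|,|B_e|\}$. When the transform would shrink $B_e$ below $t$, I will instead stop and argue directly. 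In that case many $x$ have $r_{A,B}(x)\ge |B_e|$, and the terms with large multiplicity already give the bound. Concretely, I will compare with the function $j\mapsto |\{x:r(x)\ge j\}|$, which is nonincreasing, and apply Kneser to the level sets.

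\textbf{Terminal case and conclusion.} Iterate the transform, choosing $e$ so that $|B_e|<|B|$, until no such $e$ exists. Then $B+e\subseteq A$ for every $e\in A-B$, i.e.\ $A$ is stable under adding $B-B$. Let $K$ be the subgroup generated by $B-B$. Then $A$ is a union of $K$-cosets and $B$ lies in a single coset of $K$. For each $x\in A+B$ we then get $r_{A,B}(x)=|B|\ge t$, so
\[
N_t(A,B)=t\,|A+B|=t\,|A+K|\ge t|A|.
\]
If $K=G$, then $A=G$ and $N_t=t|G|$. Otherwise $|B|\le|K|\le D(G)$, so $t|A|\ge t(|A|+|B|-D(G)-t)$. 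Since $N_t$ never increased during the iteration and $|A|+|B|$ was invariant, the inequality transfers back to the original pair $(A,B)$.
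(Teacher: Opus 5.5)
There is a genuine gap, and it sits in the step you defer: what to do when every available transform has $|B_e|<t$. This is the generic situation, not an edge case. Take $G=\mathbb{Z}/p$, let $A,B$ be random sets of size about $p/2$, and let $t\approx p/3$. Then every nonempty $B_e=B\cap(A-e)$ has size about $p/4<t$. So your iteration cannot take a single step, and the entire theorem has to come from the ``stop and argue directly'' clause.

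That clause contains no argument. Nothing forces many $x$ to satisfy $r_{A,B}(x)\ge |B_e|$. Kneser's theorem also cannot be applied to the level sets $\{x:r_{A,B}(x)\ge j\}$, because they are not sumsets.

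By contrast, the step you flag as the main obstacle is routine: the injection you have in mind exists, so $r_{A_e,B_e}\le r_{A,B}$ holds pointwise. One small fix is also needed: restrict to $e\in A-B$. Otherwise $B_e=\emptyset$, which always satisfies $|B_e|<|B|$, and your terminal condition is never reached.

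The missing idea is Pollard's: the transform splits the representations exactly. Let $s=|B_e|$, $A'=A\setminus(B+e)$ and $B'=B\setminus(A-e)$. Then one checks $r_{A,B}=r_{A_e,B_e}+r_{A',B'}$ pointwise, with $|A'|=|A|-s$ and $|B'|=|B|-s$. Induct on $|B|$, for all $A$ and all admissible $t$ simultaneously, and let $m=|A|+|B|$.

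If $t\le s<|B|$, use $(A_e,B_e)$ exactly as you do.

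If $1\le s<t$, then since $r_{A_e,B_e}\le s$,
\[
\min\{t,r_{A,B}(x)\}\ge r_{A_e,B_e}(x)+\min\{t-s,r_{A',B'}(x)\}.
\]
The first term sums over $x$ to $|A_e||B_e|=s(m-s)$. For the second term, apply the induction hypothesis to $(A',B')$ with parameter $t-s$. This is admissible because $t-s\le\min\{|A'|,|B'|\}$ and $|B'|<|B|$. It bounds the second sum below by $(t-s)\min\{|G|,m-s-t-D(G)\}$.

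Pigeonhole gives $s=|A\cap(B+e)|\ge m-|G|$, so this minimum equals $m-s-t-D(G)$. The total is then $t(m-t-D(G))+sD(G)\ge t\min\{|G|,m-t-D(G)\}$.

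Your terminal-case analysis (no $e$ with $1\le|B_e|<|B|$) then closes the induction. It also covers $t=1$, so Kneser's theorem is not needed. Without this splitting, or an equivalent device, the proposal does not prove the statement.
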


In particular, for a prime power $q=p^e$, $\FF_q$ is isomorphic to $(\mathbb{Z}/(p\mathbb{Z}))^e$ as an additive group. 
Clearly, the largest proper subgroup of $(\mathbb{Z}/(p\mathbb{Z}))^e$ is $(\mathbb{Z}/(p\mathbb{Z}))^{e-1}$, of which size is $p^{e-1}$.
Also, when $A=B=S$ for some $S\subseteq \FF_q$ with $|S|\le q/2$, we have $|A|+|B|-p^{e-1}-t=2|S|-p^{e-1}-t<q$ for any positive integer $t$. 
Thus, the above theorem immediately implies the following corollary. 

\begin{corollary}\label{cor:green_ruzsa}
Let $q=p^e$ be a prime power,  $S\subseteq \FF_q$ be a subset with $|S|\le q/2$, and 
$t$ be a positive integer with $t\le |S|$.
 Then it holds that  
\[
\sum_{x\in \FF_q}\min\{t,r_{S,S}(x)\}\ge t\cdot (2|S|-p^{e-1}-t).
\]
\end{corollary}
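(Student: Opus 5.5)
The corollary follows by specializing \Cref{thm:green_ruzsa} to $G=(\FF_q,+)$ and $A=B=S$; the plan is to check the hypotheses and then simplify the minimum on the right-hand side.

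\textbf{Hypotheses and $D(G)$.} The condition $t\le\min\{|A|,|B|\}=|S|$ is exactly the assumption $t\le |S|$ of the corollary. Next I identify $D(\FF_q)$. As an additive group, $\FF_q\cong(\mathbb{Z}/p\mathbb{Z})^e$, which is a vector space over $\FF_p$. Every subgroup is an $\FF_p$-subspace, since it is closed under repeated addition, which gives scalar multiplication by elements of $\FF_p$. A proper subspace has dimension at most $e-1$, and a hyperplane attains this, so $D(\FF_q)=p^{e-1}$.

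\textbf{Simplifying the minimum.} Substituting gives
\[
\sum_{x\in\FF_q}\min\{t,r_{S,S}(x)\}\ \ge\ t\cdot\min\{q,\,2|S|-p^{e-1}-t\}.
\]
Because $|S|\le q/2$ and $t\ge 1$, we have $2|S|-p^{e-1}-t\le q-p^{e-1}-1<q$. So the minimum equals $2|S|-p^{e-1}-t$, and this is exactly the claimed bound.

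\textbf{Sign caveat.} If $2|S|-p^{e-1}-t$ is negative, the right-hand side is negative and the inequality holds trivially, since the left-hand side is nonnegative. The statement of \Cref{thm:green_ruzsa} as quoted already covers this case, so no extra work is needed.

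The only point that needs any care is the identification $D(\FF_q)=p^{e-1}$; everything else is direct substitution.
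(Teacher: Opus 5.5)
Your proposal is correct and follows essentially the same route as the paper: apply \Cref{thm:green_ruzsa} with $G=\FF_q$ and $A=B=S$, use $D(\FF_q)=p^{e-1}$, and observe that $|S|\le q/2$ and $t\ge 1$ give $2|S|-p^{e-1}-t<q$, so the minimum is the second term. Your explicit justification of $D(\FF_q)=p^{e-1}$ (subgroups are $\FF_p$-subspaces, and hyperplanes are maximal) fills in the step the paper calls ``clear.''
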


Let $q$ be a prime power.
For $S\subseteq \FF_q$, define its additive energy
\[
E(S)=|\{(x_1,x_2,x_3,x_4)\in S^4: x_1+x_2=x_3+x_4\}|.
\]

It is well-known that additive energy can be written as quartic sum of Fourier coefficients.  We provide a proof for completeness. 
\begin{lemma}\label{lem:additive_energy}
    For a prime power $q$ and $S\subseteq \FF_q$, 
    it holds that 
    \[E(S)= q\sum_{a\in \FF_q}|\hat{\one}_{S}(a)|^4\]
\end{lemma}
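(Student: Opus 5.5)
We will prove the identity $E(S)=q\sum_{a\in\FF_q}|\hat{\one}_S(a)|^4$ by expanding the right-hand side with the definition of the Fourier transform and collapsing the resulting character sums with the orthogonality relation \Cref{eq:sum_is_zero}. By definition, for $a\in\FF_q$,
\[
\hat{\one}_S(a)=\frac{1}{\sqrt{q}}\sum_{x\in S}\omega_p^{\Tr(xa)},
\qquad
\overline{\hat{\one}_S(a)}=\frac{1}{\sqrt{q}}\sum_{x\in S}\omega_p^{-\Tr(xa)}.
\]
Write $|\hat{\one}_S(a)|^4=\hat{\one}_S(a)^2\,\overline{\hat{\one}_S(a)}^2$. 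Expanding the product gives
\[
|\hat{\one}_S(a)|^4=\frac{1}{q^2}\sum_{x_1,x_2,x_3,x_4\in S}\omega_p^{\Tr\left((x_1+x_2-x_3-x_4)a\right)},
\]
where we use the $\FF_p$-linearity (in particular additivity) of $\Tr$ to combine the four exponents.

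Next we sum over $a\in\FF_q$ and exchange the order of summation. For each fixed quadruple, the inner sum $\sum_{a\in\FF_q}\omega_p^{\Tr(ya)}$ with $y=x_1+x_2-x_3-x_4$ equals $q$ if $y=0$, and equals $0$ otherwise by \Cref{eq:sum_is_zero} applied with $n=1$. Therefore
\[
\sum_{a\in\FF_q}|\hat{\one}_S(a)|^4=\frac{1}{q^2}\cdot q\cdot\left|\{(x_1,x_2,x_3,x_4)\in S^4:x_1+x_2=x_3+x_4\}\right|=\frac{E(S)}{q}.
\]
Multiplying by $q$ gives the claim.

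There is no real obstacle. The only points needing care are normalization and sign conventions: the factor $q^{-1/2}$ in $\hat f$ contributes $q^{-2}$ after taking the fourth power, and the orthogonality sum contributes $q$. Note also that the characteristic-$p$ setting requires the trace, which is why additivity of $\Tr$ is used rather than ordinary multiplication of integers.
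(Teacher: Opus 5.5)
Your proposal is correct and follows the paper's proof essentially line for line. Like the paper, you expand $|\hat{\one}_S(a)|^4=\hat{\one}_S(a)^2\,\overline{\hat{\one}_S(a)}^2$ into a quadruple sum over $S^4$, exchange the sums, and collapse $\sum_{a\in\FF_q}\omega_p^{\Tr(a(x_1+x_2-x_3-x_4))}$ to $q$ or $0$ using the orthogonality relation with $n=1$, which gives $\sum_a|\hat{\one}_S(a)|^4=E(S)/q$.
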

\begin{proof}
We have 
\begin{align*}
   \sum_{a\in \FF_q}|\hat{\one}_{S}(a)|^4
   &= \sum_{a\in \FF_q}\hat{\one}_{S}(a)^2\overline{\hat{\one}_{S}(a)^2}\\
   &=\sum_{a\in \FF_q}\left(\frac{1}{\sqrt{q}}\sum_{x_1\in \FF_q}\one_S(x_1)\omega_p^{\Tr(ax_1)}\right)
   \left(\frac{1}{\sqrt{q}}\sum_{x_2\in \FF_q}\one_S(x_2)\omega_p^{\Tr(ax_2)}\right)\\
   &\quad\left(\frac{1}{\sqrt{q}}\sum_{x_3\in \FF_q}\one_S(x_3)\omega_p^{-\Tr(ax_3)}\right)
   \left(\frac{1}{\sqrt{q}}\sum_{x_4\in \FF_q}\one_S(x_4)\omega_p^{-\Tr(ax_4)}\right)
   \\
   &=\frac{1}{q^2}\sum_{(x_1,x_2,x_3,x_4)\in S^4}\sum_{a\in \FF_q}
   \omega_p^{\Tr(a(x_1+x_2-x_3-x_4))}\\
   &=\frac{1}{q^2}\sum_{\substack{(x_1,x_2,x_3,x_4)\in S^4:\\
   x_1+x_2=x_3+x_4
   }}q\\
   &=\frac{E(S)}{q},
\end{align*}
where the fourth equality follows from \Cref{eq:sum_is_zero}.
\end{proof}

\paragraph{Bounding additive energy.}

By \Cref{lem:additive_energy}, 
proving \Cref{lem:Fourier_quartic_sum} is reduced to bounding additive energy. We do so by using \Cref{thm:green_ruzsa}. 

\begin{lemma}\label{lem:additive_energy_bound}
    For a prime power $q=p^e$ and $S\subseteq \FF_q$ with $|S|\le q/2$, 
    it holds that 
    \[E(S)\le  \frac{2|S|^3+3p^{e-1}|S|^2+(1-3p^{e-1})|S|}{3}.\]
\end{lemma}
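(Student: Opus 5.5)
Write $s=|S|$ and $r(x)=r_{S,S}(x)$. We have $E(S)=\sum_{x\in\FF_q} r(x)^2$, $\sum_x r(x)=s^2$, and $0\le r(x)\le s$. The plan is to convert \Cref{cor:green_ruzsa}, which controls the truncated sums $\sum_x\min\{t,r(x)\}$ for all $t$, into an upper bound on $\sum_x r(x)^2$. For this we use the layer-cake identities
\[
r(x)=\sum_{t=1}^{s}\one[r(x)\ge t],\qquad r(x)^2=\sum_{t=1}^{s}(2t-1)\one[r(x)\ge t],
\]
which hold since $r(x)\le s$. Set $N_t=|\{x: r(x)\ge t\}|$ and $F(t)=\sum_x\min\{t,r(x)\}=\sum_{j=1}^{t}N_j$. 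Then
\[
E(S)=\sum_{t=1}^{s}(2t-1)N_t .
\]

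Next, apply Abel summation. With $F(0)=0$ we have $N_t=F(t)-F(t-1)$, so
\[
E(S)=\sum_{t=1}^{s}(2t-1)(F(t)-F(t-1))=(2s-1)F(s)-2\sum_{t=1}^{s-1}F(t).
\]
Here $F(s)=\sum_x r(x)=s^2$ exactly. The coefficients of the $F(t)$ with $t<s$ are negative, so lower bounds on $F(t)$ give upper bounds on $E(S)$. \Cref{cor:green_ruzsa}, applicable because $|S|\le q/2$, gives $F(t)\ge t(2s-p^{e-1}-t)$ for $1\le t\le s$. Hence
\[
E(S)\le (2s-1)s^2-2\sum_{t=1}^{s-1}t(2s-p^{e-1}-t).
\]

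The remaining step is the calculation, using $\sum_{t=1}^{s-1}t=s(s-1)/2$ and $\sum_{t=1}^{s-1}t^2=(s-1)s(2s-1)/6$:
\[
2\sum_{t=1}^{s-1}t(2s-p^{e-1}-t)=(2s-p^{e-1})s(s-1)-\frac{(s-1)s(2s-1)}{3}.
\]
Subtracting this from $(2s-1)s^2$ and collecting terms should give $\frac{2s^3+3p^{e-1}s^2+(1-3p^{e-1})s}{3}$. As a check, the $s^3$ coefficient is $2-2+\tfrac{2}{3}=\tfrac23$, the $p^{e-1}$ part is $p^{e-1}(s^2-s)$, and the lower-order terms are $-s^2+2s^2-s^2+\tfrac{s}{3}$, matching the target.

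The step most likely to need care is the bookkeeping. One has to use the exact value $F(s)=s^2$ rather than the weaker corollary bound at $t=s$, and one has to confirm that the corollary is valid for every $t\le s$ even when $2s-p^{e-1}-t$ is negative. In that case it is trivially true, and the bound only loosens, which is harmless. Otherwise the argument is a direct summation.
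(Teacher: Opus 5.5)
Your proof is correct and is essentially the paper's argument. The paper writes $r(x)^2=r(x)+2\sum_{t=1}^{|S|-1}\max\{r(x)-t,0\}$ and uses $\max\{r(x)-t,0\}=r(x)-\min\{t,r(x)\}$, which gives exactly your identity $E(S)=(2s-1)s^2-2\sum_{t=1}^{s-1}F(t)$; it then applies \Cref{cor:green_ruzsa} for $1\le t\le s-1$ and the same power sums, so your layer-cake and Abel-summation derivation is only different bookkeeping for the same identity.
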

\begin{proof}
For $x\in \FF_q$, 
define 
\[r(x)=r_{S,S}(x)=|\{(a,b)\in S^2:a+b=x\}|.\]
Then we have 
\begin{align}
\sum_{x\in \FF_q}r(x)=|S|^2 \label{eq:sum_r}
\end{align}
and 
\begin{align}
\sum_{x\in \FF_q} r(x)^2=E(S). \label{eq:sum_r_square}
\end{align}

For $x\in \FF_q$, we have the following  identity 
\begin{align*}
r(x)^2&=r(x)+r(x)(r(x)-1)\\
&=r(x)+2\sum_{t=1}^{r(x)-1}(r(x)-t)\\
&=r(x)+2\sum_{t=1}^{|S|-1}\max\{r(x)-t,0\}.
\end{align*} 
Summing over $x\in \FF_q$, by \Cref{eq:sum_r,eq:sum_r_square}, 
we have
\begin{align}
E(S)=|S|^2+2\sum_{t=1}^{|S|-1}\sum_{x\in \FF_q}\max\{r(x)-t,0\}. \label{eq:energy_bound}
\end{align}

Below, we bound $\sum_{x\in \FF_q}\max\{r(x)-t,0\}$.

For any $t\in \{1,2,...,|S|-1\}$, 
we have 
\begin{align*}
\sum_{x\in \FF_q}\max\{r(x)-t,0\}
&=\sum_{x\in \FF_q}r(x)-\sum_{x\in \FF_q}\min(t,r(x))\\
&\le |S|^2-t(2|S|-p^{e-1}-t)\\
&=(|S|-t)^2+t p^{e-1}.
\end{align*}
where the inequality in the second line follows from \Cref{eq:sum_r}, \Cref{cor:green_ruzsa}, and $|S|\le q/2$. 

Substituting this into \Cref{eq:energy_bound}, we have
\begin{align*}
E(S)
&\le |S|^2+2\sum_{t=1}^{|S|-1}\left((|S|-t)^2+tp^{e-1}\right)\\
&=|S|^2+\frac{(|S|-1)\cdot|S|\cdot (2|S|-1)}{3}+(|S|-1)|S|p^{e-1}\\
&=\frac{2|S|^3+3p^{e-1}|S|^2+(1-3p^{e-1})|S|}{3}.
\end{align*}
This completes the proof of \Cref{lem:additive_energy_bound}.
\end{proof}

Given \Cref{lem:additive_energy,lem:additive_energy_bound}, the proof of \Cref{lem:Fourier_quartic_sum} is straightforward.
\begin{proof}[Proof of \Cref{lem:Fourier_quartic_sum}]
We first show the case of $\rho\le 1/2$. 
By a straightforward combination of \Cref{lem:additive_energy,lem:additive_energy_bound}, we have 
\[
\frac{1}{q^2}\sum_{a\in \FF_q}|\hat{\one}_{S}(a)|^4\le \frac{2|S|^3+3p^{e-1}|S|^2+(1-3p^{e-1})|S|}{3q^3}.
\]
We also have 
\[
|\hat{\one}_{S}(0)|=\frac{1}{\sqrt{q}}\sum_{x\in \FF_q}\one_S(x)=\frac{|S|}{\sqrt{q}}
\]
Combining them, 
\[
\frac{1}{q^2}\sum_{a\in \FF_q\setminus \{0\}}|\hat{\one}_{S}(a)|^4\le \frac{2|S|^3+3p^{e-1}|S|^2+(1-3p^{e-1})|S|}{3q^3}-\frac{|S|^4}{q^4}=\rho^3\left(\frac{2}{3}-\rho\right)+O(p^{-1})
\]
as desired.

When $\rho>1/2$, let $S^c$ be the complement of $S$.
Then $|S^c|/q=1-\rho< 1/2$. 
Moreover, 
    since $\one_{S}+\one_{S^c}=1$,  
    by \Cref{eq:sum_is_zero}, for any $a\in \FF_q\setminus \{0\}$, we have 
    $\hat{\one}_S(a)+\hat{\one}_{S^c}(a)=0$ and thus $|\hat{\one}_S(a)|=|\hat{\one}_{S^c}(a)|$. Thus, applying the same argument to $S^c$ gives the desired bound. 
\end{proof}

\subsection{Proof of \Cref{lem:Fourier_max}}\label{sec:proof_Fourier_max}
\begin{proof}[Proof of \Cref{lem:Fourier_max}]
 The case where $q=p$ is a prime is proven in \cite[Fact 4.12]{SunWootters26}.\footnote{
 The normalization factor in the  definition of Fourier transform in \cite{SunWootters26} is different from ours; in their definition $\hat{\one}_S(z)=\frac{1}{p}\sum_{x\in S}\omega_p^{-xz}$ when $q=p$ is a prime. 
 } 
 
 We show the general case by reducing it to the case of prime $q$. 
Assume that $e\ge 2$.  
 For $t\in \FF_p$, let 
 \[
 c_t=|\{x\in S:\Tr(xa)=t\}|.
 \]
Note that $c_t\le |\{x\in \FF_q:\Tr(xa)=t\}|=p^{e-1}$. 
 For $j\in [p^{e-1}]$, define 
 \[
 T_j=\{t\in \FF_p: c_t\ge j\}.
 \]
Then we have
\begin{align}
\sum_{j\in [p^{e-1}]}|T_j|=\sum_{t\in \FF_p}c_t=|S|. \label{eq:sum_T_j}
\end{align}

We have  
\begin{align}
    \sum_{x\in S}\omega_p^{-\Tr(xa)}
    =\sum_{t\in \FF_p}c_t\omega_p^{-t}
    =\sum_{j\in [p^{e-1}]}\sum_{t\in T_j}\omega_p^{-t}. \label{eq:sum_expand}
\end{align}
For each $j\in [p^{e-1}]$, 
let $\rho_j=\frac{|T_j|}{p}$. 
By the lemma for the case of $q=p$, 
we have 
\begin{align}
\frac{1}{p}\left|\sum_{t\in T_j}\omega_p^{-t}\right|\le \frac{\sin(\rho_j \pi)}{\pi}+O(p^{-2}). \label{eq:base_case_q=p}
\end{align} 
Then we have 
\begin{align}
 \begin{split}
    \frac{|\hat{\one}_S(a)|}{\sqrt{q}}
    &=
    \frac{1}{q}\left|\sum_{x\in S}\omega_p^{-\Tr(xa)}\right|\\
    &\le \frac{1}{p^{e-1}}\cdot\sum_{j\in [p^{e-1}]}\frac{1}{p}\left|\sum_{t\in T_j}\omega_p^{-t}\right|\\
    &\le \frac{1}{p^{e-1}}\cdot\sum_{j\in [p^{e-1}]}\frac{\sin(\rho_j \pi)}{\pi}+O(p^{-2}),
    \end{split}\label{eq:upper_bound_sum}
\end{align}
where the first inequality follows from \Cref{eq:sum_expand} and the triangle inequality, and 
the second inequality follows from 
\Cref{eq:base_case_q=p}. 
We note that 
\[
\frac{1}{p^{e-1}}\sum_{j\in [p^{e-1}]}\rho_j=\frac{1}{q}\sum_{j\in [p^{e-1}]}|T_j|=\frac{|S|}{q}=\rho
\]
where the second last equality follows from \Cref{eq:sum_T_j}.

Since the function $u\mapsto \sin(\pi u)$ is concave on $[0,1]$, Jensen's inequality gives 
\begin{align}
\frac{1}{p^{e-1}}\cdot\sum_{j\in [p^{e-1}]}\sin(\rho_j \pi)\le 
\sin\left(\frac{1}{p^{e-1}}\cdot\sum_{j\in [p^{e-1}]}\rho_j \pi\right)=\sin(\rho \pi). \label{eq:Jensen}
\end{align}
\Cref{lem:Fourier_max} follows from \Cref{eq:upper_bound_sum,eq:Jensen}. 
\end{proof}
\section{Regev-like Algorithm with List Decoder}
In this section, we show a theorem that characterizes a condition of success of Regev-like algorithm with list decoders. We note that a similar algorithm is already considered in \cite{STOC:ChaTil25,chailloux2025opixsoftdecoders}, but they only give average-case analysis, and do not give conditions for the algorithm's success in the worst-case. The theorem can also be seen as a variant of \cite[Lemma 5.1]{JACM:YamZha24}. 
\begin{theorem}\label{thm:Regev_like_new}
Let $q$ be a prime power and let $n$ be a positive integer such that
$\log q\le \poly(n)$. Let $C\subseteq \FF_q^n$ be a linear code of dimension
$k$, and suppose that $C^\perp$ is $(r,L)$-list-decodable for some $L\le \poly(n)$.  
Let $W:\FF_q^n\to\mathbb C$ be a normalized function, i.e.,
$
\sum_{\ve\in \FF_q^n}|W(\ve)|^2=1.
$

Assume that
\begin{align}
&\sum_{\ve\in T_{rn}}|\hat W(\ve)|^2\le \negl(n), \label{eq:cond_one}\\
&q^{-(n-k)}
\sum_{\vy\in \FF_q^n}
\left|
\sum_{\vx\in C^\perp}
\one_{T_{rn}}(\vy-\vx)\hat W(\vy-\vx)
\right|^2
\le \negl(n), \label{eq:cond_two}\\
&\left|
q^{n-k}\sum_{\vx\in C}|W(\vx)|^2-1
\right|
\le \negl(n) \label{eq:cond_three}
\end{align}
where we recall that $T_{rn}=\{\vx\in \FF_q^n: \wt(\vx)> rn\}$.

Then there exists a quantum polynomial-time algorithm that, given a
description of $C$ and the quantum state
$
\sum_{\ve\in \FF_q^n}W(\ve)\ket{\ve},
$
outputs a state $\rho_{\rm out}$ satisfying
$$
\TD\left(
\rho_{\rm out},
\ketbra{\tilde\psi_{\rm target}}{\tilde\psi_{\rm target}}
\right)
\le 
1-\frac{1}{\poly(n)}, 
$$
where 
$$
\ket{\tilde\psi_{\rm target}}
=
\frac{\ket{\psi_{\rm target}}}{\|\ket{\psi_{\rm target}}\|},
\qquad
\ket{\psi_{\rm target}}
=
\sum_{\vx\in C}W(\vx)\ket{\vx}.
$$ 
\end{theorem}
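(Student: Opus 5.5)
The plan is to follow the Yamakawa--Zhandry template \cite[Lemma 5.1]{JACM:YamZha24}, with two changes. The decoder picks a uniformly random list element in superposition, and correctness is measured only as an inverse-polynomial overlap with the target, not closeness.

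\textbf{The algorithm.} Prepare $\sum_{\vx\in C}|C|^{-1/2}\ket{\vx}\otimes\sum_{\ve}W(\ve)\ket{\ve}$ and apply $\QFT_{\FF_q}^{\otimes n}$ to both registers. By \Cref{lem:fourier_dual} this gives $q^{-(n-k)/2}\sum_{\vx\in C^\perp,\ve}\hat W(\ve)\ket{\vx}\ket{\vx+\ve}$. Add the first register into the second to get
\[
\sum_{\vy}\sum_{\vx\in C^\perp}a(\vx,\vy)\ket{\vx}\ket{\vy},\qquad a(\vx,\vy)=q^{-(n-k)/2}\hat W(\vy-\vx).
\]
Coherently run $\mathsf{ListDecode}$ on $\vy$. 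By the normalization convention it outputs exactly $\mathcal L_\vy$, with $|\mathcal L_\vy|\le L$. Then:
\begin{itemize}
\item prepare an index register $\sum_{j}|\mathcal L_\vy|^{-1/2}\ket{j}$ over $[|\mathcal L_\vy|]$;
\item subtract the $j$-th list element $\vx_j$ from the first register;
\item uncompute the list, which is a deterministic function of $\vy$;
\item measure the first register and continue only on outcome $0$;
\item project the index register onto the uniform state $\ket{u_\vy}=|\mathcal L_\vy|^{-1/2}\sum_j\ket{j}$ (implementable since $|\mathcal L_\vy|$ is computable from $\vy$), and discard it;
\item apply $(\QFT_{\FF_q}^{\otimes n})^{-1}$ to the $\vy$ register.
\end{itemize}
Failed runs output garbage, so we only need the successful branch to have $1/\poly(n)$ weight and near-perfect overlap. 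Overall we want $F^2\ge 1/\poly(n)$, and then the Fuchs--van de Graaf inequality \Cref{eq:Fuchs–van_de_Graaf} gives the claimed trace-distance bound.

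\textbf{The comparison vector.} Define $\phi(\vy)=\sum_{\vx\in C^\perp}a(\vx,\vy)$. A direct Fourier computation (convolution theorem \Cref{lem:convolution} together with \Cref{lem:fourier_dual}) should show that $\QFT^{-1}$ of $\sum_\vy\phi(\vy)\ket{\vy}$ equals $q^{(n-k)/2}\ket{\psi_{\rm target}}$. Condition \Cref{eq:cond_three} then says $\|\phi\|^2=1\pm\negl(n)$.

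\textbf{Splitting by error weight.} Write $\phi=\phi_B+\phi_T$ according to whether $\vy-\vx\in B_{rn}$ or $\vy-\vx\in T_{rn}$. Condition \Cref{eq:cond_two} is exactly $\|\phi_T\|^2\le\negl(n)$, so $\|\phi_B\|^2\ge 1-\negl(n)$.

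\textbf{Tracking the successful branch.} The term with $\vx\in C^\perp$ and index $j$ lands on first register $0$ exactly when $\vx_j=\vx$. Such a $j$ exists iff $\vx\in\mathcal L_\vy$, which holds whenever $\vy-\vx\in B_{rn}$; heavy-error terms can also contribute. After projecting onto $0$ and onto $\ket{u_\vy}$, the unnormalized amplitude on $\ket{\vy}$ is
\[
\gamma(\vy)=\frac{1}{|\mathcal L_\vy|}\sum_{\vx\in\mathcal L_\vy}a(\vx,\vy),\qquad \gamma(\vy)=\frac{\phi_B(\vy)+\epsilon(\vy)}{|\mathcal L_\vy|},
\]
where $\epsilon(\vy)$ collects the terms with $\vx\in\mathcal L_\vy$ but $\vy-\vx\in T_{rn}$. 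Since $\mathcal L_\vy$ is exactly the set of $\vx$ with $\wt(\vy-\vx)\le rn$, that set is empty and $\epsilon=0$.

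\textbf{The overlap bound.} Since $1/L\le 1/|\mathcal L_\vy|\le 1$,
\[
\Big|\sum_\vy\overline{\phi(\vy)}\gamma(\vy)\Big|
\;\ge\;\sum_\vy\frac{|\phi_B(\vy)|^2}{|\mathcal L_\vy|}-\sum_\vy|\phi_T(\vy)||\phi_B(\vy)|
\;\ge\;\frac{1-\negl}{L}-\|\phi_T\|\|\phi_B\|.
\]
This is at least $1/\poly(n)$. We also have $\|\gamma\|\le\|\phi_B\|\le 1+\negl$ by Cauchy--Schwarz inside each list. Hence the squared overlap of the normalized successful-branch state with $\phi/\|\phi\|$ is $\ge 1/\poly(n)$. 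The probability of the successful branch is $\|\gamma\|^2\ge$ (the overlap)$^2\ge 1/\poly(n)$.

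\textbf{Where condition \Cref{eq:cond_one} enters.} It is used when bounding $\|\phi_B\|$ from below and when controlling $\|\gamma\|$. Each coordinate of the pre-decoding state carries mass $\sum_{\ve\in T_{rn}}|\hat W(\ve)|^2$ on heavy errors, and this mass is negligible. The inverse QFT is unitary, so it transfers the overlap to $\ket{\tilde\psi_{\rm target}}$.

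\textbf{Main obstacle.} The delicate step is the overlap inequality. The weighting $1/|\mathcal L_\vy|$ is not uniform in $\vy$, so the output is not proportional to $\phi_B$. The leftover index register also creates garbage entanglement that must be handled, here by projecting onto $\ket{u_\vy}$, without destroying the overlap. If that projection turns out to be awkward, the fallback is to keep the index register, trace it out, and lower-bound $\bra{\phi}\rho\ket{\phi}$ via the operator inequality that the reduced state dominates $\frac1L\ket{\gamma'}\bra{\gamma'}$. Here $\gamma'(\vy)=|\mathcal L_\vy|^{-1/2}\phi_B(\vy)$, the same Cauchy--Schwarz step applies, and the same $1/L$ loss results.
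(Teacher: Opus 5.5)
Your proposal is correct and is essentially the paper's argument: the decoder outputs a uniformly random list element, the key estimate is the overlap $\sum_{\vy}|\mathcal{L}_\vy|^{-1}|\phi_B(\vy)|^2\ge\|\phi_B\|^2/L$ (cf.\ \Cref{cla:fourprime_to_mid}), and Fuchs--van de Graaf converts this into the trace-distance bound. The only differences are that you realize the random choice coherently with postselection rather than via classical $\rand$, and you compare directly with $\phi$ rather than passing through $\ket{\psi_{\rm mid}}$. One correction: your argument never uses \Cref{eq:cond_one}, because the bounds on $\|\phi_B\|$ and $\|\gamma\|$ follow from \Cref{eq:cond_two,eq:cond_three} alone, so your paragraph on where it enters is inaccurate but harmless. (The paper needs \Cref{eq:cond_one} only to pass to the truncated states $\ket{\psi'_{4,\rand}}$ in \Cref{cla:four_to_four_prime}.)
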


In particular, the following corollary will be used in our worst-case quantum algorithm for OPI in \Cref{sec:main}.

\begin{corollary}\label{cor:Regev_like_new}
Let $q$ be a prime power and let $n$ be a positive integer such that
$\log q\le \poly(n)$. Let $C\subseteq \FF_q^n$ be a linear code of dimension
$k$, and suppose that $C^\perp$ is $(r,L)$-list-decodable for
some $L\le \poly(n)$. 

For each $i\in[n]$, let $S_i\subseteq \FF_q$ be a subset satisfying $|S_i|=\rho q$ where $\rho\ge 1/\poly(n)$. 

Define $W:\FF_q^n\to\mathbb C$ by $$ W(e_1,\ldots,e_n)=\prod_{i=1}^n W_i(e_i), $$ where $$ W_i(e_i)= \begin{cases} 
\sqrt{\dfrac{\tau}{|S_i|}} & \text{if } e_i\in S_i,\\[1.2ex] \sqrt{\dfrac{1-\tau}{q-|S_i|}} & \text{if } e_i\notin S_i \end{cases} $$ and $\rho\le \tau\le 1$.

Assume that
\begin{align}
&r\ge 1-\left(\sqrt{\tau\rho}+\sqrt{(1-\tau)(1-\rho)}\right)^2+\Omega(1), \label{eq:cond_one_cor}\\
&q^{-(n-k)}
\sum_{\vy\in \FF_q^n}
\left|
\sum_{\vx\in C^\perp}
\one_{T_{rn}}(\vy-\vx)\hat W(\vy-\vx)
\right|^2
\le \negl(n), \label{eq:cond_two_cor}\\
&\left|
q^{n-k}\sum_{\vx\in C}|W(\vx)|^2-1
\right|
\le \negl(n), \label{eq:cond_three_cor}\\
&\left|
q^{n-k}\sum_{\vx\in C_{sn}}|W(\vx)|^2
\right|
\le \negl(n). \label{eq:cond_four_cor}
\end{align}
where 
$0\le s\le 1$, 
$C_{sn}=\{\vx=(x_1,...,x_n) \in C: \left|\{i\in[n] : x_i \in S_i\}\right| < s n\}$,  and we recall that $T_{rn}=\{\vx\in \FF_q^n: \wt(\vx)> rn\}$.

Then there exists a quantum polynomial-time algorithm that solves $\MaxLINSAT(q,\mA,\{S_i\}_{i\in[n]},s)$ with probability $1/\poly(n)$ where $\mA$ is the generator matrix of $C$.  
\end{corollary}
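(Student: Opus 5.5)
We derive the corollary from \Cref{thm:Regev_like_new}. Three things have to be checked: the input state $\sum_{\ve}W(\ve)\ket{\ve}$ can be prepared efficiently, the hypotheses \Cref{eq:cond_one,eq:cond_two,eq:cond_three} follow from the assumptions, and measuring the output state yields a MaxLINSAT solution with probability $1/\poly(n)$.

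\textbf{State preparation.} Since $W$ is a product, we prepare each single-coordinate state $\sum_{e}W_i(e)\ket{e}$ separately. First prepare $\sqrt{\tau}\ket{1}+\sqrt{1-\tau}\ket{0}$. Conditioned on the flag, prepare a uniform superposition over $S_i$ or over $\FF_q\setminus S_i$, using the membership oracle. This can be done by amplitude amplification from the uniform superposition over $\FF_q$, since the densities $\rho\ge 1/\poly(n)$ and $1-\rho$ are known exactly. Then uncompute the flag by one more oracle call, which is possible because the flag equals $\one_{S_i}(e)$. (If $1-\rho$ is tiny, that branch carries weight $1-\tau\le 1-\rho$, so dropping it costs only a negligible or small error, which we handle by continuity.)

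\textbf{Checking \Cref{eq:cond_one}.} Conditions \Cref{eq:cond_two,eq:cond_three} are \Cref{eq:cond_two_cor,eq:cond_three_cor} verbatim, so only \Cref{eq:cond_one} needs work. By \Cref{lem:QFT_prod}, $\hat W(\ve)=\prod_i\hat W_i(e_i)$, and $\sum_a|\hat W_i(a)|^2=1$ by Parseval. So $|\hat W(\ve)|^2$ is a product distribution, and $\wt(\ve)$ is a sum of independent Bernoullis with success probability $1-|\hat W_i(0)|^2$. We compute $\hat W_i(0)=q^{-1/2}\bigl(|S_i|\sqrt{\tau/|S_i|}+(q-|S_i|)\sqrt{(1-\tau)/(q-|S_i|)}\bigr)=\sqrt{\tau\rho}+\sqrt{(1-\tau)(1-\rho)}$. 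The mean weight is therefore $n\bigl(1-(\sqrt{\tau\rho}+\sqrt{(1-\tau)(1-\rho)})^2\bigr)$. By \Cref{eq:cond_one_cor}, $rn$ exceeds this mean by $\Omega(n)$, and the Chernoff bound (\Cref{lem:Chernoff}, or Hoeffding when the mean is small) gives $\sum_{\ve\in T_{rn}}|\hat W(\ve)|^2\le 2^{-\Omega(n)}$.

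\textbf{Extracting a solution.} \Cref{thm:Regev_like_new} produces $\rho_{\rm out}$ with trace distance at most $1-1/\poly(n)$ from $\ket{\tilde\psi_{\rm target}}$. Consider the projector $\Pi$ onto computational basis states $\vx\in C\setminus C_{sn}$. By the definition of trace distance, $\Tr(\Pi\rho_{\rm out})\ge \bra{\tilde\psi_{\rm target}}\Pi\ket{\tilde\psi_{\rm target}}-(1-1/\poly(n))$. Now $\bra{\tilde\psi_{\rm target}}\Pi\ket{\tilde\psi_{\rm target}}=1-\sum_{\vx\in C_{sn}}|W(\vx)|^2/\sum_{\vx\in C}|W(\vx)|^2$. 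By \Cref{eq:cond_three_cor}, the denominator is $q^{-(n-k)}(1\pm\negl)$; by \Cref{eq:cond_four_cor}, the numerator is $q^{-(n-k)}\negl$. So this overlap is $1-\negl(n)$, and measuring yields some $\vx\in C\setminus C_{sn}$ with probability at least $1/\poly(n)-\negl(n)$. Such an $\vx$ is a vector in the image of $\mA$ satisfying at least $sn$ constraints, which can be checked efficiently with the membership oracles.

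\textbf{Main obstacle.} The only real subtlety is that the trace-distance guarantee $1-1/\poly(n)$ is weak. The extraction step only works because the target state places all but a negligible fraction of its mass on good codewords, so the $1/\poly(n)$ advantage survives. One must also confirm that the normalization in \Cref{eq:cond_four_cor} matches the one in \Cref{eq:cond_three_cor}, which it does.
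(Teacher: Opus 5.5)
Your proposal is correct and follows the paper's proof. You verify \Cref{eq:cond_one} by the same product-distribution/Chernoff argument using $\hat W_i(0)=\sqrt{\tau\rho}+\sqrt{(1-\tau)(1-\rho)}$, and you extract a solution by noting that the target state puts $1-\negl(n)$ mass on $C\setminus C_{sn}$ while the trace distance is only $1-1/\poly(n)$. The only difference is the state preparation: you use amplitude amplification with flag uncomputation, whereas the paper uses a controlled rotation by $\alpha=\sqrt{\rho(1-\tau)/(\tau(1-\rho))}$ with repeat-until-success (success probability $\rho/\tau\ge\rho$ per trial), which avoids your separate handling of tiny $1-\rho$.
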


In the rest of this section, we prove \Cref{thm:Regev_like_new} and \Cref{cor:Regev_like_new}.
\subsection{Proof of \Cref{thm:Regev_like_new}}
\begin{proof}[Proof of \Cref{thm:Regev_like_new}]
Let $V$ be the normalized indicator function of $C$, that is, 
\begin{align*}
    V(\vx)=
    \begin{cases}
    \frac{1}{\sqrt{|C|}}& \vx\in C\\
    0& \text{otherwise}
    \end{cases}.
\end{align*}
By \Cref{lem:fourier_dual}, we have 
\begin{align*}
    \hat{V}(\vx)=
    \begin{cases}
    \frac{1}{\sqrt{|C^\perp|}}& \vx\in C^\perp\\
    0& \text{otherwise}
    \end{cases}.
\end{align*}

Let $\mathsf{ListDecode}$ be the list-decoder of $C^\perp$. 
For $\rand \in [2^n]$, let $\mathsf{Decode}_\rand$ be an algorithm that works as follows:
\begin{description}
    \item[$\mathsf{Decode}_\rand(\vy)$:]
    On input $\vy \in \FF_q^n$, 
run $\mathsf{ListDecode}(\vy)$ to obtain a list $\mathcal{L}\subseteq \FF_q^n$. 
If $\mathcal{L}=\emptyset$, output $0$.
Otherwise, 
let $\mathcal{L}=\{\vx_1,\vx_2,...,\vx_\ell\}$, and  
output $\vx_i$ for $i=1+(\rand \mod \ell)$.
\end{description}
For any $\vx\in C^\perp$ and $\ve\in B_{rn}$, 
if we let $\mathcal{L}=\mathsf{ListDecode}(\vx+\ve)$, then for any $\vx'\in \mathcal{L}$, it holds that
\begin{align} \label{eq:rand_decode}
    \left|\Pr_{\rand \gets [2^n]}[\mathsf{Decode}_{\rand}(\vx+\ve)=\vx']-\frac{1}{|\mathcal{L}|}\right|\le 2^{-n}.
\end{align}

We construct the quantum algorithm as follows.
\begin{enumerate}
\item 
The algorithm generates a uniform superposition over $C$. 
Together with the input state $\sum_{\ve\in \FF_q^n}W(\ve)\ket{\ve}$, the state of the joint system can be written as 
\[
\ket{\psi_1}=\sum_{(\vx,\ve)\in \FF_q^n\times \FF_q^n }V(\vx)W(\ve)\ket{\vx}\ket{\ve}.
\]
\item Apply $\QFT_{\FF_q^n}$ on both registers, which results in 
\[
\ket{\psi_2}=\sum_{(\vx,\ve)\in \FF_q^n\times \FF_q^n }\hat{V}(\vx)\hat{W}(\ve)\ket{\vx}\ket{\ve}.
\]
\item Add the value of the second register into the first register. 
This results in the following state.
\begin{align*}
\ket{\psi_3}
=\sum_{(\vx,\ve)\in \FF_q^n\times \FF_q^n  }\hat{V}(\vx)\hat{W}(\ve)\ket{\vx}\ket{\vx+\ve}
\end{align*}
\item 
Choose $\rand\gets [2^n]$ and subtract the value of $\mathsf{Decode}_\rand$ on the second register from the first register, which results in 
\[
\ket{\psi_{4,\rand}}=\sum_{(\vx,\ve)\in \FF_q^n\times \FF_q^n }\hat{V}(\vx)\hat{W}(\ve)\ket{\vx-\mathsf{Decode}_\rand(\vx+\ve)}\ket{\vx+\ve}.
\]
\item  Apply $\QFT_{\FF_q^n}^{-1}$ on both registers, and let $\ket{\psi_{5,\rand}}$ be the resulting state.
\item Output the second register of  $\ket{\psi_{5,\rand}}$ discarding the first register. 
\end{enumerate}

Define the following (not necessarily normalized) states. 
\begin{align*}
&\ket{\psi'_2}=\sum_{(\vx,\ve)\in \FF_q^n\times B_{rn}  }\hat{V}(\vx)\hat{W}(\ve)\ket{\vx}\ket{\ve},\\
    &\ket{\psi'_{4,\rand}}=\sum_{(\vx,\ve)\in \FF_q^n\times B_{rn}  }\hat{V}(\vx)\hat{W}(\ve)\ket{\vx-\mathsf{Decode}_\rand(\vx+\ve)}\ket{\vx+\ve},\\
     &\ket{\psi_{\rm mid}}=\sum_{(\vx,\ve)\in \FF_q^n\times B_{rn}  }\hat{V}(\vx)\hat{W}(\ve)\ket{0}\ket{\vx+\ve}, \\
      &\ket{\psi_{\rm ideal}}=\sum_{(\vx,\ve)\in \FF_q^n\times \FF_q^n  }\hat{V}(\vx)\hat{W}(\ve)\ket{0}\ket{\vx+\ve}. \\
\end{align*} 


For proving that the output of the algorithm satisfies the desired property, 
we prove the following claims.
\begin{claim}\label{cla:four_to_four_prime}
  For any $\rand \in [2^n]$, 
 we have
  \[
 \|\ket{\psi_{4,\rand}}-\ket{\psi'_{4,\rand}}\|\le \negl(n).
 \]
\end{claim}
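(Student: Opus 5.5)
The plan is to observe that $\ket{\psi_{4,\rand}}$ and $\ket{\psi'_{4,\rand}}$ are images of $\ket{\psi_2}$ and $\ket{\psi'_2}$ under the same unitary, and then bound $\|\ket{\psi_2}-\ket{\psi'_2}\|$ using \Cref{eq:cond_one}. For fixed $\rand$, let $U_\rand$ be the composition of step 3 (the map $\ket{\vx}\ket{\ve}\mapsto\ket{\vx}\ket{\vx+\ve}$) and step 4 (the map $\ket{\vx}\ket{\vy}\mapsto\ket{\vx-\mathsf{Decode}_\rand(\vy)}\ket{\vy}$). Each of these is a bijection on computational basis states. Step 3 is clearly invertible. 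Step 4 is invertible because, for fixed $\rand$, $\mathsf{Decode}_\rand$ is a deterministic classical function of the second register, which is not modified, so the map is inverted by adding $\mathsf{Decode}_\rand(\vy)$ back. Hence $U_\rand$ is a permutation of basis states and in particular a unitary that preserves norms.

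By linearity, $U_\rand\ket{\psi_2}=\ket{\psi_{4,\rand}}$, and equally $U_\rand\ket{\psi'_2}=\ket{\psi'_{4,\rand}}$, since $\ket{\psi'_2}$ is the same sum as $\ket{\psi_2}$ restricted to $\ve\in B_{rn}$. Therefore
\[
\|\ket{\psi_{4,\rand}}-\ket{\psi'_{4,\rand}}\|=\|\ket{\psi_2}-\ket{\psi'_2}\|
=\Bigl\|\sum_{(\vx,\ve)\in\FF_q^n\times T_{rn}}\hat V(\vx)\hat W(\ve)\ket{\vx}\ket{\ve}\Bigr\|,
\]
using $B_{rn}\cup T_{rn}=\FF_q^n$ with the two sets disjoint.

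The state on the right is a product state, so its squared norm factors as $\bigl(\sum_{\vx}|\hat V(\vx)|^2\bigr)\bigl(\sum_{\ve\in T_{rn}}|\hat W(\ve)|^2\bigr)$. The first factor equals $1$ by Parseval (\Cref{lem:Parseval}), since $V$ is normalized. The second factor is at most $\negl(n)$ by \Cref{eq:cond_one}. Taking the square root, the distance is $\sqrt{\negl(n)}=\negl(n)$, uniformly in $\rand$.

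The only point needing care is that the decoding step is unitary for every fixed $\rand$, including on inputs outside the list-decoding radius where $\mathsf{Decode}_\rand$ may output $0$ or a wrong codeword. Since the map is a reversible subtraction controlled by an untouched register, its behavior on such inputs does not matter.
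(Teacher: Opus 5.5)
Your proposal is correct and follows essentially the same route as the paper: bound $\|\ket{\psi_2}-\ket{\psi'_2}\|$ by factoring the squared norm as $\sum_{\vx}|\hat V(\vx)|^2\cdot\sum_{\ve\in T_{rn}}|\hat W(\ve)|^2$ and invoking \Cref{eq:cond_one}. You then transfer the bound to $\ket{\psi_{4,\rand}}$ and $\ket{\psi'_{4,\rand}}$ via the common unitary of Steps 3--4, and your explicit check that Step 4 is a permutation of basis states for each fixed $\rand$ is a welcome detail the paper leaves implicit.
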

\begin{proof}
We have    
 \begin{align*}
  \|\ket{\psi_{2}}-\ket{\psi'_{2}}\|
  &=\left\|\sum_{(\vx,\ve)\in \FF_q^n\times T_{rn}  }\hat{V}(\vx)\hat{W}(\ve)\ket{\vx}\ket{\ve}\right\|\\
  &=\left(\sum_{(\vx,\ve)\in \FF_q^n\times T_{rn}}|\hat{V}(\vx)|^2|\hat{W}(\ve)|^2\right)^{1/2}\\  
  &=\left(\sum_{\ve\in T_{rn}}|\hat{W}(\ve)|^2\right)^{1/2}\le \negl(n)
 \end{align*}
 where the final inequality follows from \Cref{eq:cond_one}. 
 Since $\ket{\psi_{4,\rand}}$ and $\ket{\psi'_{4,\rand}}$ are obtained by applying the same unitary to $\ket{\psi_2}$ and $\ket{\psi'_2}$, respectively, \Cref{cla:four_to_four_prime}  follows from the above inequality. 
\end{proof}

\begin{claim}\label{cla:mid_to_ideal}
 We have
\[
\|\ket{\psi_{\rm mid}}-\ket{\psi_{\rm ideal}}\|\le \negl(n).
\]
\end{claim}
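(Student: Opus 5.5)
The two states differ only on errors of high weight, so
\[
\ket{\psi_{\rm ideal}}-\ket{\psi_{\rm mid}}=\sum_{(\vx,\ve)\in \FF_q^n\times T_{rn}}\hat V(\vx)\hat W(\ve)\ket{0}\ket{\vx+\ve}.
\]
The plan is to compute the squared norm of this difference directly and recognize it as the quantity bounded in \Cref{eq:cond_two}.

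First I would substitute $\hat V=|C^\perp|^{-1/2}\one_{C^\perp}=q^{-(n-k)/2}\one_{C^\perp}$ from \Cref{lem:fourier_dual}. Then I would reindex by $\vy=\vx+\ve$, writing $\ve=\vy-\vx$. For each fixed $\vy$, the coefficient of $\ket{0}\ket{\vy}$ becomes
\[
q^{-(n-k)/2}\sum_{\vx\in C^\perp}\one_{T_{rn}}(\vy-\vx)\hat W(\vy-\vx).
\]
Since the first register is always $\ket{0}$, the kets $\ket{0}\ket{\vy}$ are orthonormal as $\vy$ ranges over $\FF_q^n$. Hence the squared norm is the sum over $\vy$ of the squared moduli of these coefficients:
\[
\|\ket{\psi_{\rm mid}}-\ket{\psi_{\rm ideal}}\|^2=q^{-(n-k)}\sum_{\vy\in\FF_q^n}\Bigl|\sum_{\vx\in C^\perp}\one_{T_{rn}}(\vy-\vx)\hat W(\vy-\vx)\Bigr|^2.
\]
By \Cref{eq:cond_two} this is at most $\negl(n)$, and taking square roots gives the claim.

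There is no real obstacle here. The one point needing care is that the interference across different $\vx$ landing on the same $\vy$ is kept inside the modulus and not expanded away. This is exactly why the claim needs condition \Cref{eq:cond_two} rather than \Cref{eq:cond_one}. The only other thing to check is the normalization exponent of $\hat V$: $|C^\perp|=q^{n-k}$ because $\dim C^\perp=n-k$.
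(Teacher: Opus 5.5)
Your proposal is correct and is essentially the paper's proof. Like the paper, you write the difference as the sum over $\ve\in T_{rn}$, substitute $\hat V=q^{-(n-k)/2}\one_{C^\perp}$, regroup by $\vy=\vx+\ve$ so that the squared norm is exactly the left-hand side of \Cref{eq:cond_two}, and take square roots, noting that the square root of a negligible function is negligible.
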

\begin{proof}
 We have    
 \begin{align*}
  \|\ket{\psi_{\rm mid}}-\ket{\psi_{\rm ideal}}\|
  &=\left\|\sum_{(\vx,\ve)\in \FF_q^n\times T_{rn}  }\hat{V}(\vx)\hat{W}(\ve)\ket{0}\ket{\vx+\ve}\right\|\\
  &=\left(
  \sum_{\vy \in \FF_q^n}
  \left|\sum_{\substack{\vx\in \FF_q^n:\\\vy-\vx\in T_{rn}}}\hat{V}(\vx)\hat{W}(\vy-\vx)\right|^2\right)^{1/2}\\  
   &=\left(
 q^{-(n-k)}\sum_{\vy\in \FF_q^n}\left|\sum_{\vx\in C^\perp}\one_{T_{rn}}(\vy-\vx)\hat{W}(\vy-\vx)\right|^2\right)^{1/2}\le \negl(n)
 \end{align*}
 where the final inequality follows from \Cref{eq:cond_two}. 
\end{proof}

\begin{claim}\label{cla:ideal_norm}
We have 
\[
\left|\|\ket{\psi_{\rm ideal}}\|-1\right|\le \negl(n).
\]
\end{claim}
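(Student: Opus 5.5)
The plan is to express $\ket{\psi_{\rm ideal}}$ as (a QFT image of) the target state and then read its norm off \Cref{eq:cond_three}. Since the first register of $\ket{\psi_{\rm ideal}}$ is fixed to $\ket{0}$, we have
\[
\|\ket{\psi_{\rm ideal}}\|=\Bigl\|\sum_{\vy\in\FF_q^n}\Bigl(\sum_{\vx\in\FF_q^n}\hat V(\vx)\hat W(\vy-\vx)\Bigr)\ket{\vy}\Bigr\|=\|\hat V\ast \hat W\|_2 .
\]
By the convolution theorem (\Cref{lem:convolution}), $\hat V\ast\hat W=q^{n/2}\widehat{V\cdot W}$. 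Parseval (\Cref{lem:Parseval}) then gives
\[
\|\ket{\psi_{\rm ideal}}\|^2=q^{n}\sum_{\vx\in\FF_q^n}|V(\vx)|^2|W(\vx)|^2=\frac{q^n}{|C|}\sum_{\vx\in C}|W(\vx)|^2=q^{n-k}\sum_{\vx\in C}|W(\vx)|^2,
\]
using $|V(\vx)|^2=1/|C|=q^{-k}$ on $C$.

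By \Cref{eq:cond_three}, $\bigl|\|\ket{\psi_{\rm ideal}}\|^2-1\bigr|\le\negl(n)$. Because the norm is nonnegative, $\bigl|\|\ket{\psi_{\rm ideal}}\|-1\bigr|=\bigl|\|\ket{\psi_{\rm ideal}}\|^2-1\bigr|/(\|\ket{\psi_{\rm ideal}}\|+1)\le\negl(n)$, which is the claim.

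The only point needing care is bookkeeping of the normalization: the convolution theorem with the paper's $q^{-n/2}$ convention, and checking that the inner variable $\vx$ ranges over all of $\FF_q^n$, so that the support of $\hat V$ on $C^\perp$ is accounted for automatically. There is no real obstacle beyond this. As a byproduct, the same computation shows $\ket{\psi_{\rm ideal}}=\ket{0}\otimes\QFT_{\FF_q^n}\bigl(q^{(n-k)/2}\ket{\psi_{\rm target}}\bigr)$ up to the convolution identity, which is presumably how the ideal state is used later.
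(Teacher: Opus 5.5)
Your proof is correct and follows essentially the same route as the paper: rewrite $\ket{\psi_{\rm ideal}}$ via the convolution theorem as $q^{n/2}$ times the QFT of $V\cdot W$, use Parseval (the paper phrases this as applying the unitary $\QFT^{-1}_{\FF_q^n}$) to get $\|\ket{\psi_{\rm ideal}}\|^2=q^{n-k}\sum_{\vx\in C}|W(\vx)|^2$, and invoke \Cref{eq:cond_three}. Your explicit step passing from $\bigl|\|\ket{\psi_{\rm ideal}}\|^2-1\bigr|$ to $\bigl|\|\ket{\psi_{\rm ideal}}\|-1\bigr|$ fills in a detail the paper leaves implicit.
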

\begin{proof}
  By \cref{lem:convolution}, we have 
  \begin{align*}
  (I\times \QFT_{\FF_q^n}^{-1})\ket{\psi_{\rm ideal}}
  &=
  (I\times \QFT_{\FF_q^n}^{-1})\sum_{\vy\in \FF_q^n}(\hat{V}*\hat{W})(\vy)\ket{0}\ket{\vy}\\
  &=q^{n/2}\sum_{\vy\in \FF_q^n}V(\vy)W(\vy)\ket{0}\ket{\vy}
  \end{align*}
  where the second equality follows from \Cref{lem:convolution}. 
    Thus, we have 
    \begin{align*}
    \|\ket{\psi_{\rm ideal}}\|^2
    &=q^n\sum_{\vy\in \FF_q^n}|V(\vy)|^2|W(\vy)|^2\\
    &=q^{n-k}\sum_{\vy\in C}|W(\vy)|^2.
    \end{align*}
\Cref{cla:ideal_norm} follows from the above and  \Cref{eq:cond_three}.
\end{proof}

\begin{claim}\label{cla:fourprime_to_mid}
 We have
\begin{align*}
    \left|\Ex_{\rand\gets [2^n]}\braket{\psi'_{4,\rand}}{\psi_{\rm mid}}\right|\ge 
    L^{-1}-\negl(n).
\end{align*}
\end{claim}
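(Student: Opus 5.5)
The plan is to compute the inner product $\braket{\psi'_{4,\rand}}{\psi_{\rm mid}}$ exactly, grouped by the value $\vy$ of the second register. The aim is to show that averaging over $\rand$ turns it into a nonnegative quantity that is at least $L^{-1}\|\ket{\psi_{\rm mid}}\|^2$ up to a tiny error.

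\textbf{Step 1: rewrite both states per $\vy$.} Since $\hat V$ is supported on $C^\perp$, for each $\vy\in\FF_q^n$ and $\vx\in C^\perp$ set $a_{\vy}(\vx)=\hat V(\vx)\hat W(\vy-\vx)$. By the normalization of $\mathsf{ListDecode}$, the pairs $(\vx,\ve)\in C^\perp\times B_{rn}$ with $\vx+\ve=\vy$ are exactly those with $\vx\in\mathcal L_{\vy}$. Hence
\[
\ket{\psi_{\rm mid}}=\sum_{\vy}b_{\vy}\ket{0}\ket{\vy},\qquad b_{\vy}=\sum_{\vx\in\mathcal L_{\vy}}a_{\vy}(\vx).
\]
In $\ket{\psi'_{4,\rand}}$, only terms whose first register equals $0$ can overlap with $\ket{0}\ket{\vy}$. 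For fixed $\vy$ and $\rand$, the decoder outputs a single element $\vx^*=\mathsf{Decode}_\rand(\vy)\in\mathcal L_{\vy}$ (when $\mathcal L_{\vy}\neq\emptyset$). The only contributing term is therefore $\vx=\vx^*$, which gives
\[
\braket{\psi'_{4,\rand}}{\psi_{\rm mid}}=\sum_{\vy}\overline{a_{\vy}(\mathsf{Decode}_\rand(\vy))}\,b_{\vy}.
\]
(Here $\vy-\vx^*\in B_{rn}$ holds automatically.)

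\textbf{Step 2: average over $\rand$.} By \Cref{eq:rand_decode}, $\Pr_\rand[\mathsf{Decode}_\rand(\vy)=\vx]=|\mathcal L_{\vy}|^{-1}+\delta_{\vy}(\vx)$ with $|\delta_{\vy}(\vx)|\le 2^{-n}$. The main term is
\[
\sum_{\vy}\frac{1}{|\mathcal L_{\vy}|}\Bigl|\sum_{\vx\in\mathcal L_{\vy}}a_{\vy}(\vx)\Bigr|^2=\sum_{\vy}\frac{|b_{\vy}|^2}{|\mathcal L_{\vy}|}\ge L^{-1}\|\ket{\psi_{\rm mid}}\|^2 .
\]
This is the crucial point where the uniformly-random-element decoder matters: the main term becomes a sum of squares, so there is no cancellation.

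The error term is $\sum_{\vy}\sum_{\vx\in\mathcal L_{\vy}}\delta_{\vy}(\vx)\overline{a_{\vy}(\vx)}b_{\vy}$. We bound it carefully, since a crude bound would pick up a factor $q^{O(n)}$. By Cauchy--Schwarz, $\sum_{\vx\in\mathcal L_{\vy}}|a_{\vy}(\vx)|\le\sqrt{L}\,\|a_{\vy}\|_2$ and $|b_{\vy}|\le\sqrt L\,\|a_{\vy}\|_2$, where the norms are restricted to $\mathcal L_{\vy}$. The error is therefore at most
\[
2^{-n}L\sum_{\vy}\|a_{\vy}\|_2^2=2^{-n}L\,\|\ket{\psi'_2}\|^2\le 2^{-n}L .
\]
This is negligible because $L\le\poly(n)$.

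\textbf{Step 3: norm of $\ket{\psi_{\rm mid}}$.} By \Cref{cla:mid_to_ideal} and \Cref{cla:ideal_norm}, $\|\ket{\psi_{\rm mid}}\|\ge 1-\negl(n)$. Combining this with Step 2 gives $\bigl|\Ex_\rand\braket{\psi'_{4,\rand}}{\psi_{\rm mid}}\bigr|\ge L^{-1}-\negl(n)$.

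I expect the main obstacle to be the bookkeeping in Step 1. There one must check that the overlap picks out exactly the single decoded codeword per $\vy$, and that the filtered-list convention makes the index sets for $\ket{\psi'_{4,\rand}}$ and $\ket{\psi_{\rm mid}}$ coincide with $\mathcal L_{\vy}$. After that, the positivity in Step 2 and the careful treatment of the $2^{-n}$ error are routine.
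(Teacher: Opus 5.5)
Your proposal is correct and follows essentially the same argument as the paper. Both write the states over $\vy\in Y$ with coefficients $a_{\vy,\vx}$ indexed by $\mathcal{L}_\vy$, note that the overlap keeps only the decoded term, and split the $\rand$-average into the nonnegative main term $\sum_{\vy}|\mathcal{L}_\vy|^{-1}|b_\vy|^2\ge L^{-1}\|\ket{\psi_{\rm mid}}\|^2$ plus a $2^{-n}$-weighted error, then conclude via \Cref{cla:mid_to_ideal,cla:ideal_norm}. The only difference is cosmetic: for the error term you apply Cauchy--Schwarz per $\vy$ to get $2^{-n}L\|\ket{\psi'_2}\|^2$, whereas the paper applies it again over $\vy$ to get $2^{-n}\sqrt{L}\,\|\ket{\psi'_{4,\rand}}\|\,\|\ket{\psi_{\rm mid}}\|$; both are negligible.
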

\begin{proof}
 Let $Y\subseteq \FF_q^n$ be a set defined as 
 $$Y=\{\vx+\ve\mid \vx\in C^\perp, \ve\in  B_{rn}\}.$$ 
  By $(r,L)$-list-decodability of $C^\perp$, 
 for any $\vy \in Y$, $\mathsf{ListDecode}(\vy)$ outputs\footnote{As explained in \Cref{sec:codes}, this is assumed without loss of generality.} 
 $$\mathcal{L}_\vy=\{\vx\in C^\perp \mid 
 \vy-\vx \in B_{rn}\}.$$

For $\vy\in Y$ and $\vx\in \mathcal{L}_\vy$, define $a_{\vy,\vx}=\hat{V}(\vx)\hat{W}(\vy-\vx)$. 
For $\rand\in [2^n]$ and  $\vy\in Y$, 
define $\vx_{\rand,\vy}=\mathsf{Decode}_\rand(\vy)$.  
Since $\hat V$ is supported on $C^\perp$, by setting
$\vy=\vx+\ve$,
we can rewrite $\ket{\psi'_{4,\rand}}$ and  $\ket{\psi_{\rm mid}}$ as follows.
\begin{align*}
&\ket{\psi'_{4,\rand}}=\sum_{\vy\in Y}\sum_{\vx\in \mathcal{L}_\vy}a_{\vy,\vx}\ket{\vx-\vx_{\rand,\vy}}\ket{\vy},\\
    &\ket{\psi_{\rm mid}}=\sum_{\vy\in Y}\sum_{\vx\in \mathcal{L}_\vy}a_{\vy,\vx}\ket{0}\ket{\vy}.
\end{align*}
Then we have 
\begin{align*}
    \braket{\psi'_{4,\rand}}{\psi_{\rm mid}}=\sum_{\vy \in Y}
\overline{a_{\vy,\vx_{\rand,\vy}}}
    \left(\sum_{\vx\in \mathcal{L}_\vy}a_{\vy,\vx}\right).
\end{align*}
Let $p_{\vy}(\vx)$ be the probability that $\vx_{\rand,\vy}=\vx$ where $\rand\gets [2^n]$.
By \Cref{eq:rand_decode}, for any $\vy\in Y$ and $\vx\in \mathcal{L}_\vy$, 
if we define $\Delta_{\vy}(\vx)=p_{\vy}(\vx)-|\mathcal{L}_\vy|^{-1}$
we have $|\Delta_{\vy}(\vx)|\le 2^{-n}$. 

Then we have 
\begin{align*}
    \Ex_{\rand\gets [2^n]}\braket{\psi'_{4,\rand}}{\psi_{\rm mid}}
    &=\sum_{\vy \in Y}
    \left(\sum_{\vx'\in\mathcal{L}_\vy}p_{\vy}(\vx')
    \overline{a_{\vy,\vx'}}\right)
    \left(\sum_{\vx\in \mathcal{L}_\vy}a_{\vy,\vx}\right)\\
    &=\sum_{\vy \in Y}
    \left(\sum_{\vx'\in\mathcal{L}_\vy}|\mathcal{L}_\vy|^{-1}
    \overline{a_{\vy,\vx'}}\right)
    \left(\sum_{\vx\in \mathcal{L}_\vy}a_{\vy,\vx}\right)\\
    &~~~+\sum_{\vy \in Y}
    \left(\sum_{\vx'\in\mathcal{L}_\vy}\Delta_{\vy}(\vx')
    \overline{a_{\vy,\vx'}}\right)
    \left(\sum_{\vx\in \mathcal{L}_\vy}a_{\vy,\vx}\right).
\end{align*}
First, we analyze the first term as follows.
\begin{align*}
    &\sum_{\vy \in Y}
    \left(\sum_{\vx'\in\mathcal{L}_\vy}|\mathcal{L}_\vy|^{-1}
    \overline{a_{\vy,\vx'}}\right)
    \left(\sum_{\vx\in \mathcal{L}_\vy}a_{\vy,\vx}\right)\\
    &=\sum_{\vy\in Y}|\mathcal{L}_\vy|^{-1}\left|\sum_{\vx \in \mathcal{L}_\vy}a_{\vy,\vx}\right|^2\\
    &\ge L^{-1}\|\ket{\psi_{\rm mid}}\|^2\\
    &\ge L^{-1}-\negl(n)
\end{align*}
where the final inequality follows since $\|\ket{\psi_{\rm mid}}\|\ge 1-\negl(n)$ by \Cref{cla:mid_to_ideal,cla:ideal_norm}.

We analyze the absolute value of the second term as follows.
\begin{align*}
&\left|
    \sum_{\vy \in Y}
    \left(\sum_{\vx'\in\mathcal{L}_\vy}\Delta_{\vy}(\vx')
    \overline{a_{\vy,\vx'}}\right)
    \left(\sum_{\vx\in \mathcal{L}_\vy}a_{\vy,\vx}\right)
\right| \\
&\le
\sum_{\vy \in Y}
    \left(\sum_{\vx'\in\mathcal{L}_\vy}|\Delta_{\vy}(\vx')|
    |a_{\vy,\vx'}|\right)
    \left|\sum_{\vx\in \mathcal{L}_\vy}a_{\vy,\vx}\right|\\
&\le
\sum_{\vy \in Y}    \left(\sum_{\vx'\in\mathcal{L}_\vy}|\Delta_{\vy}(\vx')|^2\right)^{1/2}\left(\sum_{\vx'\in\mathcal{L}_\vy}|a_{\vy,\vx'}|^2\right)^{1/2}
    \left|\sum_{\vx\in \mathcal{L}_\vy}a_{\vy,\vx}\right|\\
&\le
2^{-n}\sqrt{L}
\sum_{\vy \in Y}
    \left(\sum_{\vx'\in\mathcal{L}_\vy}|a_{\vy,\vx'}|^2\right)^{1/2}
    \left|\sum_{\vx\in \mathcal{L}_\vy}a_{\vy,\vx}\right|\\
&\le
2^{-n}\sqrt{L}
\left(
\sum_{\vy \in Y}
\sum_{\vx'\in\mathcal{L}_\vy}|a_{\vy,\vx'}|^2
\right)^{1/2}
\left(
\sum_{\vy \in Y}
\left|\sum_{\vx\in \mathcal{L}_\vy}a_{\vy,\vx}\right|^2
\right)^{1/2}\\
&=2^{-n}\sqrt{L}\left\|\ket{\psi'_{4,\rand}}\right\|\left\|\ket{\psi_{\rm mid}}\right\|\\
&\le \negl(n)
\end{align*}
where the first inequality follows from the triangle inequality, 
the second inequality follows from Cauchy-Schwarz inequality, 
the third inequality follows from $|\Delta_{\vy}(\vx')|\le 2^{-n}$ and $|\mathcal{L}_\vy|\le L$, 
the fourth inequality follows from Cauchy-Schwarz inequality, and 
the final inequality follows from $L=\poly(n)$, $\left\|\ket{\psi'_{4,\rand}}\right\|\le 1$ by definition,  
and 
$\left\|\ket{\psi_{\rm mid}}\right\|\le 1+\negl(n)$ by \Cref{cla:mid_to_ideal,cla:ideal_norm}. 

Combining above, 
\Cref{cla:fourprime_to_mid} follows.
\end{proof}

For unnormalized states, 
we put tildes on states to mean the normalized variants, e.g.,  $\ket{\tilde{\psi}'_{4,\rand}}=\frac{\ket{\psi'_{4,\rand}}}{\|\ket{\psi'_{4,\rand}}\|}$, 
$\ket{\tilde{\psi}_{\rm mid}}=\frac{\ket{\psi_{\rm mid}}}{\|\ket{\psi_{\rm mid}}\|}$, and 
$\ket{\tilde{\psi}_{\rm ideal}}=\frac{\ket{\psi_{\rm ideal}}}{\|\ket{\psi_{\rm ideal}}\|}$.  
\begin{claim}\label{cla:fourprime_to_ideal_trace}
We have
\[
\TD\left(\Ex_{\rand\gets[2^n]}
\ketbra{\psi_{4,\rand}}{\psi_{4,\rand}},
\ketbra{\tilde{\psi}_{\rm ideal}}{\tilde{\psi}_{\rm ideal}}\right)
\le
\sqrt{1-L^{-2}}+\negl(n).
\]
\end{claim}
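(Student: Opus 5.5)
The plan is to bound the fidelity between the mixture $\sigma\defeq\Ex_{\rand\gets[2^n]}\ketbra{\psi_{4,\rand}}{\psi_{4,\rand}}$ and the pure state $\ket{\tilde\psi_{\rm ideal}}$, and then convert it to trace distance with the Fuchs--van de Graaf inequality \eqref{eq:Fuchs–van_de_Graaf}. Each $\ket{\psi_{4,\rand}}$ is normalized, since it is a unitary image of the normalized state $\ket{\psi_1}$. Hence $\sigma$ is a genuine density matrix, and its fidelity with a pure state is
\[
F(\sigma,\ketbra{\tilde\psi_{\rm ideal}}{\tilde\psi_{\rm ideal}})^2=\Ex_{\rand}\left|\braket{\psi_{4,\rand}}{\tilde\psi_{\rm ideal}}\right|^2\ge\left|\Ex_{\rand}\braket{\psi_{4,\rand}}{\tilde\psi_{\rm ideal}}\right|^2 ,
\]
where the inequality is convexity of $|\cdot|^2$ (Jensen). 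It therefore suffices to show $\left|\Ex_{\rand}\braket{\psi_{4,\rand}}{\tilde\psi_{\rm ideal}}\right|\ge L^{-1}-\negl(n)$.

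To get this lower bound, I will replace both vectors by the ones appearing in \Cref{cla:fourprime_to_mid}, using the triangle inequality and Cauchy--Schwarz. First, by \Cref{cla:four_to_four_prime}, $\|\ket{\psi_{4,\rand}}-\ket{\psi'_{4,\rand}}\|\le\negl(n)$ for every $\rand$. Since $\|\ket{\tilde\psi_{\rm ideal}}\|=1$, swapping $\ket{\psi_{4,\rand}}$ for $\ket{\psi'_{4,\rand}}$ changes each inner product, and hence its average, by at most $\negl(n)$.

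Second, I will show $\|\ket{\tilde\psi_{\rm ideal}}-\ket{\psi_{\rm mid}}\|\le\negl(n)$. By \Cref{cla:ideal_norm}, $\|\ket{\tilde\psi_{\rm ideal}}-\ket{\psi_{\rm ideal}}\|=\left|1-\|\ket{\psi_{\rm ideal}}\|\right|\le\negl(n)$. Combining this with \Cref{cla:mid_to_ideal} through the triangle inequality gives the claimed bound. Because $\|\ket{\psi'_{4,\rand}}\|\le 1$, this second swap also costs only $\negl(n)$.

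We then have
\[
\left|\Ex_{\rand}\braket{\psi_{4,\rand}}{\tilde\psi_{\rm ideal}}\right|\ge\left|\Ex_{\rand}\braket{\psi'_{4,\rand}}{\psi_{\rm mid}}\right|-\negl(n)\ge L^{-1}-\negl(n),
\]
where the last step is \Cref{cla:fourprime_to_mid}.

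Finally, \eqref{eq:Fuchs–van_de_Graaf} gives $\TD\le\sqrt{1-(L^{-1}-\negl(n))^2}\le\sqrt{1-L^{-2}+\negl(n)}$. Since $\sqrt{a+\epsilon}\le\sqrt a+\sqrt\epsilon$ and the square root of a negligible function is negligible, the claim follows.

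The only step needing care is the first one: the Jensen direction must turn the mixture's fidelity into an averaged inner product, so that \Cref{cla:fourprime_to_mid}, which bounds an average over $\rand$ rather than each term, can be applied. The remaining steps are routine bookkeeping of negligible errors, which works because every vector involved has norm at most $1+\negl(n)$.
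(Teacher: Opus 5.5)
Your proof is correct and is essentially the paper's argument: the fidelity of the mixture with a pure state, Jensen to move the average over $\rand$ inside the overlap, \Cref{cla:fourprime_to_mid}, and Fuchs--van de Graaf. The only difference is bookkeeping: you absorb the negligible errors from \Cref{cla:four_to_four_prime,cla:mid_to_ideal,cla:ideal_norm} directly into the inner product and apply Fuchs--van de Graaf once, whereas the paper normalizes $\ket{\psi'_{4,\rand}}$ and $\ket{\psi_{\rm mid}}$, bounds $\TD(\rho',\sigma_{\rm mid})$ via fidelity, and chains this with two further trace-distance bounds from \Cref{lem:2norm_to_tracenorm}; your route slightly simplifies this by avoiding the normalizations and that lemma.
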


\begin{proof}
Let
\[
\rho=
\Ex_{\rand\gets[2^n]}
\ketbra{\psi_{4,\rand}}{\psi_{4,\rand}}, ~~~
\rho'=
\Ex_{\rand\gets[2^n]}
\ketbra{\tilde{\psi}'_{4,\rand}}{\tilde{\psi}'_{4,\rand}},
~~~
\sigma_{\rm mid}=
\ketbra{\tilde{\psi}_{\rm mid}}{\tilde{\psi}_{\rm mid}},
~~~
\sigma_{\rm ideal}=
\ketbra{\tilde{\psi}_{\rm ideal}}{\tilde{\psi}_{\rm ideal}}.
\]

First, by \Cref{lem:2norm_to_tracenorm,cla:four_to_four_prime} and 
$\|\psi_{4,\rand}\|=1$, we have 
\[
\TD(\rho,\rho')\le \negl(n).
\]

Next, we show that \(\rho'\) is close to \(\sigma_{\rm mid}\). 

By \Cref{cla:four_to_four_prime}, we have $\left|\left\|\ket{\psi'_{4,\rand}}\right\|-1\right|\le \negl(n)$ for any $\rand$, and 
by \Cref{cla:mid_to_ideal,cla:ideal_norm}, 
we have $\left|\left\|\ket{\psi_{\rm mid}}\right\|-1\right|\le \negl(n)$.
Thus, \Cref{cla:fourprime_to_mid} implies 
\[
\left|
\Ex_{\rand\gets[2^n]}
\braket{\tilde{\psi}'_{4,\rand}}{\tilde{\psi}_{\rm mid}}
\right|\ge
L^{-1}-\negl(n).
\]

Since \(\sigma_{\rm mid}\) is pure, we have
\begin{align*}
F(\rho',\sigma_{\rm mid})^2
&=
\bra{\tilde{\psi}_{\rm mid}}\rho'\ket{\tilde{\psi}_{\rm mid}}\\
&=
\Ex_{\rand\gets[2^n]}
\left|
\braket{\tilde{\psi}'_{4,\rand}}{\tilde{\psi}_{\rm mid}}
\right|^2\\
&\ge
\left|
\Ex_{\rand\gets[2^n]}
\braket{\tilde{\psi}'_{4,\rand}}{\tilde{\psi}_{\rm mid}}
\right|^2\\
&\ge
\left(L^{-1}-\negl(n)\right)^2.
\end{align*}
Therefore, by the Fuchs--van de Graaf inequality (\Cref{eq:Fuchs–van_de_Graaf}), 
\begin{align*}
\TD(\rho',\sigma_{\rm mid})
&\le
\sqrt{1-F(\rho',\sigma_{\rm mid})^2}\\
&\le
\sqrt{1-\left(L^{-1}-\negl(n)\right)^2}\\
&\le
\sqrt{1-L^{-2}}+\negl(n).
\end{align*}

Next, we show closeness between $\sigma_{\rm mid}$ and $\sigma_{\rm ideal}$.
By \Cref{lem:2norm_to_tracenorm,cla:mid_to_ideal,cla:ideal_norm},  we have 
\[
\TD(\sigma_{\rm mid},\sigma_{\rm ideal})
\le
\negl(n).
\]
By the triangle inequality,
\begin{align*}
\TD(\rho,\sigma_{\rm ideal})
&\le
\TD(\rho,\rho')+
\TD(\rho',\sigma_{\rm mid})
+
\TD(\sigma_{\rm mid},\sigma_{\rm ideal})\\
&\le
\sqrt{1-L^{-2}}+\negl(n).
\end{align*} 
This completes the proof of \Cref{cla:fourprime_to_ideal_trace}.
\end{proof}

We are now ready to complete the proof of \Cref{thm:Regev_like_new}. 
Suppose that the state after Step 4 is replaced with $\ket{\psi_{\rm ideal}}$. 
As already seen in the proof of \Cref{cla:ideal_norm}, we have 
  \begin{align*}
  (I\times \QFT_{\FF_q^n}^{-1})\ket{\psi_{\rm ideal}}
 =q^{n/2}\sum_{\vy\in \FF_q^n}V(\vy)W(\vy)\ket{0}\ket{\vy}.
  \end{align*}
Noting that $V$ is the normalized indicator function of $C$, the above state is proportional to the target state 
\begin{align*}
\ket{\psi_{\rm target}}=\sum_{\vx\in C}W(\vx)\ket{\vx}.
\end{align*}
By \Cref{cla:fourprime_to_ideal_trace}, the actual algorithm's output is also within trace distance at most $$\sqrt{1-L^{-2}}+\negl(n)\le 1-\frac{1}{2L^2}+\negl(n)$$ of $\ket{\tilde{\psi}_{\rm target}}$. Since $L\le \poly(n)$, 
this completes the proof of \Cref{thm:Regev_like_new}. 
\end{proof}

\subsection{Proof of \Cref{cor:Regev_like_new}}
\begin{proof}[Proof of \Cref{cor:Regev_like_new}.]
We first show that the conditions for \Cref{thm:Regev_like_new} follow from those for \Cref{cor:Regev_like_new}. 

For every $i\in[n]$, we have 
\begin{align*}
|\hat{W}_i(0)|^2
&=\left|q^{-1/2}\sum_{x\in \FF_q}W_i(x)\right|^2\\
&=q^{-1}\left(|S_i|\cdot \sqrt{\frac{\tau}{|S_i|}}+(q-|S_i|)\cdot  \sqrt{\frac{1-\tau}{q-|S_i|}}\right)^2\\
&=\left(\sqrt{\tau\rho}+\sqrt{(1-\tau)(1-\rho)}\right)^2.
\end{align*}
We also note that 
\[
\sum_{e_i\in \FF_q}|\hat{W}_i(e_i)|^2=\sum_{e_i\in \FF_q}|W_i(e_i)|^2=1
\]
by Parseval's identity (\Cref{lem:Parseval})
and 
\[
\hat{W}(\ve)=\prod_{i\in [n]}\hat{W}_i(e_i)
\]
by \Cref{lem:QFT_prod}.

Therefore, 
the LHS of \Cref{eq:cond_one}, 
$\sum_{\ve\in T_{rn}}|\hat W(\ve)|^2$, can be interpreted as the probability that, when flipping $n$ independent
coins, each of which lands tails with probability 
$$1-(\sqrt{\tau\rho}+\sqrt{(1-\tau)(1-\rho)})^2,$$ we observe more than $rn$ tails.

By 
\Cref{eq:cond_one_cor} and
the Chernoff bound (\Cref{lem:Chernoff}), 
this probability is exponentially small in $n$. 
Thus, \Cref{eq:cond_one} holds.

Moreover, \Cref{eq:cond_two,eq:cond_three} are identical to \Cref{eq:cond_two_cor,eq:cond_three_cor}.

Thus, by \Cref{thm:Regev_like_new}, there is a quantum polynomial-time algorithm that, given $
\sum_{\ve\in \FF_q^n}W(\ve)\ket{\ve}
$, generates a state $\rho_{\rm out}$ that is within trace distance $1-1/\poly(n)$ of
$\ket{\tilde{\psi}_{\rm target}}=\frac{\ket{\psi_{\rm target}}}{\|\ket{\psi_{\rm target}}\|}$ where 
$\ket{\psi_{\rm target}}=
\sum_{\vx\in C}W(\vx)\ket{\vx}$. 

Here, we observe that $
\sum_{\ve\in \FF_q^n}W(\ve)\ket{\ve}
$ can be generated in quantum polynomial-time given $q,\rho,\tau$, and quantum access to the membership oracles for the sets $S_i$ except with negligible failure
probability.
First, we remark that 
\[
\sum_{\ve\in\FF_q^n} W(\ve)\ket{\ve}
=
\bigotimes_{i=1}^n
\left(
\sum_{e_i\in\FF_q} W_i(e_i)\ket{e_i}
\right),
\]
so it suffices to generate 
\[
\sum_{e_i\in\FF_q} W_i(e_i)\ket{e_i}=\sum_{e_i\in S_i}\sqrt{\frac{\tau}{|S_i|}}\ket{e_i}+\sum_{e_i\notin S_i}\sqrt{\frac{1-\tau}{q-|S_i|}}\ket{e_i}
\]
for each $i\in [n]$.

This can be done as follows. 
First, prepare the uniform superposition over $\FF_q$ along with an additional ancillary qubit initialized to $\ket{0}$:
\[
\frac{1}{\sqrt{q}}\sum_{e_i\in \FF_q}\ket{e_i}\ket{0}.
\]
Then controlled by the first register's value $e_i$, apply the following unitary on the second qubit.
\begin{align*}
  \ket{0}\mapsto \begin{cases}
      \ket{1}~~~& e_i\in S_i\\
      \alpha \ket{1}+\sqrt{1-\alpha^2}\ket{0}~~~& e_i\notin S_i
  \end{cases} 
\end{align*}
where 
\[
\alpha
=
\sqrt{\frac{\rho}{\tau}}\sqrt{\frac{1-\tau}{1-\rho}}.
\]
Note that $\alpha\le 1$ since we assume $\tau\ge \rho=|S_i|/q$. 
Also note that this unitary can be implemented by making a quantum query to the membership oracle of $S_i$.  

This results in the state 
\[
\frac{1}{\sqrt{q}}\left(
\sum_{e_i\in S_i}\ket{e_i}\ket{1}+\sum_{e_i\notin S_i}\ket{e_i}(\alpha \ket{1}+\sqrt{1-\alpha^2}\ket{0})
\right).
\]
Then measure the second register. 
The measurement outcome is $1$ with probability
\[
\frac{|S_i|+\alpha^2(q-|S_i|)}{q}=\frac{\rho}{\tau}\ge \rho.
\]
If the measurement outcome is $0$, repeat the same procedure until one obtains $1$. Since $\rho\ge 1/\poly(n)$, this can be done with $\poly(n)$ times repetitions except with negligible probability. 

Suppose that the measurement outcome is $1$. 
Then, the first register collapses to
\[
\sqrt{\frac{\tau}{q\rho}}\left(
\sum_{e_i\in S_i}\ket{e_i}+\sum_{e_i\notin S_i}\alpha\ket{e_i}
\right)=\sum_{e_i\in S_i}\sqrt{\frac{\tau}{|S_i|}}\ket{e_i}+\sum_{e_i\notin S_i}\sqrt{\frac{1-\tau}{q-|S_i|}}\ket{e_i}
\]
as desired. 

Thus, $\rho_{\rm out}$ can be generated in quantum polynomial-time given $q,\rho,\tau$, and quantum access to the membership oracles for the sets $S_i$ except with negligible failure
probability.

Suppose that we measure $\ket{\tilde{\psi}_{\rm target}}$ in the computational basis. 
The probability that the measurement outcome is not a solution to $\MaxLINSAT(q,\mA,\{S_i\}_{i\in[n]},s)$ is
\[
\frac{\sum_{\vx\in C_{sn}}|W(\vx)|^2}{\sum_{\vx\in C}|W(\vx)|^2}.
\]
This is negligible in $n$ by \Cref{eq:cond_three_cor,eq:cond_four_cor}. 
Thus, if we measure $\ket{\tilde{\psi}_{\rm target}}$ in the computational basis, we obtain a solution to $\MaxLINSAT(q,\mA,\{S_i\}_{i\in[n]},s)$ with probability $1-\negl(n)$. 
Since $\rho_{\rm out}$  is within trace distance $1-1/\poly(n)$ of $\ket{\tilde{\psi}_{\rm target}}$, if we measure $\rho_{\rm out}$ in the computational basis, we obtain a solution to $\MaxLINSAT(q,\mA,\{S_i\}_{i\in[n]},s)$ with probability $1/\poly(n)$. 
\end{proof}
\begin{remark}
In the proof above, the quantum algorithm appears to require the value of $\rho$, even though $\rho$ is not explicitly given as part of the MaxLINSAT instance. This is not an issue: using oracle access to the sets $S_i$, one can approximate $\rho$ to arbitrarily good inverse-polynomial precision. Such an approximation is sufficient to prepare a state sufficiently close to $\rho_{\rm out}$ and hence to solve the MaxLINSAT problem.
\end{remark}

\section{Worst-Case Quantum Algorithm for Optimal Polynomial Intersection}\label{sec:main}

In this section, we show the following theorem. 
\begin{theorem}\label{thm:main}
Let $q=p^e$ be a prime power. 
Let $n,k$ be positive integers such that 
$n/2\le k\le n\le q$ and let $R=k/n$. 
Let $\rho,\tau\in[0,1]$ be reals with $0<\rho<\tau\le 1$.
The parameters may depend on $n$ as long as they satisfy 
$p=\omega(1)$, $\log q\le \poly(n)$, 
$1-R=\Omega(1)$, 
$\rho=\Omega(1)$, $1-\rho=\Omega(1)$,  
and $\tau-\rho=\Omega(1)$. 

Suppose that the following hold.
\begin{align}
     R\ge 1-\left(\sqrt{\tau \rho}+\sqrt{(1-\tau)(1-\rho)}\right)^4+\Omega(1) \label{eq:theorem_cond_one}
\end{align}
\begin{align}
-(H(R)+R \log M_{\tau,\rho})= \Omega(1) \label{eq:theorem_cond_two}
\end{align}
where 
$$H(R)=-R\log R-(1-R)\log (1-R)$$
is the binary entropy function, and 
\[
M_{\tau,\rho}=\lambda_{\tau,\rho}^Q\left(\frac{\tau-\rho}{\rho(1-\rho)}\right)^Q N_{\rho,Q},
\]
for 
\[
Q=1/(1-R), 
\]
\begin{align*}
N_{\rho,Q}
=
\begin{cases}
    \left(\rho(1-\rho)\right)^{\frac{4-Q}{2}}\left(\rho_0^3\left(\frac{2}{3}-\rho_0\right)\right)^{\frac{Q-2}{2}} 
    ~~~&Q\le 4\\
    \left(\frac{\sin(\rho \pi)}{\pi}\right)^{Q-4}\rho_0^3\left(\frac{2}{3}-\rho_0\right) &Q\ge 4,
\end{cases} 
\end{align*}
\[
\rho_0=\min\{\rho,1-\rho\}, 
\]
and 
\[
\lambda_{\tau,\rho}=\max\left\{1,\frac{4\left(\sqrt{\tau\rho}+\sqrt{(1-\tau)(1-\rho)}\right)}{\sqrt{\tau \rho^{-1}}+\sqrt{(1-\tau)(1-\rho)^{-1}}}-1\right\}.
\]

Then there exists a quantum polynomial-time algorithm that solves $\OPI(k,q,\valpha,\{S_i\}_{i\in[n]},s)$ for 
any $s=\tau-\Omega(1)$, 
any distinct $\alpha_1,\alpha_2,...,\alpha_n\in \FF_q$ and any $S_i\subseteq \FF_q$ such that $|S_i|=\rho q$ for all $i\in [n]$.

Moreover, if the conditions hold for $\tau=1$, then the algorithm works even when $s=1$. 
\end{theorem}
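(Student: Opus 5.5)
The proof applies \Cref{cor:Regev_like_new} with $C=\RS_{\FF_q,\valpha,k-1}$. By \cite{GS99,CHK2026}, its dual is a GRS code of rate $1-R$ and is $(r,L)$-list-decodable with $r=1-\sqrt{1-R}-\epsilon$ and $L=\poly(n)$. We take the tilted weights $W_i$ with parameter $\tau$ and verify the four conditions.

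\textbf{Condition \Cref{eq:cond_one_cor}.} Write $\beta=(\sqrt{\tau\rho}+\sqrt{(1-\tau)(1-\rho)})^2$. The condition asks for $1-\sqrt{1-R}>1-\beta+\Omega(1)$, i.e.\ $1-R<\beta^2-\Omega(1)$. This is exactly \Cref{eq:theorem_cond_one}, after choosing $\epsilon$ small.

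\textbf{Condition \Cref{eq:cond_two_cor}.} This is the main step, and we follow the overview.
\begin{itemize}
\item Expand the sum over $\vy$ to rewrite the quantity as $\sum_{\vu\in C^\perp}\Gamma(\vu)$, where $\Gamma(\vu)=\sum_{\ve}\one_{T_{rn}}(\ve)\one_{T_{rn}}(\ve+\vu)\hat W(\ve)\overline{\hat W(\ve+\vu)}$.
\item The term $\Gamma(0)$ is the Chernoff quantity already handled.
\item For $\vu\neq 0$, bound $|\Gamma(\vu)|\le\prod_i\|K_i^{(u_i)}\|_1$, where $K_i^{(b)}(X,Y)=\sum_a\hat W_i(a)\overline{\hat W_i(a+b)}X^{\one_{a\neq0}}Y^{\one_{a+b\neq0}}$. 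Extracting coefficients with both exponents exceeding $rn$ costs at most the $\ell_1$ norm.
\item Compute the $K_i$ for general $\tau$. We have $\|K_i^{(0)}\|_1=1$. For $b\ne0$, note that $W_i=c_1\one_{S_i}+c_2$ with $c_1=\sqrt{\tau/|S_i|}-\sqrt{(1-\tau)/(q-|S_i|)}$. Hence $\hat W_i(a)=c_1\hat\one_{S_i}(a)$ for $a\ne0$, while $\hat W_i(0)=\sqrt\beta$.
\item The expansion of $K_i^{(b)}$ then has three kinds of terms. The terms $a=0$ and $a=-b$ give $\sqrt\beta\,c_1|\hat\one_{S_i}(b)|$ times monomials $Y$ and $X$. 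The terms with $a,a+b\neq0$ give $c_1^2\sum_a\hat\one_{S_i}(a)\overline{\hat\one_{S_i}(a+b)}$, which by Parseval is proportional to $\hat\one_{S_i}(b)$.
\item Collecting these, $\|K_i^{(b)}\|_1\le \lambda_{\tau,\rho}\cdot\frac{\tau-\rho}{\rho(1-\rho)}\cdot\frac{|\hat\one_{S_i}(b)|}{\sqrt q}$. I expect this to be the content of lemmas analogous to \Cref{lem:K_zero,lem:K_non_zero}. Checking that the constant comes out exactly as $\lambda_{\tau,\rho}(\tau-\rho)/(\rho(1-\rho))$ is the delicate algebra. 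I would first compute $c_1\sqrt q=(\tau-\rho)/(\sqrt{\beta}\,\rho(1-\rho))\cdot(\ldots)$-type identities using $\sqrt\beta=\sqrt{\tau\rho}+\sqrt{(1-\tau)(1-\rho)}$.
\end{itemize}

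With $h_i(a)=t\|K_i^{(a)}\|_1$ for $a\ne0$ and $h_i(0)=1$, we use that $\wt(\vu)\ge k+1$ on $C^\perp\setminus\{0\}$. Applying \Cref{thm:MDS_BL} to the MDS code $C^\perp$ of rate $1-R$, with exponent $Q=1/(1-R)$, gives the bound $t^{-k}(1+t^QM)^{n/Q}$. By \Cref{thm:Fourier_Qth_power_sum}, we may take $M=M_{\tau,\rho}+o(1)$, using $p=\omega(1)$. Optimizing $t$ gives exponent $\frac{n}{Q}(H(R)+R\log M)$, which is $-\Omega(n)$ by \Cref{eq:theorem_cond_two}. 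Hence the quantity is $2^{-\Omega(n)}+\Gamma(0)=\negl(n)$.

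\textbf{Conditions \Cref{eq:cond_three_cor} and \Cref{eq:cond_four_cor}.} Write $q^{n-k}\sum_{\vx\in C}|W(\vx)|^2=q^{n}\sum_{\vx\in C}|W|^2/|C|$. By Poisson summation over $C$, this equals $\sum_{\vu\in C^\perp}\widehat{|W|^2}(\vu)\,q^{n/2}$, i.e.\ $1+\sum_{\vu\in C^\perp\setminus0}\prod_i\widehat{|W_i|^2}(u_i)\sqrt q$. For $b\ne0$ we have $|W_i|^2=(\text{const})\one_{S_i}+\text{const}$, so $\sqrt q\,\widehat{|W_i|^2}(b)$ is proportional to $\hat\one_{S_i}(b)/\sqrt q$ with constant $(\tau-\rho)/(\rho(1-\rho))$. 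The same tilted Brascamp--Lieb argument then shows the error is negligible, since the required constant is no larger than the one already controlled, as $\lambda\ge1$.

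For \Cref{eq:cond_four_cor}, we compare with weights at parameter $\tau'$ with $s<\tau'<\tau$. For $\vx$ with fewer than $sn$ satisfied coordinates, the ratio $|W^{(\tau)}(\vx)|^2/|W^{(\tau')}(\vx)|^2$ is at most $(\tau/\tau')^{sn}((1-\tau)/(1-\tau'))^{(1-s)n}$. Choosing $\tau'$ to make this $2^{-\Omega(n)}$ works, because the function $\sigma\mapsto\sigma\log(\tau/\tau')+(1-\sigma)\log((1-\tau)/(1-\tau'))$ is negative at $\sigma=s$ when $\tau'$ is close to $\tau$. Then \Cref{eq:cond_three_cor} at $\tau'$ bounds $q^{n-k}\sum_C|W^{(\tau')}|^2\le 1+\negl$. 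This needs $\tau'$ to still satisfy the conditions, which holds by the $\Omega(1)$ margins and continuity.

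When $\tau=1$, the set $C_{sn}$ with $s=1$ carries zero $W$-mass, so \Cref{eq:cond_four_cor} is trivial. \Cref{cor:Regev_like_new} then yields the algorithm. The main obstacle is computing $\|K_i^{(b)}\|_1$ and obtaining exactly the factor $\lambda_{\tau,\rho}$.
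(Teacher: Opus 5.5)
Your proposal is correct and follows the paper's proof essentially step for step. It uses the same reduction to \Cref{cor:Regev_like_new}, the same shift-correlation decomposition bounded via the generating polynomials $K_i^{(b)}$, and the same tilted MDS Brascamp--Lieb bound combined with \Cref{thm:Fourier_Qth_power_sum}. It also uses Poisson summation for \Cref{eq:cond_three_cor} and a change-of-measure comparison for \Cref{eq:cond_four_cor}; there the paper raises $s$ to some $s'$ close to $\tau$ and compares with the weights at $s'$, which plays exactly the role of your intermediate $\tau'$.

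The algebra you flagged does close. Set $A=\sqrt{\tau/\rho}$, $B=\sqrt{(1-\tau)/(1-\rho)}$ and $\kappa=\sqrt{\beta}/(A+B)$. Then $c_1\sqrt{q}=A-B$, $(A-B)(A+B)=\frac{\tau-\rho}{\rho(1-\rho)}$, and $(1-2\rho)(A-B)=A+B-2\sqrt{\beta}$. Hence $\|K_i^{(b)}\|_1=(2\kappa+|1-2\kappa|)\frac{\tau-\rho}{\rho(1-\rho)}\frac{|\hat{\one}_{S_i}(b)|}{\sqrt{q}}$, and $2\kappa+|1-2\kappa|=\max\{1,4\kappa-1\}=\lambda_{\tau,\rho}$.
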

\begin{remark}[The case of $\rho=1/2$]\label{rem:case_of_rho_half}
We explain how \Cref{thm:main_intro} is derived from \Cref{thm:main}. 
First, when $\rho=1/2$ and $\tau=1$, \Cref{eq:theorem_cond_one} is equivalent to
\[
R\ge 0.75+\Omega(1).
\]
Moreover, for $R>0.75$ and $\tau=1$, one can check that \Cref{eq:theorem_cond_two} is always satisfied. 
Thus, there is quantum polynomial-time algorithm that solves OPI with satisfaction rate $s=1$ in the worst-case whenever $R>0.75$. This gives the first part of \Cref{thm:main_intro}.

For the second part, assume $R\le 0.75$, in which case $Q\le 4$. 
When $Q\le 4$ and $\rho=1/2$, one can check that 
\[\lambda_{\tau,\rho}=1\]
and 
\[
N_{\rho,Q}=\left(\frac{1}{4}\right)^{\frac{4-Q}{2}}\left(\frac{1}{48}\right)^{\frac{Q-2}{2}}.
\]
Substituting them into the definition of $M_{\tau,\rho}$, we have 
\[
M_{\tau,\rho}=3\left(\frac{2\tau-1}{\sqrt{3}}\right)^Q. 
\]

Thus, 
\Cref{eq:theorem_cond_one,eq:theorem_cond_two} reduce to 
\Cref{eq:theorem_cond_one_intro,eq:theorem_cond_two_intro} in \Cref{thm:main_intro}, with $\tau$ in place of $s$.
However, in order to obtain an algorithm for satisfaction rate $s$, it suffices to satisfy the conditions for some $\tau\ge s+\Omega(1)$.
Since the conditions in \Cref{thm:main_intro} are assumed to hold for $s$ with a constant margin, by continuity we can choose $\tau\ge s+\Omega(1)$ sufficiently close to $s$ so that the same conditions also hold for $\tau$.

By the above argument  \Cref{thm:main_intro} follows. 
\end{remark}

In  the next subsection, we prove \Cref{thm:main} using \Cref{cor:Regev_like_new}. 
\subsection{Proof of \Cref{thm:main}}\label{sec:proof_main}
In this subsection, we prove \Cref{thm:main}. 

Note that 
$\OPI(k,q,\valpha,\{S_i\}_{i\in[n]},s)$  is equivalent to 
$\MaxLINSAT(q,\mA(\RS_{\FF_q,\valpha,k-1}),\{S_i\}_{i\in[n]},s)$ where $\mA(\RS_{\FF_q,\valpha,k-1})$ is the generator matrix of the Reed-Solomon code $\RS_{\FF_q,\valpha,k-1}$.  
Thus, it suffices to show that the conditions of \Cref{cor:Regev_like_new} are satisfied for $C=\RS_{\FF_q,\valpha,k-1}$. 

When $C=\RS_{\FF_q,\valpha,k-1}$, its dual is a generalized Reed-Solomon code of dimension $n-k$, and thus 
$(r,L)$-list-decodable for 
$r=1-\sqrt{1-R}-\Omega(1)$ and $L\le \poly(n)$. 
Thus, \Cref{eq:theorem_cond_one} in \Cref{thm:main} implies \Cref{eq:cond_one_cor} in \Cref{cor:Regev_like_new}. 

It remains to prove \Cref{eq:cond_two_cor,eq:cond_three_cor,eq:cond_four_cor} in \Cref{cor:Regev_like_new} also hold. 

Before proceeding with the proofs, we make an observation that simplifies the argument. 
Let $M_{s,\rho}$ be the same as $M_{\tau,\rho}$ except that $\tau$ is replaced with $s$. 
Then we can assume 
\begin{align}
-(H(R)+R \log M_{s,\rho})= \Omega(1) \label{eq:theorem_cond_two_s}
\end{align}
without loss of generality. This can be seen as follows. Since \Cref{eq:theorem_cond_two} holds, we can choose $s'=\tau-\Omega(1)$ sufficiently close to $\tau$ so that
\begin{align*}
-(H(R)+R \log M_{s',\rho})= \Omega(1). 
\end{align*}
Moreover, if $s'> s$, solving $\OPI(k,q,\valpha,\{S_i\}_{i\in[n]},s)$ is easier than $\OPI(k,q,\valpha,\{S_i\}_{i\in[n]},s')$. Thus, we can replace $s$ with $s'$ so that  we can assume \Cref{eq:theorem_cond_two_s} without loss of generality. Looking ahead, this is used in the proof of \Cref{eq:cond_four_cor} in \Cref{sec:bound_four}.

Below, we use the notations from \Cref{cor:Regev_like_new} and prove \Cref{eq:cond_two_cor,eq:cond_three_cor,eq:cond_four_cor}. 
Below, let $W$ and $W_i$ be the functions defined in \Cref{cor:Regev_like_new}.

\subsubsection{Proving \Cref{eq:cond_two_cor}}\label{sec:bound_two}
Let  $\epsilon$ be the LHS of \Cref{eq:cond_two_cor}. 
That is, 
\begin{align*}\epsilon=
q^{-(n-k)}
\sum_{\vy\in \FF_q^n}
\left|
\sum_{\vx\in C^\perp}
\one_{T_{rn}}(\vy-\vx)\hat W(\vy-\vx)
\right|^2.
\end{align*}

\paragraph{Decomposition into shift correlations.}
We first rewrite $\epsilon$ as a sum of correlation terms indexed by shifts in the dual code $C^\perp$.
\begin{lemma}\label{lem:split_to_sum_of_gamma}
    It holds that 
\begin{align*}
    \epsilon=\sum_{\vu\in C^\perp}\Gamma(\vu), 
\end{align*}
where 
\[
\Gamma(\vu)=\sum_{\ve\in \FF_q^n}
\one_{T_{rn}}(\ve)\one_{T_{rn}}(\ve+\vu)\hat W(\ve)\overline{\hat W(\ve+\vu)}.
\]
\end{lemma}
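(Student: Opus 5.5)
The plan is to expand the square in the definition of $\epsilon$ and then reindex the double sum over pairs of dual codewords by their difference. Write $g(\ve)=\one_{T_{rn}}(\ve)\hat W(\ve)$, so that
\[
\epsilon=q^{-(n-k)}\sum_{\vy\in\FF_q^n}\Bigl|\sum_{\vx\in C^\perp}g(\vy-\vx)\Bigr|^2
=q^{-(n-k)}\sum_{\vy\in\FF_q^n}\sum_{\vx,\vx'\in C^\perp}g(\vy-\vx)\overline{g(\vy-\vx')}.
\]
All sums are finite, so the order of summation can be exchanged freely.

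For each pair $(\vx,\vx')\in C^\perp\times C^\perp$, I will substitute $\ve=\vy-\vx$ in the sum over $\vy$. Since $\vy\mapsto\vy-\vx$ is a bijection of $\FF_q^n$, $\ve$ ranges over all of $\FF_q^n$. Setting $\vu=\vx-\vx'$, we have $\vy-\vx'=\ve+\vu$, and $\vu\in C^\perp$ because $C^\perp$ is a linear subspace. The inner sum over $\vy$ then equals
\[
\sum_{\ve\in\FF_q^n}g(\ve)\overline{g(\ve+\vu)}=\Gamma(\vu).
\]
This is exactly $\Gamma$ as defined in the statement, since $\one_{T_{rn}}$ is real-valued and so passes through the complex conjugate.

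Next I reindex the pairs $(\vx,\vx')$ by $(\vx,\vu)$ with $\vu=\vx-\vx'$. This map is a bijection $C^\perp\times C^\perp\to C^\perp\times C^\perp$, and the summand depends only on $\vu$. Hence each $\vu\in C^\perp$ is counted exactly $|C^\perp|=q^{n-k}$ times, once for each $\vx$. That multiplicity cancels the prefactor $q^{-(n-k)}$ and gives $\epsilon=\sum_{\vu\in C^\perp}\Gamma(\vu)$.

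There is no substantive obstacle; the only care needed is to check the sign convention, namely that the conjugated factor carries $+\vu$ and that $\vu$ rather than $-\vu$ appears. Because $C^\perp$ is closed under negation, either convention yields the same sum, so the identity holds regardless.
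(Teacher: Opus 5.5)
Your proof is correct and follows the paper's argument: expand the square, substitute $\ve=\vy-\vx$, reindex pairs of dual codewords by their difference, and let the multiplicity $|C^\perp|=q^{n-k}$ cancel the prefactor. Your choice $\vu=\vx-\vx'$ gives $\ve+\vu$ exactly. The paper's $\vu=\vx'-\vx$ literally yields $\ve-\vu$ and relies silently on $C^\perp=-C^\perp$, the symmetry you point out.
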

\begin{proof}[Proof of \Cref{lem:split_to_sum_of_gamma}]
Define 
\begin{align*}
g(\ve)=\one_{T_{rn}}(\ve)\hat W(\ve).
\end{align*}
Then we have 
\begin{align*}
  \epsilon=  q^{-(n-k)}
\sum_{\vy\in \FF_q^n}
\sum_{\vx,\vx'\in C^\perp}
g(\vy-\vx)
\overline{g(\vy-\vx')}.
\end{align*}
Note that for any $\vx\in C^\perp$, when $\vx'$ ranges over $C^\perp$, $\vx'-\vx$ also ranges over $C^\perp$.
Thus, by changing variables as $\vu=\vx'-\vx$ and $\ve=\vy-\vx$, we have 
\begin{align*}
  \epsilon
  &=  q^{-(n-k)}
\sum_{\ve\in \FF_q^n}
\sum_{\vx,\vu\in C^\perp}
g(\ve)
\overline{g(\ve+\vu)}\\
&=\sum_{\ve\in \FF_q^n}
\sum_{\vu\in C^\perp}
g(\ve)
\overline{g(\ve+\vu)}.
\end{align*}
\Cref{lem:split_to_sum_of_gamma} directly follows from this.
\end{proof}

\begin{lemma}\label{lem:gamma_zero}
    It holds that
    \[
    \Gamma(0)\le \negl(n).
    \]
\end{lemma}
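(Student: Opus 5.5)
Setting $\vu=0$ in the definition of $\Gamma$ and using $\one_{T_{rn}}(\ve)^2=\one_{T_{rn}}(\ve)$, we get
\[
\Gamma(0)=\sum_{\ve\in T_{rn}}|\hat W(\ve)|^2 ,
\]
which is the left-hand side of \Cref{eq:cond_one}. The plan is therefore to repeat the Chernoff argument already used in the proof of \Cref{cor:Regev_like_new}. We only need to check that its hypothesis \Cref{eq:cond_one_cor} holds in the present setting.

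First, by \Cref{lem:QFT_prod} we have $\hat W(\ve)=\prod_i \hat W_i(e_i)$. By Parseval (\Cref{lem:Parseval}), $\sum_{a}|\hat W_i(a)|^2=1$, so each $|\hat W_i|^2$ is a probability distribution on $\FF_q$. Under it, the mass at $0$ is
\[
|\hat W_i(0)|^2=\left(\sqrt{\tau\rho}+\sqrt{(1-\tau)(1-\rho)}\right)^2=:\beta ,
\]
as computed in that proof. Consequently, $\Gamma(0)$ equals the probability that a product random vector has more than $rn$ nonzero coordinates. That Hamming weight is a sum of $n$ independent Bernoulli variables with mean $1-\beta$ each.

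Second, we verify that $r\ge 1-\beta+\Omega(1)$. In the proof of \Cref{thm:main} we took $r=1-\sqrt{1-R}-\Omega(1)$, using the list-decodability of the dual GRS code. Condition \Cref{eq:theorem_cond_one} says $1-R\le \beta^2-\Omega(1)$, so $\sqrt{1-R}\le \beta-\Omega(1)$. Hence $r\ge 1-\beta+\Omega(1)$ once the list-decoding slack is chosen smaller than this margin.

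Finally, we apply the Chernoff bound (\Cref{lem:Chernoff}) with $\mu=(1-\beta)n$ and $(1+\delta)\mu=rn$, where $\delta\mu=\Omega(n)$. This gives $\Gamma(0)\le e^{-\Omega(n)}=\negl(n)$.

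The one subtle point is the degenerate case where $1-\beta$ is tiny, or even $0$ when $\tau=1$ and $\rho\to 1$. This is handled directly: if $\mu=0$ the probability is $0$. Otherwise $\frac{\delta^2\mu}{2+\delta}\ge \frac{\delta\mu}{3}$ for $\delta\ge 1$, so the bound is $e^{-\Omega(n)}$ regardless of $\mu$. In every case $\Gamma(0)$ is negligible.
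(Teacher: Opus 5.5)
Your proposal is correct and takes essentially the same route as the paper. The paper likewise observes $\Gamma(0)=\sum_{\ve\in T_{rn}}|\hat W(\ve)|^2$ and invokes the Chernoff argument from the proof of \Cref{cor:Regev_like_new}, with \Cref{eq:cond_one_cor} supplied by \Cref{eq:theorem_cond_one}; you additionally make explicit the margin step $\sqrt{1-R}\le\beta-\Omega(1)$ and the Chernoff exponent, which the paper leaves implicit.
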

\begin{proof}[Proof of \Cref{lem:gamma_zero}]
We have 
\begin{align*}
\Gamma(0)
=\sum_{\ve\in T_{rn}}|\hat{W}(\ve)|^2.
\end{align*}
As seen in the proof of \Cref{cor:Regev_like_new}, \Cref{eq:cond_one_cor}, which follows from \Cref{eq:theorem_cond_one},  implies that this is $\negl(n)$. 
\end{proof}

By \Cref{lem:split_to_sum_of_gamma,lem:gamma_zero},  for proving that $\epsilon\le \negl(n)$, it suffices to show 
\begin{align}
    \sum_{\vu\in C^\perp \setminus \{0\}}\left|\Gamma(\vu)\right|\le \negl(n). \label{eq:sum_of_gamma_is_negl}
\end{align}

\paragraph{Bounding $\Gamma$ via generating polynomials.} 
Next, we bound $\Gamma(\vu)$ by introducing a bivariate generating polynomial whose coefficients keep track of the two weights $\wt(\ve)$ and $\wt(\ve+\vu)$.

For 
$i\in [n]$ and 
$b\in \FF_q$, define a polynomial
\begin{align*}
  K_i^{(b)}(X,Y)=\sum_{a\in \FF_q}\hat{W}_i(a)\overline{\hat{W}_i(a+b)}X^{\one_{a\neq 0}}Y^{\one_{a+b\ne 0}},   
\end{align*}
where $\one_{a\neq 0}$ is a variable that takes $1$ when $a\ne 0$ and otherwise takes $0$, and $\one_{a+b\ne 0}$ is defined similarly. 

Below, for a polynomial $F(X,Y)\in \mathbb{C}[X,Y]$, $\|F(X,Y)\|_1$ denotes its $\ell_1$ norm, that is,  the sum of the absolute values of all the coefficients. 

\begin{lemma}\label{lem:upper_bound_gamma_by_K}
It holds that 
\begin{align*}
   |\Gamma(\vu)|\le \prod_{i=1}^{n}\|K_i^{(u_i)}(X,Y)\|_1.
\end{align*} 
\end{lemma}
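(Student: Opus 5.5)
The plan is to write $\Gamma(\vu)$ as a sum of coefficients of the product polynomial $\prod_{i=1}^n K_i^{(u_i)}(X,Y)$, and then bound that sum by the $\ell_1$ norm of the product, which is submultiplicative.

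\textbf{Step 1: product expansion.} Since $\hat W(\ve)=\prod_i \hat W_i(e_i)$ by \Cref{lem:QFT_prod}, for each $\ve$ we have
\[
\hat W(\ve)\overline{\hat W(\ve+\vu)}=\prod_{i=1}^n \hat W_i(e_i)\overline{\hat W_i(e_i+u_i)}.
\]
Expanding the product of sums $\prod_i K_i^{(u_i)}(X,Y)$ term by term, each tuple $\ve=(e_1,\ldots,e_n)\in\FF_q^n$ contributes the monomial
\[
\Bigl(\prod_i \hat W_i(e_i)\overline{\hat W_i(e_i+u_i)}\Bigr)\, X^{\sum_i \one_{e_i\ne 0}}\, Y^{\sum_i \one_{e_i+u_i\ne 0}}
=\hat W(\ve)\overline{\hat W(\ve+\vu)}\,X^{\wt(\ve)}Y^{\wt(\ve+\vu)}.
\]
Grouping by exponents, the coefficient $[X^aY^b]\prod_i K_i^{(u_i)}$ is exactly the sum of $\hat W(\ve)\overline{\hat W(\ve+\vu)}$ over those $\ve$ with $\wt(\ve)=a$ and $\wt(\ve+\vu)=b$.

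\textbf{Step 2: identify $\Gamma(\vu)$.} The indicator $\one_{T_{rn}}(\ve)\one_{T_{rn}}(\ve+\vu)$ selects precisely the $\ve$ with $a,b>rn$. Hence
\[
\Gamma(\vu)=\sum_{a,b>rn}[X^aY^b]\prod_{i=1}^n K_i^{(u_i)}(X,Y).
\]

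\textbf{Step 3: bound by the $\ell_1$ norm.} By the triangle inequality, $|\Gamma(\vu)|$ is at most the sum of the absolute values of all coefficients of the product, that is, $\|\prod_i K_i^{(u_i)}\|_1$. The $\ell_1$ norm on $\mathbb{C}[X,Y]$ is submultiplicative: $\|FG\|_1\le\|F\|_1\|G\|_1$, because every coefficient of $FG$ is a sum of products of coefficients of $F$ and $G$, and the triangle inequality applies to each. Applying this inductively over the $n$ factors gives
\[
|\Gamma(\vu)|\le \prod_{i=1}^{n}\|K_i^{(u_i)}(X,Y)\|_1.
\]

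There is no genuine obstacle here. The only point needing care is the bookkeeping in Step 1, namely checking that the exponents of $X$ and $Y$ are the weights $\wt(\ve)$ and $\wt(\ve+\vu)$ rather than some other count.
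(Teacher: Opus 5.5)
Your proposal is correct and follows the paper's proof: both expand $\prod_{i=1}^n K_i^{(u_i)}(X,Y)$ as $\sum_{\ve}\hat W(\ve)\overline{\hat W(\ve+\vu)}X^{\wt(\ve)}Y^{\wt(\ve+\vu)}$ and identify $\Gamma(\vu)$ with the sum of the coefficients with $a,b>rn$. The paper says the bound then ``immediately follows,'' and your triangle-inequality and $\ell_1$-submultiplicativity argument is exactly the omitted justification.
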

\begin{proof}[Proof of \Cref{lem:upper_bound_gamma_by_K}]
For $\vu=(u_1,u_2,...,u_n)\in \FF_q^n$, we have 
\begin{align}
   \Gamma(\vu)=\sum_{a,b> rn}[X^a Y^b]\prod_{i=1}^{n}K_i^{(u_i)}(X,Y). \label{eq:Gamma_to_K}
\end{align}
where $[X^aY^b]F(X,Y)$ denotes the coefficient of $X^aY^b$ in $F(X,Y)$.

Indeed, expanding the product, we get
\begin{align*}
\prod_{i=1}^{n}K_i^{(u_i)}(X,Y)
&=
\prod_{i=1}^n
\sum_{e_i\in \FF_q}
\hat{W}_i(e_i)\overline{\hat{W}_i(e_i+u_i)}
X^{\one_{e_i\ne 0}}
Y^{\one_{e_i+u_i\ne 0}}\\
&=
\sum_{\ve\in \FF_q^n}
\prod_{i=1}^n
\hat{W}_i(e_i)\overline{\hat{W}_i(e_i+u_i)}
X^{\one_{e_i\ne 0}}
Y^{\one_{e_i+u_i\ne 0}}\\
&=\sum_{\ve\in \FF_q^n}
\hat{W}(\ve)\overline{\hat{W}(\ve+\vu)}
X^{\wt(\ve)}
Y^{\wt(\ve+\vu)}.
\end{align*}
\Cref{eq:Gamma_to_K} immediately follows from the above and the definition of $\Gamma$.  

\Cref{lem:upper_bound_gamma_by_K} immediately follows from \Cref{eq:Gamma_to_K}. 
\end{proof}

\begin{lemma}\label{lem:K_zero}
It holds that 
$$\|K_i^{(0)}(X,Y)\|_1=1.$$
\end{lemma}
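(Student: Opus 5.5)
We need to show $\|K_i^{(0)}(X,Y)\|_1=1$. By definition, with $b=0$,
\[
K_i^{(0)}(X,Y)=\sum_{a\in\FF_q}|\hat W_i(a)|^2 X^{\one_{a\ne 0}}Y^{\one_{a\ne 0}}
=|\hat W_i(0)|^2+\Bigl(\sum_{a\ne 0}|\hat W_i(a)|^2\Bigr)XY,
\]
since the exponents of $X$ and $Y$ coincide when $b=0$. Thus $K_i^{(0)}$ has only two monomials, $1$ and $XY$, and both coefficients are nonnegative reals.

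Because all coefficients are nonnegative, the $\ell_1$ norm is just their sum, $\sum_{a\in\FF_q}|\hat W_i(a)|^2$. By Parseval's identity (\Cref{lem:Parseval}), this equals $\sum_{e\in\FF_q}|W_i(e)|^2$. From the definition of $W_i$ in \Cref{cor:Regev_like_new},
\[
\sum_{e\in\FF_q}|W_i(e)|^2=|S_i|\cdot\frac{\tau}{|S_i|}+(q-|S_i|)\cdot\frac{1-\tau}{q-|S_i|}=\tau+(1-\tau)=1.
\]
This identity was already noted in the proof of \Cref{cor:Regev_like_new}.

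For completeness, one can also record the explicit form. Set $c=\bigl(\sqrt{\tau\rho}+\sqrt{(1-\tau)(1-\rho)}\bigr)^2$, which is the value of $|\hat W_i(0)|^2$ computed in that proof. Then
\[
K_i^{(0)}(X,Y)=c+(1-c)XY,
\]
which generalizes the formula in the overview (where $\tau=1$ gives $c=\rho$).

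There is no real obstacle here. The only points that need care are (i) that the two exponent indicators are identical when $b=0$, so only the monomials $1$ and $XY$ appear, and (ii) that all coefficients are nonnegative, so no cancellation affects the $\ell_1$ norm.
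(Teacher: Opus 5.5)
Your proof is correct and follows the paper's argument. Both note that for $b=0$ the two exponent indicators coincide, so only the monomials $1$ and $XY$ appear with nonnegative coefficients $|\hat W_i(0)|^2$ and $\sum_{a\ne 0}|\hat W_i(a)|^2$, and then use Parseval and the normalization of $W_i$ to show these sum to $1$. The paper also writes out $|\hat W_i(0)|^2=\bigl(\sqrt{\tau\rho}+\sqrt{(1-\tau)(1-\rho)}\bigr)^2$, which, as you observe, is not needed for the norm itself.
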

\begin{proof}[Proof of \Cref{lem:K_zero}]
By definition, 
\begin{align*}
  K_i^{(0)}(X,Y)=\sum_{a\in \FF_q}|\hat{W}_i(a)|^2(XY)^{\one_{a\neq 0}}.   
\end{align*}
As seen in the proof of \Cref{cor:Regev_like_new}, we have
\begin{align*}
|\hat{W}_i(0)|^2
=\left(\sqrt{\tau\rho}+\sqrt{(1-\tau)(1-\rho)}\right)^2.
\end{align*}
We also have
\[
\sum_{a\in \FF_q}|\hat{W}_i(a)|^2=1
\]
by \Cref{lem:Parseval}.

Thus, we have 
\begin{align*}
K_i^{(0)}(X,Y)=\left(\sqrt{\tau\rho}+\sqrt{(1-\tau)(1-\rho)}\right)^2+\left(1-\left(\sqrt{\tau\rho}+\sqrt{(1-\tau)(1-\rho)}\right)^2\right)XY.    
\end{align*}

Noting that $0\le \left(\sqrt{\tau\rho}+\sqrt{(1-\tau)(1-\rho)}\right)^2\le 1$, 
\Cref{lem:K_zero} follows. 
\end{proof}

We define $\mu_i:\FF_q\rightarrow \mathbb{C}$ and $\nu_i:\FF_q\rightarrow \mathbb{C}$ as follows. 

\begin{align}
\mu_i(b)=\sum_{x\in \FF_q}|W_i(x)|^2\omega_p^{-\Tr(bx)}, \label{eq:definition_mu}
\end{align}
\begin{align}
    \nu_i(b)=
   \begin{cases} 
\mu_i(b) & \text{if } \kappa_{\tau,\rho}\le 1/2,\\[1.2ex] (4\kappa_{\tau,\rho}-1)\mu_i(b)& \text{if } \kappa_{\tau,\rho}>1/2 \end{cases}, \label{eq:definition_nu}
\end{align}
where 
\[
\kappa_{\tau,\rho}=
\frac{\sqrt{\tau\rho}+\sqrt{(1-\tau)(1-\rho)}}{\sqrt{\tau \rho^{-1}}+\sqrt{(1-\tau)(1-\rho)^{-1}}}.
\]

\begin{lemma}\label{lem:K_non_zero}
For $b\ne 0$, it holds that 
    \begin{align*}
    \|K_i^{(b)}(X,Y)\|_1= |\nu_i(b)|.
\end{align*}
\end{lemma}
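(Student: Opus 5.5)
The plan is to compute $K_i^{(b)}(X,Y)$ explicitly for $b\ne 0$. I expect it to be a scalar multiple of $\mu_i(b)$ times a fixed polynomial $\kappa X+\kappa Y+(1-2\kappa)XY$ with $\kappa=\kappa_{\tau,\rho}$; the $\ell_1$ norm can then be read off directly. For $b\neq 0$ no index $a$ has both $a=0$ and $a+b=0$, so $K_i^{(b)}$ has no constant term. The coefficients are:
\[
[Y]K_i^{(b)}=\hat W_i(0)\overline{\hat W_i(b)},\qquad [X]K_i^{(b)}=\hat W_i(-b)\overline{\hat W_i(0)},
\]
and $[XY]K_i^{(b)}$ is the sum over all remaining $a\notin\{0,-b\}$.

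\textbf{Step 1: the total sum.} Set $g(x)=W_i(x)\omega_p^{\Tr(bx)}$; by definition of $\hat{\cdot}$, $\hat g(a)=\hat W_i(a+b)$. Parseval (\Cref{lem:Parseval}) then gives
\[
\sum_{a\in\FF_q}\hat W_i(a)\overline{\hat W_i(a+b)}=\sum_x W_i(x)\overline{g(x)}=\sum_x |W_i(x)|^2\omega_p^{-\Tr(bx)}=\mu_i(b),
\]
since $W_i$ is real. Hence $K_i^{(b)}(1,1)=\mu_i(b)$.

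\textbf{Step 2: the linear coefficients.} Write $A=\sqrt{\tau/|S_i|}$ and $B=\sqrt{(1-\tau)/(q-|S_i|)}$. Then $W_i=B+(A-B)\one_{S_i}$ and $|W_i|^2=B^2+(A^2-B^2)\one_{S_i}$. For $b\ne0$, \Cref{eq:sum_is_zero} kills the constant parts, giving
\[
\hat W_i(b)=(A-B)\hat{\one}_{S_i}(b),\qquad \mu_i(b)=(A^2-B^2)\sqrt q\,\overline{\hat{\one}_{S_i}(b)}.
\]
Also $\hat W_i(0)=\sqrt q\,(\rho A+(1-\rho)B)$, which is real. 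Therefore
\[
\hat W_i(0)\overline{\hat W_i(b)}=\frac{\rho A+(1-\rho)B}{A+B}\,\mu_i(b)=\kappa_{\tau,\rho}\,\mu_i(b),
\]
where the last equality follows by multiplying numerator and denominator by $\sqrt{q}$ and substituting $|S_i|=\rho q$. Since $W_i$ is real, $\hat W_i(-b)=\overline{\hat W_i(b)}$, so the $X$-coefficient is $\overline{\hat W_i(b)}\,\hat W_i(0)$, which is the same quantity $\kappa\mu_i(b)$. (The degenerate case $A=B$, i.e.\ $\tau=\rho$, makes both sides vanish; it is excluded anyway.)

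\textbf{Step 3: the $XY$ coefficient and the norm.} From Steps 1 and 2, the $XY$ coefficient is $(1-2\kappa)\mu_i(b)$, so
\[
K_i^{(b)}(X,Y)=\mu_i(b)\bigl(\kappa X+\kappa Y+(1-2\kappa)XY\bigr).
\]
Since $\kappa\ge 0$, its $\ell_1$ norm is $|\mu_i(b)|(2\kappa+|1-2\kappa|)$. This equals $|\mu_i(b)|$ when $\kappa\le 1/2$ and $(4\kappa-1)|\mu_i(b)|$ when $\kappa>1/2$, i.e.\ $|\nu_i(b)|$ in both cases by \Cref{eq:definition_nu}.

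\textbf{Main obstacle.} There is nothing deep here; the delicate point is bookkeeping of conjugations and sign conventions. The $X$- and $Y$-coefficients must both equal $\kappa\mu_i(b)$ rather than one being its conjugate, for otherwise the $XY$ coefficient would not factor. This is settled by the reality of $W_i$ and of $\hat W_i(0)$.
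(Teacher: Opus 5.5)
Your proof is correct and follows the paper's own argument: show $\sum_a \hat W_i(a)\overline{\hat W_i(a+b)}=\mu_i(b)$, show that both linear coefficients equal $\kappa_{\tau,\rho}\mu_i(b)$ by relating $\hat W_i(\pm b)$ to $\mu_i(b)$ for $b\ne 0$, conclude $K_i^{(b)}=\mu_i(b)(\kappa X+\kappa Y+(1-2\kappa)XY)$, and read off the $\ell_1$ norm. The differences are cosmetic: you use Parseval with a modulated function instead of direct expansion, and the decomposition $W_i=B+(A-B)\one_{S_i}$ instead of the paper's chain of equalities; you also spell out how the $XY$ coefficient follows from $K_i^{(b)}(1,1)=\mu_i(b)$, which the paper leaves implicit.
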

\begin{proof}[Proof of \Cref{lem:K_non_zero}]
For $b\ne 0$, 
\begin{align*}
\sum_{a\in \FF_q}\hat{W}_i(a)\overline{\hat{W}_i(a+b)}
&=\sum_{a\in \FF_q}\left(\frac{1}{\sqrt{q}}\sum_{x \in \FF_q }W_i(x)\omega_p^{\Tr(ax)}\right)
\left(\frac{1}{\sqrt{q}}\sum_{y \in \FF_q }\overline{W_i(y)\omega_p^{\Tr((a+b)y)}}\right)\\
&=\frac{1}{q}\sum_{x,y\in \FF_q}W_i(x)W_i(y)\omega_p^{-\Tr(by)}\left(\sum_{a\in \FF_q}\omega_p^{\Tr(a(x-y))}\right)\\
&=\sum_{x\in \FF_q}|W_i(x)|^2\omega_p^{-\Tr(bx)}\\
&=\mu_i(b)
\end{align*}
where the third equality follows from \Cref{eq:sum_is_zero}. 
For $b\ne 0$, 
\begin{align}
\mu_i(b)=\left(\sqrt{\tau \rho^{-1}}+\sqrt{(1-\tau)(1-\rho)^{-1}}\right)\hat{W}_i(-b). \label{eq:mu_and_hat_W}
\end{align}
Indeed, this can be seen from the following two equations, where we use \Cref{eq:sum_is_zero} in the third and seventh equations:
\begin{align*}
 \mu_i(b)&= \sum_{x\in \FF_q}|W_i(x)|^2\omega_p^{-\Tr(bx)} \\
 &=\sum_{x\in S_i}\frac{\tau}{|S_i|}\omega_p^{-\Tr(bx)}+ \sum_{x\notin S_i}\frac{1-\tau}{q-|S_i|}\omega_p^{-\Tr(bx)}\\
 &=\left(\frac{\tau}{|S_i|}-\frac{1-\tau}{q-|S_i|}\right)\sum_{x\in S_i}\omega_p^{-\Tr(bx)}\\
  &=\frac{1}{q}\left(\tau\rho^{-1}-(1-\tau)(1-\rho)^{-1}\right)\sum_{x\in S_i}\omega_p^{-\Tr(bx)}\\
  &=\frac{1}{q}\left(\sqrt{\tau \rho^{-1}}+\sqrt{(1-\tau)(1-\rho)^{-1}}\right)
  \left(\sqrt{\tau \rho^{-1}}-\sqrt{(1-\tau)(1-\rho)^{-1}}\right)
  \sum_{x\in S_i}\omega_p^{-\Tr(bx)}\\
  &=\frac{1}{\sqrt{q}}\left(\sqrt{\tau \rho^{-1}}+\sqrt{(1-\tau)(1-\rho)^{-1}}\right)\left(\sqrt{\frac{\tau}{|S_i|}}-\sqrt{\frac{1-\tau}{q-|S_i|}}\right)\sum_{x\in S_i}\omega_p^{-\Tr(bx)}\\
  &=\left(\sqrt{\tau \rho^{-1}}+\sqrt{(1-\tau)(1-\rho)^{-1}}\right)\cdot \frac{1}{\sqrt{q}}\left(\sum_{x\in \FF_q}W_i(x)\omega_p^{-\Tr(bx)}\right)\\
  &=\left(\sqrt{\tau \rho^{-1}}+\sqrt{(1-\tau)(1-\rho)^{-1}}\right)\hat{W}_i(-b). 
\end{align*}

The constant term of $K_i^{(b)}(X,Y)$ is $0$ since $a=0$ and $a+b=0$ cannot hold simultaneously when $b\ne 0$. 

The coefficient of $Y$ of $K_i^{(b)}(X,Y)$ is 
\[
\hat{W}_i(0)\overline{\hat{W}_i(b)}=\hat{W}_i(0)\hat{W}_i(-b)=\kappa_{\tau,\rho}\mu_i(b)
\]
since we have
$\hat{W}_i(0)=\sqrt{\tau\rho}+\sqrt{(1-\tau)(1-\rho)}$ as seen in the proof of \Cref{cor:Regev_like_new} and
\Cref{eq:mu_and_hat_W}.  

Similarly, 
the coefficient of $X$ is also
\[
\hat{W}_i(-b)\overline{\hat{W}_i(0)}=\kappa_{\tau,\rho}\mu_i(b)
\]

Thus, for $b\ne 0$, 
\begin{align*}
 K_i^{(b)}(X,Y)=\mu_i(b)(\kappa_{\tau,\rho} X+\kappa_{\tau,\rho} Y+(1-2\kappa_{\tau,\rho})XY).   
\end{align*}
Since $\kappa_{\tau,\rho}\ge 0$, \Cref{lem:K_non_zero} follows. 
\end{proof}

By \Cref{lem:upper_bound_gamma_by_K,lem:K_zero,lem:K_non_zero}, we have 
\[
|\Gamma(\vu)|\le 
\prod_{i:u_i\ne 0} |\nu_i(u_i)|. 
\]
Thus, for proving \Cref{eq:sum_of_gamma_is_negl}, and thus for completing the proof of \Cref{eq:cond_two_cor}, we are left to prove the following lemma.
\begin{lemma}\label{lem:Z_is_negl}
It holds that 
\begin{align*}
   \sum_{\vu\in C^\perp \setminus \{0\}}\prod_{i:u_i\ne 0} |\nu_i(u_i)|\le \negl(n).
\end{align*}
\end{lemma}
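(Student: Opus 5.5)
The plan is to run the tilted Brascamp--Lieb argument from the overview, applied to the dual code. Two facts about $C^\perp$ drive it. First, $C=\RS_{\FF_q,\valpha,k-1}$ is MDS of dimension $k$, so $C^\perp$ is an MDS code of dimension $n-k$ and rate $1-R$. \Cref{thm:MDS_BL} applied to $C^\perp$ therefore has exponent $1/(1-R)=Q$. Second, every nonzero $\vu\in C^\perp$ has $\wt(\vu)\ge k+1>k$.

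Fix a tilting parameter $t\ge 1$ and set $h_i^{(t)}(0)=1$ and $h_i^{(t)}(a)=t|\nu_i(a)|$ for $a\neq 0$. Since every nonzero $\vu\in C^\perp$ has more than $k$ nonzero coordinates and $t\ge1$, we get $\prod_{i:u_i\ne0}|\nu_i(u_i)|\le t^{-k}\prod_{i}h_i^{(t)}(u_i)$. Adding the (nonnegative) term $\vu=0$ and applying \Cref{thm:MDS_BL} gives
\[
\sum_{\vu\in C^\perp\setminus\{0\}}\prod_{i:u_i\ne 0}|\nu_i(u_i)|
\le t^{-k}\prod_{i\in[n]}\bigl(1+t^Q M_i\bigr)^{1/Q},
\qquad M_i=\sum_{a\ne0}|\nu_i(a)|^Q .
\]

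Next I would bound $M_i$ uniformly in $i$. The computation in the proof of \Cref{lem:K_non_zero} (the line before factoring) gives, for $b\ne0$,
\[
\mu_i(b)=\frac{\tau-\rho}{\rho(1-\rho)}\cdot\frac1q\sum_{x\in S_i}\omega_p^{-\Tr(bx)},
\qquad\text{so}\qquad
|\mu_i(b)|=\frac{\tau-\rho}{\rho(1-\rho)}\cdot\frac{|\hat{\one}_{S_i}(-b)|}{\sqrt q}.
\]
The prefactor in \Cref{eq:definition_nu} is $1$ or $4\kappa_{\tau,\rho}-1$. In either case it is at most $\lambda_{\tau,\rho}$, since $\lambda_{\tau,\rho}=\max\{1,4\kappa_{\tau,\rho}-1\}$. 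Hence
\[
M_i\le \lambda_{\tau,\rho}^Q\left(\frac{\tau-\rho}{\rho(1-\rho)}\right)^Q q^{-Q/2}\sum_{a\ne0}|\hat{\one}_{S_i}(a)|^Q .
\]
Because $R\ge 1/2$ we have $Q\ge2$, so \Cref{thm:Fourier_Qth_power_sum} applies and yields $M_i\le M:=M_{\tau,\rho}+O(p^{-1})$. The constants hidden in $O(\cdot)$ are bounded, since $\rho,1-\rho,\tau-\rho,1-R$ are all $\Omega(1)$ (and $\tau\le 1$). As $p=\omega(1)$, the error term is $o(1)$.

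Finally I would optimize over $t$. Choosing $t^Q=R/(M(1-R))$ turns the bound into $2^{\frac nQ(H(R)+R\log M)}$. One must check $t\ge1$; if instead $t<1$, I would simply take $t=1$. In that case the factor $t^{-k}$ is dropped, the right-hand side is $(1+M)^{n/Q}$, and I need to confirm this case is excluded or still negligible under \Cref{eq:theorem_cond_two}. I expect it to be excluded: roughly, $H(R)+R\log M<0$ forces $M<R/(1-R)$ (this follows from the bound $\log(1+x)\ge\ldots$ in the Gibbs form). If $M$ happens to be $0$, the sum is trivially $0$.

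The main obstacle is handling the $o(1)$ perturbation. \Cref{eq:theorem_cond_two} says $H(R)+R\log M_{\tau,\rho}\le-\Omega(1)$. Since $M_{\tau,\rho}$ is bounded away from $0$ whenever that matters, replacing it by $M_{\tau,\rho}+O(p^{-1})$ changes the exponent by only $o(1)$, and this is where $p=\omega(1)$ is used. Even if $M_{\tau,\rho}$ is tiny, increasing $M$ slightly only weakens the bound monotonically, so it suffices to work with $\max\{M,\delta\}$ for a small constant $\delta$. The exponent then stays $-\Omega(n/Q)=-\Omega(n)$ because $Q=O(1)$. This gives an exponentially small, hence negligible, bound.
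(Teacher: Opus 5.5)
Your proposal is correct and follows essentially the same route as the paper: tilt by $t$ using $\wt(\vu)\ge k+1$ on $C^\perp$, apply the MDS Brascamp--Lieb inequality with exponent $Q=1/(1-R)$, bound $M_i$ via $|\mu_i(a)|=\frac{\tau-\rho}{\rho(1-\rho)}\,|\hat{\one}_{S_i}(a)|/\sqrt{q}$ and the $Q$-th moment Fourier bound, and then optimize $t^Q=R/(M(1-R))$. The $t\ge 1$ check you leave hedged is immediate: $M\ge R/(1-R)$ would force $H(R)+R\log M\ge -\log(1-R)>0$, whereas the paper instead uses $M_{\tau,\rho}<1$ together with $R\ge 1/2$. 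Folding the $O(p^{-1})$ error into $M$ before choosing $t$ is a slightly cleaner way to handle the perturbation than the paper's.
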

We note that \Cref{lem:Z_is_negl} is reused in the proof of \Cref{eq:cond_three_cor} in \Cref{sec:bound_three}.
We prove this below. 

\paragraph{Proving \Cref{lem:Z_is_negl} via MDS Brascamp–Lieb.}
Let $Z=\sum_{\vu\in C^\perp \setminus \{0\}}\prod_{i:u_i\ne 0} |\nu_i(u_i)|$.

For $i\in [n]$,
and $t\ge 1$, 
define $h_i^{(t)}:\FF_q\rightarrow \mathbb{R}_{\ge 0}$ as 
\begin{align*}
    h_i^{(t)}(a)=
   \begin{cases} 
1 & \text{if } a=0,\\[1.2ex] t|\nu_i(a)|& \text{if } a\ne 0\end{cases}.
\end{align*}

Since $C^\perp$ is an MDS code of dimension $n-k$, for any $\vu\in C^\perp \setminus \{0\}$, 
$\wt(\vu)\ge k+1$, and thus we have 
\[
 \prod_{i:u_i\ne 0} |\nu_i(u_i)|
 \le 
t^{-(k+1)}\prod_{i\in [n]}h_i^{(t)}(u_i)
\le t^{-k}\prod_{i\in [n]}h_i^{(t)}(u_i).
\]
Thus,\footnote{We include the term corresponding to $\vu=0$, which can only increase the sum.} 
\[
Z\leq t^{-k}\sum_{\vu\in C^\perp}\prod_{i\in [n]}h_i^{(t)}(u_i).
\]

Here, we apply the MDS Brascamp–Lieb inequality (\Cref{thm:MDS_BL}), noting that the rate of $C^\perp$ is $1-R$. 
Letting $Q=1/(1-R)$ and
\[
M_i=\sum_{a\in \FF_q\setminus \{0\}}|\nu_i(a)|^Q, 
\]
we have 
\begin{align}
Z\leq t^{-k}\prod_{i\in [n]}\left(1+t^{Q}M_i\right)^{1/Q}. \label{eq:upper_bound_Z}
\end{align} 

In the following lemma, we give an upper bound for $M_i$.
\begin{lemma}\label{lem:upper_bound_M}
    For $i\in [n]$, it holds that 
    \[
    M_i\le M_{\tau,\rho}+o(1),
    \]
    where  $M_{\tau,\rho}$ is as in \Cref{thm:main}. 
    That is, 
\[
M_{\tau,\rho}=\lambda_{\tau,\rho}^Q\left(\frac{\tau-\rho}{\rho(1-\rho)}\right)^Q N_{\rho,Q},
\]
for 
\begin{align*}
N_{\rho,Q}
=
\begin{cases}
    \left(\rho(1-\rho)\right)^{\frac{4-Q}{2}}\left(\rho_0^3\left(\frac{2}{3}-\rho_0\right)\right)^{\frac{Q-2}{2}} 
    ~~~&Q\le 4\\
    \left(\frac{\sin(\rho \pi)}{\pi}\right)^{Q-4}\rho_0^3\left(\frac{2}{3}-\rho_0\right) &Q\ge 4,
\end{cases} 
\end{align*}
\[
\rho_0=\min\{\rho,1-\rho\}, 
\]
and 
\[
\lambda_{\tau,\rho}=\max\left\{1,4\kappa_{\tau,\rho}-1\right\}.
\]
\end{lemma}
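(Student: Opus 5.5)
We reduce \Cref{lem:upper_bound_M} to \Cref{thm:Fourier_Qth_power_sum} by writing $\nu_i$ as an explicit multiple of the normalized Fourier coefficient of $\one_{S_i}$.

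\textbf{Step 1: express $\mu_i(b)$ through $\hat{\one}_{S_i}$.} In the proof of \Cref{lem:K_non_zero} we showed that for $b\ne 0$,
\[
\mu_i(b)=\frac{1}{q}\left(\tau\rho^{-1}-(1-\tau)(1-\rho)^{-1}\right)\sum_{x\in S_i}\omega_p^{-\Tr(bx)} .
\]
The coefficient simplifies to $\tau\rho^{-1}-(1-\tau)(1-\rho)^{-1}=\frac{\tau-\rho}{\rho(1-\rho)}$, which is nonnegative since $\tau>\rho$. The sum equals $\sqrt{q}\,\hat{\one}_{S_i}(-b)$. Hence
\[
|\mu_i(b)|=\frac{\tau-\rho}{\rho(1-\rho)}\cdot\frac{|\hat{\one}_{S_i}(-b)|}{\sqrt q}.
\]

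\textbf{Step 2: include the factor from $\nu_i$.} By the definition in \Cref{eq:definition_nu}, $|\nu_i(b)|=\lambda_{\tau,\rho}|\mu_i(b)|$, where $\lambda_{\tau,\rho}=\max\{1,4\kappa_{\tau,\rho}-1\}$. The case $\kappa\le 1/2$ gives the factor $1$, and since $4\kappa-1>1$ exactly when $\kappa>1/2$, the max matches both cases. The map $b\mapsto -b$ is a bijection of $\FF_q\setminus\{0\}$, so
\[
M_i=\lambda_{\tau,\rho}^Q\left(\frac{\tau-\rho}{\rho(1-\rho)}\right)^Q\cdot \frac{1}{q^{Q/2}}\sum_{a\ne 0}|\hat{\one}_{S_i}(a)|^Q .
\]

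\textbf{Step 3: apply the Fourier bound.} Since $R\ge 1/2$, we have $Q=1/(1-R)\ge 2$. \Cref{thm:Fourier_Qth_power_sum} then bounds the last factor by $N_{\rho,Q}+O(p^{-1})$. The constants in the $O(\cdot)$ may depend on $\rho$ and $Q$. These range over a compact set, because $\rho$ and $1-\rho$ are $\Omega(1)$ and $1-R=\Omega(1)$ keeps $Q$ bounded, so the implied constant is uniform.

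The prefactor $\lambda^Q\left((\tau-\rho)/(\rho(1-\rho))\right)^Q$ is also bounded by a constant, since $\rho(1-\rho)=\Omega(1)$, $Q=O(1)$, and $\kappa\le 1$. Together with $p=\omega(1)$, this makes the additive error $o(1)$, giving $M_i\le M_{\tau,\rho}+o(1)$.

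The only point needing care is this uniformity of the $O(p^{-1})$ term, which we handle by tracking that every constant depends continuously on parameters confined to a compact range. The rest is bookkeeping.
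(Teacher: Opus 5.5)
Your proposal is correct and follows the paper's proof essentially step for step. Like the paper, you write $\mu_i$ as $\frac{\tau-\rho}{\rho(1-\rho)\sqrt q}$ times a Fourier coefficient of $\one_{S_i}$ (the paper uses $\overline{\hat{\one}_{S_i}(a)}$ where you use $\hat{\one}_{S_i}(-a)$ plus reindexing), pull out $\lambda_{\tau,\rho}^Q$, and apply \Cref{thm:Fourier_Qth_power_sum} with $p=\omega(1)$. Your remark that the constant hidden in $O(p^{-1})$ is uniform because $\rho,1-\rho=\Omega(1)$ and $Q=O(1)$ is a detail the paper leaves implicit.
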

\begin{proof}[Proof of \Cref{lem:upper_bound_M}]
Recall that $\mu_i$ is defined as 
\[
\mu_i(a)=\sum_{x\in \FF_q}|W_i(x)|^2\omega_p^{-\Tr(ax)}.
\]
By the calculation below \Cref{eq:mu_and_hat_W},\footnote{In particular, the fourth line in the calculation can be easily seen to be equal to $\frac{\tau-\rho}{\rho(1-\rho)\sqrt{q}}\overline{\hat{\one}_{S_i}(a)}$.} we have
\[
\mu_i(a)=\frac{\tau-\rho}{\rho(1-\rho)\sqrt{q}}\overline{\hat{\one}_{S_i}(a)}
\]

By \Cref{thm:Fourier_Qth_power_sum}, 
we have 
\begin{align}
    \sum_{a\in \FF_q\setminus \{0\}}|\mu_i(a)|^Q
    =\left(\frac{\tau-\rho}{\rho(1-\rho)}\right)^Q q^{-Q/2}\sum_{a\in \FF_q\setminus \{0\}}|\hat{1}_{S_i}(a)|^Q\le \left(\frac{\tau-\rho}{\rho(1-\rho)}\right)^Q N_{\rho,Q}+O(p^{-1}). \label{eq:bound_q}
\end{align}

Therefore,
\begin{align*}
M_i&=\sum_{a\in \FF_q\setminus \{0\}}|\nu_i(a)|^Q\\
&= \lambda_{\tau,\rho}^Q\sum_{a\in \FF_q\setminus \{0\}}|\mu_i(a)|^Q\\
&\le \lambda_{\tau,\rho}^Q\left(\frac{\tau-\rho}{\rho(1-\rho)}\right)^Q N_{\rho,Q}
  +o(1).
\end{align*} 
Note that the error term is $o(1)$ since we assume $p=\omega(1)$ and $\lambda_{\tau,\rho}^Q=O(1)$ since $\lambda_{\tau,\rho}\le 3$ and $Q=1/(1-R)=O(1)$. 

This completes the proof of \Cref{lem:upper_bound_M}. 
\end{proof}


Substituting \Cref{lem:upper_bound_M} into \Cref{eq:upper_bound_Z},
we have 
\[
Z\le t^{-k}\left(1+t^{Q}M_{\tau,\rho}+o(1)\right)^{n/Q}
\]
as long as $t=O(1)$. 


Taking logarithm, and dividing both sides by $n$, we have 
\[
\frac{1}{n}\log Z\le -R\log t+Q^{-1}\log(1+t^QM_{\tau,\rho}+o(1)).
\]
Set  
\[
t=\left(\frac{R}{M_{\tau,\rho}(1-R)}\right)^{1/Q}. 
\]
This indeed satisfies $t\ge 1$ since $M_{\tau,\rho}<1$ follows from \Cref{eq:theorem_cond_two} in \Cref{thm:main} and $R\ge 1/2$. 
We also note that $t=O(1)$ since
$1/Q\le 1$, 
$1-R=\Omega(1)$, and 
$M_{\tau,\rho}=\Omega(1)$, which follows from 
$1-R=\Omega(1)$, 
$\rho=\Omega(1)$, and
$\tau-\rho=\Omega(1)$. 

Then we have 
\begin{align*}
\frac{1}{n}\log Z
&\le \frac{1}{Q}\left(-R(\log R-\log(1-R)-\log M_{\tau,\rho})+\log (1+\frac{R}{1-R}+o(1))\right)\\
&
= (1-R)(H(R)+R \log M_{\tau,\rho})+o(1),
\end{align*}
where 
$$H(R)=-R\log R-(1-R)\log (1-R)$$
is the binary entropy function.\footnote{The error term can be put outside since $1/Q=O(1)$ and 
$\log(x+\epsilon)-\log x=O(\epsilon)$ for any $x\ge 1$. 
} 

By 
$1-R=\Omega(1)$ and 
\Cref{eq:theorem_cond_two} in \Cref{thm:main},  $H(R)+R \log M_{\tau,\rho}$ is bounded above by a negative constant. 

This implies $Z\le \negl(n)$, finishing the proof of   \Cref{lem:Z_is_negl}.

This completes the proof of
\Cref{eq:cond_two_cor}.

\subsubsection{Proving \Cref{eq:cond_three_cor}}\label{sec:bound_three}
Next we prove that \Cref{eq:cond_three_cor} holds.

We have 
\begin{align*}
q^{n-k}\sum_{\vx\in C}|W(\vx)|^2
&=q^{n-k}\sum_{\vx\in \FF_q^n}\one_{C}(\vx)\overline{|W(\vx)|^2}\\
&=q^{n-k}\sum_{\vx\in \FF_q^n}\hat{\one}_{C}(\vx)\overline{\widehat{|W|^2}(\vx)}\\
&=q^{n/2}\sum_{\vx\in \FF_q^n}\one_{C^\perp}(\vx)\overline{\widehat{|W|^2}(\vx)}
\end{align*}
where the second equality follows from Parseval identity (\Cref{lem:Parseval})
and the third equality follow from \Cref{lem:fourier_dual}.  
Here, $|W|^2$ means the function that takes $\vx\in \FF_q^n$ and outputs $|W(\vx)|^2$

Let $\mu_i$ and $\nu_i$ be as defined in 
\Cref{sec:bound_two} (\Cref{eq:definition_mu,eq:definition_nu}).

Then we have 
\begin{align*}
    \overline{\widehat{|W|^2}(\vx)}
    &=q^{-n/2}\sum_{\vz \in \FF_q^n} |W(\vz)|^2\omega_p^{-\Tr(\vx\cdot \vz)}\\
    &=q^{-n/2}\sum_{\vz \in \FF_q^n}\prod_{i\in [n]} |W_i(z_i)|^2\omega_p^{-\Tr(x_i\cdot z_i)}\\
    &=q^{-n/2}\prod_{i\in [n]}\sum_{z_i\in \FF_q^n} |W_i(z_i)|^2\omega_p^{-\Tr(x_i\cdot z_i)}\\
    &=q^{-n/2}\prod_{i\in [n]} \mu_i(x_i).
\end{align*}

Thus, we have 
\begin{align*}
q^{n-k}\sum_{\vx\in C}|W(\vx)|^2
&=\sum_{\vx \in C^\perp}\prod_{i\in [n]} \mu_i(x_i)\\
&= 1+\sum_{\vx \in C^\perp\setminus \{0\}}\prod_{i\in [n]} \mu_i(x_i).\\
\end{align*}

Since $|\mu_i(u_i)|\le|\nu_i(u_i)|$
for all $u_i\in \FF_q$ 
by definition, \Cref{lem:Z_is_negl} implies 
\begin{align*}
   \sum_{\vu\in C^\perp \setminus \{0\}}\prod_{i:u_i\ne 0} |\mu_i(u_i)|\le \negl(n). 
\end{align*}

Combining above,   
\Cref{eq:cond_three_cor} 
follows.

\subsubsection{Proving \Cref{eq:cond_four_cor}}\label{sec:bound_four}
If $\tau=s=1$, \Cref{eq:cond_four_cor} trivially holds since the LHS is $0$. 

Below, we focus on the case of $\tau<1$. 

Let $W'$ be the same as $W$ except that $\tau$ is replaced with $s$, that is, 
$W'(\ve)=\prod_{i\in [n]}W'_i(e_i)$
where 
$$ W'_i(e_i)= \begin{cases} 
\sqrt{\dfrac{s}{|S_i|}} & \text{if } e_i\in S_i,\\[1.2ex] \sqrt{\dfrac{1-s}{q-|S_i|}} & \text{if } e_i\notin S_i \end{cases}.$$


For any $\ve$, 
if we let $\ell=|i\in[n]: e_i\in S_i|$, then
we have 
\[
|W(\ve)|^2=\left(\frac{\tau}{s}\right)^\ell\left(\frac{1-\tau}{1-s}\right)^{n-\ell} |W'(\ve)|^2.
\]
Since we assume $\tau> s$, we have 
\begin{align*}
\left|
q^{n-k}\sum_{\vx\in C_{sn}}|W(\vx)|^2
\right|
&\le 
\left(\frac{\tau}{s}\right)^{s n}\left(\frac{1-\tau}{1-s}\right)^{(1-s)n}
\left|
q^{n-k}\sum_{\vx\in C_{sn}}|W'(\vx)|^2
\right|\\
&\le 
\left(\frac{\tau}{s}\right)^{s n}\left(\frac{1-\tau}{1-s}\right)^{(1-s)n}
\left|
q^{n-k}\sum_{\vx\in C}|W'(\vx)|^2
\right|.
\end{align*} 
\begin{lemma}\label{lem:upper_bound_KL}
If $\tau-s=\Omega(1)$, 
we have 
\[
\left(\frac{\tau}{s}\right)^{s}\left(\frac{1-\tau}{1-s}\right)^{(1-s)}\le 1-\Omega(1).
\]
\end{lemma}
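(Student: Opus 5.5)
The quantity on the left is $2^{-D(s\|\tau)}$, where $D(s\|\tau)=s\log\frac{s}{\tau}+(1-s)\log\frac{1-s}{1-\tau}$ is the binary Kullback--Leibler divergence between Bernoulli distributions with parameters $s$ and $\tau$. The plan is therefore to show that $D(s\|\tau)=\Omega(1)$ whenever $\tau-s=\Omega(1)$. Exponentiating then gives a bound of the form $2^{-\Omega(1)}=1-\Omega(1)$. Here we use $s<\tau<1$, so that both factors are well defined; the case $s=0$ is interpreted with $0^0=1$ and gives the value $1-\tau$. Since $\tau-\rho=\Omega(1)$, $\rho=\Omega(1)$, and $\tau-s=\Omega(1)$, we also have $\tau\le 1-\Omega(1)$ or $\tau=1$, but the latter case was already excluded.

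The key step is a uniform lower bound on the divergence. I would invoke Pinsker's inequality,
\[
D(s\|\tau)\ge \frac{2}{\ln 2}(\tau-s)^2 .
\]
Alternatively, to stay self-contained, I would prove it directly. Consider the function $g(x)=x\ln\frac{x}{\tau}+(1-x)\ln\frac{1-x}{1-\tau}$ on $[0,\tau]$. It satisfies $g(\tau)=0$ and $g'(\tau)=0$, and its second derivative is
\[
g''(x)=\frac{1}{x(1-x)}\ge 4 .
\]
Taylor's theorem with remainder then gives $g(s)\ge 2(\tau-s)^2$. Hence
\[
\left(\frac{\tau}{s}\right)^{s}\left(\frac{1-\tau}{1-s}\right)^{1-s}=e^{-g(s)}\le e^{-2(\tau-s)^2}.
\]
With $\tau-s\ge c$ for a constant $c>0$, this is at most $e^{-2c^2}=1-\Omega(1)$.

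There is no real obstacle. The only care needed is at the endpoint $s=0$, which can be handled by continuity or by the $0^0=1$ convention, and in noting that the bound is uniform in $\tau$, with no dependence on how close $\tau$ is to $1$. The second-derivative argument handles both points. In the subsequent use, this lemma makes the prefactor $(1-\Omega(1))^n$ negligible. It must be combined with the bound $q^{n-k}\sum_{\vx\in C}|W'(\vx)|^2\le 1+\negl(n)$, which follows from the proof of \Cref{eq:cond_three_cor} applied to $s$ via \Cref{eq:theorem_cond_two_s}.
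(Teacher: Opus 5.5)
Your proof is correct, but it takes a different route from the paper.

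\textbf{The paper's route.} The argument is qualitative. Weighted AM--GM gives $f(s,\tau)=(\tau/s)^{s}((1-\tau)/(1-s))^{1-s}\le s\cdot\frac{\tau}{s}+(1-s)\cdot\frac{1-\tau}{1-s}=1$, with equality only when $s=\tau$. The paper then upgrades this pointwise strict inequality to a uniform constant $C<1$ using continuity of $f$ on the compact set $D=\{(s,\tau):0\le s\le\tau\le 1,\ \tau-s\ge\delta\}$. This requires checking continuity at the boundary pieces $s=0$ and $\tau=1$.

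\textbf{Your route.} You write the quantity as $e^{-g(s)}$, where $g$ is the binary KL divergence in nats. Pinsker's inequality, equivalently the strong convexity $g''\ge 4$, then gives the explicit bound $e^{-2(\tau-s)^2}\le e^{-2\delta^2}$.

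\textbf{Comparison.} Your approach gives an explicit constant, and hence an explicit decay rate $e^{-2\delta^2 n}$ for the prefactor in the proof of \Cref{eq:cond_four_cor}. It also avoids compactness altogether. The paper's approach uses only AM--GM and absorbs $\tau=1$ into the same compact set. In your version $\tau=1$ has to be set aside, since $g$ is undefined there. That case is trivial: the quantity is $0$ because $1-s\ge\delta>0$. Moreover, the paper only invokes the lemma for $\tau<1$.

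\textbf{Two small points.}
\begin{itemize}
\item At $s=0$, $g'$ blows up, so the Lagrange remainder cannot be applied directly on $[0,\tau]$. Apply it on $[s',\tau]$ with $s'>0$ and pass to the limit. Alternatively, note that $g(x)-2(x-\tau)^2$ is convex on $(0,1)$ with a critical point at $\tau$, and use continuity at $0$. This is what you gesture at.
\item Your aside that the hypotheses force $\tau\le 1-\Omega(1)$ or $\tau=1$ is not justified; for example, $\tau=1-1/n$ is allowed. Nothing depends on it, since your bound is uniform in $\tau$.
\end{itemize}
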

\begin{proof}[Proof of \Cref{lem:upper_bound_KL}]
By assumption, there is a constant $\delta>0$ such that $\tau-s\ge \delta$ for sufficiently large $n$. 
Let $D=\{(s,\tau): 0\le s\le \tau\le 1,\tau-s\ge \delta\}$. 
We define $f:D\rightarrow \mathbb{R}$ by 
\begin{align*}
f(s,\tau)=
\begin{cases}
\left(\frac{\tau}{s}\right)^{s}\left(\frac{1-\tau}{1-s}\right)^{(1-s)} & \text{if }0<s<1,\\
1-\tau & \text{if }s=0
\end{cases}.
\end{align*}
Note that we never have $s=1$ for $(s,\tau)\in D$ since $s\le \tau-\delta\le 1-\delta <1$. 
It is easy to see that $f$ is continuous on $D$. 

By the weighted AM-GM inequality, we have 
\[
f(s,\tau)\le s\frac{\tau}{s}+(1-s)\frac{1-\tau}{1-s}=1
\]
and the equality holds if and only if $\frac{\tau}{s}=\frac{1-\tau}{1-s}$, that is, $\tau=s$. But we never have $\tau=s$ for $(s,\tau)\in D$, thus we have $f(s,\tau)<1$ for $(s,\tau)\in D$. 

Since $f$ is continuous on compact set $D$, $f$ has a maximum value on $D$. 
Since $f(s,\tau)<1$ for $(s,\tau)\in D$, 
there is $C<1$ such that $f(s,\tau)\le C$ for all $(s,\tau)\in D$. 

Since we have $(s,\tau)\in D$ for sufficiently large $n$, \Cref{lem:upper_bound_KL} holds.  
\end{proof}
By \Cref{lem:upper_bound_KL}, 
we have 
\[
\left(\frac{\tau}{s}\right)^{s n}\left(\frac{1-\tau}{1-s}\right)^{(1-s)n}\le \negl(n). 
\]

Since we assume \Cref{eq:theorem_cond_two_s}, by exactly the same proof as that of \Cref{eq:cond_three_cor}, where we replace $\tau$ with $s$, we have 
\[
\left|
q^{n-k}\sum_{\vx\in C}|W'(\vx)|^2
-1\right|\le \negl(n).
\]

Thus, we have 
\[
\left|
q^{n-k}\sum_{\vx\in C_{sn}}|W(\vx)|^2
\right|\le \negl(n). 
\]
This completes the proof of \Cref{eq:cond_four_cor}, and thus completes the whole proof of \Cref{thm:main}.

\section{Existential Bound for MDS MaxLINSAT}
In this section, we show an existential bound for MaxLINSAT, which is a generalization of OPI.

\begin{theorem}\label{thm:existence}
Let $q=p^{e}$ be a prime power. 
Let $n,k$ be positive integers such that 
$n/2\le k\le n\le q$ and let $R=k/n$. 
Let $\rho,\tau\in[0,1]$ be reals with $0<\rho<\tau\le 1$.
The parameters may depend on $n$ as long as they satisfy 
$p=\omega(1)$, $\log q\le \poly(n)$, 
$1-R=\Omega(1)$, 
$\rho=\Omega(1)$, $1-\rho=\Omega(1)$,  
and $\tau-\rho=\Omega(1)$. 

Suppose that the following hold.
\begin{align}
-(H(R)+R \log M'_{\tau,\rho})= \Omega(1) \label{eq:existence_theorem_cond}
\end{align}
where 
$$H(R)=-R\log R-(1-R)\log (1-R)$$
is the binary entropy function, and 
\[
M'_{\tau,\rho}=\left(\frac{\tau-\rho}{\rho(1-\rho)}\right)^Q N_{\rho,Q},
\]
for 
\[
Q=1/(1-R), 
\]
\begin{align*}
N_{\rho,Q}
=
\begin{cases}
    \left(\rho(1-\rho)\right)^{\frac{4-Q}{2}}\left(\rho_0^3\left(\frac{2}{3}-\rho_0\right)\right)^{\frac{Q-2}{2}} 
    ~~~&Q\le 4\\
    \left(\frac{\sin(\rho \pi)}{\pi}\right)^{Q-4}\rho_0^3\left(\frac{2}{3}-\rho_0\right) &Q\ge 4,
\end{cases} 
\end{align*}
and 
\[
\rho_0=\min\{\rho,1-\rho\}. 
\]

Then for any matrix $\mA\in \FF_q^{n\times k}$ that generates an MDS code, 
for 
any $s=\tau-\Omega(1)$ 
and any $S_i\subseteq \FF_q$ such that $|S_i|=\rho q$ for all $i\in [n]$,  
$\MaxLINSAT(q,\mA,\{S_i\}_{i\in[n]},s)$ has a solution for sufficiently large $n$.

Moreover, if the condition holds for $\tau=1$, then the above holds even when $s=1$. 
\end{theorem}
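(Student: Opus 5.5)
To prove \Cref{thm:existence}, I will skip the algorithm and argue directly that the weighted mass of $W$ on codewords of $C=\mathrm{Im}(\mA)$ is concentrated on solutions. Let $W,W_i$ be as in \Cref{cor:Regev_like_new} with parameter $\tau$, and let $W'$ be the same function with $s$ in place of $\tau$. A solution exists as soon as
\[
\sum_{\vx\in C}|W(\vx)|^2>\sum_{\vx\in C_{sn}}|W(\vx)|^2 ,
\]
since then some $\vx\in C\setminus C_{sn}$ has nonzero weight, and such an $\vx$ satisfies at least $sn$ constraints. It therefore suffices to establish analogues of \Cref{eq:cond_three_cor,eq:cond_four_cor} for an arbitrary MDS code $C$. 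Conditions \Cref{eq:cond_one_cor,eq:cond_two_cor} are not needed: they only concerned decoding, and this is why the factor $\lambda_{\tau,\rho}$ drops out and $M'_{\tau,\rho}$ replaces $M_{\tau,\rho}$.

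\textbf{Step 1 (norm of the restriction).} Exactly as in \Cref{sec:bound_three}, Parseval and \Cref{lem:fourier_dual} give
\[
q^{n-k}\sum_{\vx\in C}|W(\vx)|^2 \;=\; 1+\sum_{\vu\in C^\perp\setminus\{0\}}\prod_{i}\mu_i(u_i).
\]
This identity uses only linearity of $C$. Next I will show that
\[
Z'=\sum_{\vu\in C^\perp\setminus\{0\}}\prod_{i:u_i\neq0}|\mu_i(u_i)|
\]
is negligible. This repeats the proof of \Cref{lem:Z_is_negl} with $\mu_i$ in place of $\nu_i$ and runs as follows.
\begin{itemize}
\item The dual of an MDS code is MDS of dimension $n-k$, so every $\vu\in C^\perp\setminus\{0\}$ has $\wt(\vu)\ge k+1$. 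This justifies the tilting by $t\ge1$ with $h_i^{(t)}(0)=1$ and $h_i^{(t)}(a)=t|\mu_i(a)|$ for $a\neq 0$.
\item \Cref{thm:MDS_BL}, applied to $C^\perp$ (rate $1-R$, so the exponent is $Q=1/(1-R)$), gives
\[
Z'\le t^{-k}\prod_i\bigl(1+t^Q M_i'\bigr)^{1/Q}, \qquad M_i'=\sum_{a\ne0}|\mu_i(a)|^Q.
\]
\item Using $\mu_i(a)=\frac{\tau-\rho}{\rho(1-\rho)\sqrt q}\overline{\hat\one_{S_i}(a)}$ and \Cref{thm:Fourier_Qth_power_sum}, we get $M_i'\le M'_{\tau,\rho}+O(p^{-1})$, where $O(p^{-1})=o(1)$ because $p=\omega(1)$.
\item Choosing $t=\bigl(R/(M'_{\tau,\rho}(1-R))\bigr)^{1/Q}$ gives $\frac1n\log Z'\le (1-R)\bigl(H(R)+R\log M'_{\tau,\rho}\bigr)+o(1)$, which is at most $-\Omega(1)$ by \Cref{eq:existence_theorem_cond}.
\end{itemize}
Hence $q^{n-k}\sum_{\vx\in C}|W(\vx)|^2=1\pm\negl(n)$.

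\textbf{Step 2 (few weak codewords).} I follow \Cref{sec:bound_four}. First, as in \Cref{sec:proof_main}, I may lower $\tau$ slightly to some $s'$ with $s<s'<\tau$, keeping a constant gap below $\tau$ and with \Cref{eq:existence_theorem_cond} still holding at $s'$ by continuity of $M'_{\cdot,\rho}$. This lets me assume the condition also at $s$. Then the pointwise comparison
\[
|W|^2\le (\tau/s)^{sn}\bigl((1-\tau)/(1-s)\bigr)^{(1-s)n}|W'|^2 \quad\text{on } C_{sn},
\]
together with \Cref{lem:upper_bound_KL}, shows the prefactor is $2^{-\Omega(n)}$. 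Step 1 applied to $W'$ bounds $q^{n-k}\sum_{C}|W'|^2$ by $1+\negl(n)$. Therefore $q^{n-k}\sum_{C_{sn}}|W|^2\le\negl(n)<1-\negl(n)$ for large $n$, which yields the existence of a solution. When $\tau=1$, $W$ vanishes off $S$, so every $\vx\in C$ with $W(\vx)\neq0$ satisfies all constraints, and Step 1 alone gives $s=1$.

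\textbf{Main obstacle.} The main point to check is that \Cref{thm:Fourier_Qth_power_sum} and the choice of $t$ remain valid in the edge regimes: that $t\ge1$, i.e.\ $M'_{\tau,\rho}<1$ given $R\ge1/2$ and the hypothesis, and that $M'_{s,\rho}$ is bounded away from $0$ so that $t=O(1)$ (this uses $s$ near $\tau$, so $s-\rho=\Omega(1)$). The rest is a verbatim reuse of \Cref{sec:bound_three,sec:bound_four}, with $\nu_i$ replaced by $\mu_i$ and with the MDS property of $C^\perp$ used in place of the Reed--Solomon structure.
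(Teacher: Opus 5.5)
Your proposal is correct and follows the paper's proof essentially step for step. It uses the identity $q^{n-k}\sum_{\vx\in C}|W(\vx)|^2=1+\sum_{\vu\in C^\perp\setminus\{0\}}\prod_i\mu_i(u_i)$ from \Cref{sec:bound_three}, bounds $Z'$ via \Cref{thm:MDS_BL} with $\mu_i$ in place of $\nu_i$ (which is why $\lambda_{\tau,\rho}$ drops out), moves $s$ toward $\tau$ without loss of generality, applies the comparison with $W'$ and \Cref{lem:upper_bound_KL} from \Cref{sec:bound_four}, and handles $\tau=s=1$ separately.
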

\begin{remark}[Comparison with \Cref{thm:main}]\label{rem:comparison_alg_existence}
 The condition for the existence of a solution in \Cref{thm:existence} is similar to the condition for algorithmic success in \Cref{thm:main}, but it is weaker in the following two aspects. First, the first condition (\Cref{eq:theorem_cond_one}) of \Cref{thm:main} is not required in \Cref{thm:existence}. Second, the second condition (\Cref{eq:theorem_cond_two}) of \Cref{thm:main} is weakened by replacing $M_{\tau,\rho}$ with  $M'_{\tau,\rho}$, which is defined similarly to $M_{\tau,\rho}$ except that  $\lambda_{\tau,\rho}$ is always set to $1$. Thus, $M'_{\tau,\rho}\le M_{\tau,\rho}$, making the condition easier to satisfy. 

 We also note that \Cref{thm:existence} holds for any instance of MDS MaxLINSAT, whereas \Cref{thm:main} is stated only for the special case of OPI. 
\end{remark}

To prove the theorem, we state an obvious yet useful lemma inspired by \Cref{cor:Regev_like_new}. 
\begin{lemma}\label{lem:existence}
Let $q$ be a prime power and let $n,k$ be positive integers such that
$\log q\le \poly(n)$. Let $\mA\in \FF_q^{n\times k}$ be a matrix that generates an MDS code $C\subseteq \FF_q^n$.

For each $i\in[n]$, let $S_i\subseteq \FF_q$ be a subset satisfying $1\le |S_i|\le q-1$. 

Define $W:\FF_q^n\to\mathbb C$ by $$ W(e_1,\ldots,e_n)=\prod_{i=1}^n W_i(e_i), $$ where $$ W_i(e_i)= \begin{cases} 
\sqrt{\dfrac{\tau}{|S_i|}} & \text{if } e_i\in S_i,\\[1.2ex] \sqrt{\dfrac{1-\tau}{q-|S_i|}} & \text{if } e_i\notin S_i \end{cases} $$ and $0\le \tau\le 1$.

Assume that
\begin{align}
&
\sum_{\vx\in C}|W(\vx)|^2
\ne 
\sum_{\vx\in C_{sn}}|W(\vx)|^2, 
 \label{eq:existence_lemma_cond}
\end{align}
where 
$0\le s\le 1$ and 
$C_{sn}=\{\vx=(x_1,...,x_n) \in C: \left|\{i\in[n] : x_i \in S_i\}\right| < s n\}$.

Then $\MaxLINSAT(q,\mA,\{S_i\}_{i\in[n]},s)$ has a solution.
\end{lemma}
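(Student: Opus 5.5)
The plan is a direct contrapositive argument. Write the sum over $C$ as a disjoint union over $C_{sn}$ and $C\setminus C_{sn}$:
\[
\sum_{\vx\in C}|W(\vx)|^2=\sum_{\vx\in C_{sn}}|W(\vx)|^2+\sum_{\vx\in C\setminus C_{sn}}|W(\vx)|^2 .
\]
By \Cref{eq:existence_lemma_cond}, the two sums on the left and first on the right differ. Hence the last sum is nonzero, and since every term is nonnegative, there is some $\vx\in C\setminus C_{sn}$ with $W(\vx)\ne 0$. We do not even need $W(\vx)\neq 0$; nonemptiness of $C\setminus C_{sn}$ is what matters.

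Next, unpack the definitions. An element $\vx\in C\setminus C_{sn}$ lies in $C$, which is the image of $\mA$ by assumption. By the definition of $C_{sn}$, it satisfies $|\{i\in[n]: x_i\in S_i\}|\ge sn$. This is exactly the requirement for $\vy=\vx$ to be a solution of $\MaxLINSAT(q,\mA,\{S_i\}_{i\in[n]},s)$ in Problem~\ref{prob:MaxLINSAT}.

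The hypotheses $1\le |S_i|\le q-1$ only ensure that $W_i$ is well defined, since no denominator vanishes. The MDS assumption and $\log q\le\poly(n)$ play no role in this lemma; they matter only when the lemma is later combined with the bounds from \Cref{sec:bound_three,sec:bound_four} to verify \Cref{eq:existence_lemma_cond}.

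There is no real obstacle here. The only point requiring care is noting that a nonzero difference of two sums of nonnegative terms forces the complementary index set $C\setminus C_{sn}$ to be nonempty.
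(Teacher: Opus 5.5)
Your proof is correct and is essentially the paper's argument: the paper also observes that if no solution exists then $C\setminus C_{sn}=\emptyset$, so the two sums coincide, contradicting \Cref{eq:existence_lemma_cond}. Your version is the same reasoning read directly: the nonzero difference, which is a sum of nonnegative terms over $C\setminus C_{sn}$, forces $C\setminus C_{sn}\neq\emptyset$, and any element of that set is a solution.
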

\begin{proof}
    If $\MaxLINSAT(q,\mA,\{S_i\}_{i\in[n]},s)$ has no solution, then $C\setminus C_{sn}=\emptyset$. 
    Then \Cref{eq:existence_lemma_cond} cannot hold. 
\end{proof}

Using \Cref{lem:existence}, we prove \Cref{thm:existence}.
\begin{proof}[Proof of \Cref{thm:existence}.]
Define $W$, $C$, and $C_{sn}$ as in \Cref{lem:existence}. 

Note that the definition of $W$ is the same as that in \Cref{cor:Regev_like_new}. 
Thus, by the calculation in \Cref{sec:bound_three}, we have 
\begin{align}
q^{n-k}\sum_{\vx\in C}|W(\vx)|^2
= 1+\sum_{\vx \in C^\perp\setminus \{0\}}\prod_{i\in [n]} \mu_i(x_i), \label{eq:existential_sum_W_square}
\end{align}
where $\mu_i$ is defined as in \Cref{eq:definition_mu}.

Moreover, by repeating the same argument as in the proof of \Cref{lem:Z_is_negl}, where $\nu_i$ is replaced with $\mu_i$, we obtain 
\begin{align}
   \sum_{\vu\in C^\perp \setminus \{0\}}\prod_{i:u_i\ne 0} |\mu_i(u_i)|\le \negl(n) \label{eq:bound_Z_prime}
\end{align}
assuming \Cref{eq:existence_theorem_cond}.

In a bit more detail, let $Z'$ be the LHS of \Cref{eq:bound_Z_prime}. 
By applying the MDS Brascamp–Lieb inequality in a similar manner to that in the proof of \Cref{lem:Z_is_negl}, we have 
\begin{align*}
   Z'\le t^{-k}\prod_{i\in [n]}\left(1+t^{Q}M'_i\right)^{1/Q}
\end{align*}
where 
\[
M'_i=\sum_{a\in \FF_q\setminus \{0\}}|\mu_i(a)|^Q. 
\]
As shown in \Cref{eq:bound_q}, 
\begin{align*}
 M'_i\le \left(\frac{\tau-\rho}{\rho(1-\rho)}\right)^Q N_{\rho,Q}+O(p^{-1})=M'_{\tau,\rho}+O(p^{-1}),
\end{align*}
where $M'_{\tau,\rho}$ is defined as in \Cref{thm:existence}.  

Thus, by a similar calculation to that in the proof of \Cref{lem:Z_is_negl} with setting
\[
t=\left(\frac{R}{M'_{\tau,\rho}(1-R)}\right)^{1/Q}, 
\]
we obtain 
\begin{align*}
\frac{1}{n}\log Z'
\le (1-R)(H(R)+R \log M'_{\tau,\rho})+o(1), 
\end{align*} 
This implies that $Z'\le \negl(n)$, i.e., \Cref{eq:bound_Z_prime} holds, when \Cref{eq:existence_theorem_cond} holds. 

Combining \Cref{{eq:existential_sum_W_square,eq:bound_Z_prime}}, we have 
\begin{align}
q^{n-k}\sum_{\vx\in C}|W(\vx)|^2
\ge  1-\negl(n). \label{eq:sum_over_C}
\end{align}

When $\tau=s=1$, then 
we obviously have 
\begin{align*}
\sum_{\vx\in C_{sn}}|W(\vx)|^2=0.
\end{align*}
Thus, \Cref{eq:existence_lemma_cond} holds for sufficiently large $n$. 

Below, we assume $\tau<1$ and $s=\tau-\Omega(1)$. 
By the same argument as in the beginning of \Cref{sec:proof_main}, we may assume, without loss of generality, that 
\begin{align}
-(H(R)+R \log M'_{s,\rho})= \Omega(1), \label{eq:existence_theorem_cond_s}
\end{align}
where $M'_{s,\rho}$ is the same as $M'_{\tau,\rho}$ except that $\tau$ is replaced with $s$.

Then by repeating a similar analysis to that in \Cref{sec:bound_four}, 
\Cref{eq:existence_theorem_cond_s} and $s=\tau-\Omega(1)$
give 
\begin{align}
\left|
q^{n-k}\sum_{\vx\in C_{sn}}|W(\vx)|^2
\right|\le \negl(n). \label{eq:sum_over_C_sn}
\end{align}
Comparing \Cref{eq:sum_over_C,eq:sum_over_C_sn}, \Cref{eq:existence_lemma_cond} holds for sufficiently large $n$. 

Thus, by \Cref{lem:existence}, $\MaxLINSAT(q,\mA,\{S_i\}_{i\in[n]},s)$ has a solution for sufficiently large $n$. 
\end{proof}

\section*{Acknowledgments} 
\ifnum\arxiv=1
We thank Hayato Arai, Reimi Irokawa, and Tomohiro Yamazaki for helpful discussions.  
\fi

Many of the technical ideas in this work were developed with assistance from interactions with ChatGPT. In particular, ChatGPT helped us identify many components of the proof of \Cref{thm:main}, including the use of a generating polynomial, the MDS Brascamp--Lieb inequality,   and Pollard's theorem.\footnote{In the current version, we use a generalization of Pollard's theorem by Green and Ruzsa~(\Cref{thm:green_ruzsa}).} ChatGPT also assisted in obtaining proofs for the Fourier bias bounds in \Cref{sec:Fourier}, the MDS Brascamp--Lieb inequality, and the Gibbs variational principle. The authors verified, extended, and reorganized these arguments to obtain the final results. We also used ChatGPT to help draft Python code for generating the plots in \Cref{fig:balanced-case-comparison,fig:saturation-bounds,fig:existential_balanced-case-comparison,fig:existential_saturation_bounds}. The resulting codes were verified and modified by the authors. Finally, we used ChatGPT to assist with writing and basic editing. 
The ChatGPT versions used include 
GPT-5.6 Sol Pro, GPT-5.5 Pro, GPT-5.5 Thinking, GPT-5.4 Pro, and GPT-5.4 Thinking. 
The authors retain full responsibility for all content of the paper.

\newpage
\appendix 
\ifnum\arxiv=0
\section{Thinning Argument}\label{sec:thinning}

We give a simple thinning argument showing that an algorithmic guarantee
for a smaller density can be lifted to any larger density without losing the
target satisfaction rate.

\begin{lemma}[Thinning argument]\label{lem:thinning}
Let $0<\rho'<\rho<1$ be constants. Suppose that there is a quantum
polynomial-time algorithm $\A$ that solves 
$\OPI(k,q,\valpha,\{S'_i\}_{i\in[n]},s)$
with probability at least $1/\poly(n)$ for any distinct
$\alpha_1,\alpha_2,\ldots,\alpha_n\in \FF_q$ and any subsets
$S'_i\subseteq \FF_q$ satisfying 
$|S'_i|/q=\rho'\pm o(1)$
for all $i\in[n]$.

Then there is a quantum polynomial-time algorithm $\B$ that solves
    $\OPI(k,q,\valpha,\{S_i\}_{i\in[n]},s)$ 
with probability at least $1/\poly(n)$ for any distinct
$\alpha_1,\alpha_2,\ldots,\alpha_n\in \FF_q$ and any subsets
$S_i\subseteq \FF_q$ satisfying
    $|S_i|/q=\rho\pm o(1)$
for all $i\in[n]$.
\end{lemma}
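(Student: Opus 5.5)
The plan is to reduce the $\rho$-instance to a $\rho'$-instance by shrinking each set, and then to run $\A$ unchanged. Given oracle access to $S_i$ with $|S_i|/q=\rho\pm o(1)$, algorithm $\B$ will implement oracle access to a subset $S'_i\subseteq S_i$ of density $\rho'\pm o(1)$. It then runs $\A$ on $\OPI(k,q,\valpha,\{S'_i\}_{i\in[n]},s)$ and outputs whatever polynomial $P$ it returns. Correctness is then immediate: since $S'_i\subseteq S_i$, every $i$ with $P(\alpha_i)\in S'_i$ also has $P(\alpha_i)\in S_i$. So $|\{i:P(\alpha_i)\in S_i\}|\ge|\{i:P(\alpha_i)\in S'_i\}|\ge sn$, and the success probability $1/\poly(n)$ is inherited.

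The main work is building a quantumly queryable $S'_i$ of the right density without knowing $S_i$ explicitly, since $q$ may be superpolynomial.

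\textbf{Thinning mask.} I would fix a set $U\subseteq\FF_q$ and define $S'_i=S_i\cap U_i$, where $U_i=U+c_i$ for a shift $c_i\in\FF_q$. I take $U$ of density $\rho'/\rho$, for instance a union of cosets of a subgroup, or the preimage under the trace map or under an $\FF_p$-coordinate map of an interval of $\FF_p$ of relative length $\approx\rho'/\rho$. Membership in $U_i$ is efficiently computable, so a query to $S'_i$ costs one query to $S_i$ plus a classical check. The approximation error from rounding $\rho'/\rho$ is $O(1/p)=o(1)$ because $p=\omega(1)$.

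\textbf{Density and the main obstacle.} The hard step is ensuring $|S_i\cap U_i|=(\rho'\pm o(1))q$ in the worst case, because $S_i$ may be adversarially correlated with any fixed $U$. I would pick the shift $c_i$ uniformly at random. Then $\Ex_{c_i}|S_i\cap(U+c_i)|=|S_i||U|/q=(\rho'\pm o(1))q$. For concentration, the variance equals $q^{-1}\sum_{a\ne0}|\hat{\one}_{S_i}(a)|^2|\hat{\one}_U(a)|^2$. By \Cref{lem:Fourier_quadratic_sum} this is at most $\rho(1-\rho)\max_{a\ne0}|\hat{\one}_U(a)|^2$, which is not $o(q^2)$ in general, so a single random shift is not enough.

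I would fix this with a random affine image, taking $U_i=\lambda_iU+c_i$ with $\lambda_i\ne0$. One could also try a fully random $U$ realized as $\{x:\mathrm{hash}(x)<\rho'/\rho\}$ using a pairwise-independent hash from a family with efficiently computable members. Under pairwise independence the variance of $|S_i\cap U_i|$ is at most $|S_i|$, which is $o(q^2)$. Chebyshev then gives density $\rho'\pm o(1)$ with probability $1-o(1)$ when $q=\omega(1)$. A union bound over the $n$ sets needs failure probability $o(1/n)$, so I would use $k$-wise independent hashing with $k=O(\log n)$ or a Chernoff-type bound to get $\exp(-\Omega(q^{c}))$ tails. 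Alternatively, since $\rho$ can be estimated by sampling as in the remark after \Cref{cor:Regev_like_new}, $\B$ can test the realized density and resample until it is correct.

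Because $\A$ works for \emph{any} sets of density $\rho'\pm o(1)$, conditioning on a good mask suffices. The overall success probability is then $(1-o(1))\cdot1/\poly(n)$.
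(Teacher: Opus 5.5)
Your proof is correct. The core reduction is the same as the paper's: intersect each $S_i$ with a random mask of density $\theta=\rho'/\rho$, use concentration to get $|S_i\cap U_i|/q=\rho'\pm o(1)$ for all $i$, run $\A$, and use $S'_i\subseteq S_i$. The difference is in how the mask is realized.

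The paper takes $T_i$ to be a fully independent random subset, with each element included with probability $\theta$. It gets the bound for all $i$ at once from the Chernoff bound and makes $T_i$ quantumly queryable by simulating it with Zhandry's compressed oracle. You instead use an explicit $k$-wise independent hash family with $k=O(\log n)$, thresholded to density $\theta$, together with a limited-independence tail bound and a union bound.

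Your route is more elementary. Once the key is sampled, the mask is a fixed, efficiently computable set. So applying $\A$'s worst-case guarantee after conditioning on a good mask needs no argument about lazily simulated oracles. The paper's route gets the exponential tail for free from full independence, at the cost of invoking the oracle-simulation machinery.

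Two small remarks, neither of which affects your main route (where $\Ex|S_i\cap U_i|=\theta|S_i|=(\rho'\pm o(1))q$ automatically):
\begin{itemize}
\item In the paper's normalization, the variance of $|S_i\cap(U+c)|$ over a uniform shift $c$ is $\sum_{a\ne0}|\hat{\one}_{S_i}(a)|^2|\hat{\one}_U(a)|^2$, without your factor $q^{-1}$. With that factor your bound would already be at most $\rho(1-\rho)|U|^2/q=O(q)$, which contradicts your conclusion. The conclusion itself, that a single random shift does not suffice, is correct.
\item In the test-and-resample variant, compare the estimated density with $\theta$ times an estimate of $|S_i|/q$, rather than with $\rho'$ itself. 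Otherwise the unknown $o(1)$ deviation of $|S_i|/q$ from $\rho$ could exceed your acceptance window, and the loop would never terminate.
\end{itemize}
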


\begin{proof}
Let $S_i\subseteq \FF_q$ be subsets satisfying
$|S_i|/q=\rho\pm o(1)$ for all $i\in[n]$. Set
    $$\theta=\frac{\rho'}{\rho}.$$ 
The algorithm $\B$ samples, for each $i\in[n]$, an independent 
random subset $T_i\subseteq \FF_q$ in which each element of $\FF_q$ is
included with probability $\theta$. Then $\B$ implements
quantumly accessible membership oracles for the random sets $T_i$. This can
be done in quantum polynomial time, for example by using Zhandry's
compressed oracle technique~\cite{C:Zhandry19}. 

Define the thinned sets
\[
    S'_i := S_i\cap T_i .
\]
The algorithm $\B$ provides $\A$ with quantum oracle access to the
membership oracle of $S'_i$. 
The algorithm $\B$ runs $\A$ on the instance
    $\OPI(k,q,\valpha,\{S'_i\}_{i\in[n]},s)$, where $\B$ simulates the membership oracle for $S'_i$ using its own membership oracle for $S_i$ and the internal membership oracle for $T_i$,   
and outputs whatever $\A$ outputs.

By the Chernoff bound, we have
\[
    |S'_i|/q=\rho'\pm o(1)
\]
simultaneously for all $i\in[n]$, except with negligible probability.
When this occurs, the worst-case guarantee of $\A$ applies to the sets
$S'_i$. Hence $\A$ outputs, with probability at least
$1/\poly(n)$, a polynomial $P\in \FF_q[X]$ with $\deg(P)<k$ such that
\[
    \bigl|\{i\in[n]: P(\alpha_i)\in S'_i\}\bigr|
    \ge sn .
\]
Since $S'_i\subseteq S_i$ for every $i\in[n]$, the same
polynomial also satisfies
\[
    \bigl|\{i\in[n]: P(\alpha_i)\in S_i\}\bigr|
    \ge sn .
\]
Therefore $P$ is also a valid solution to
\[
    \OPI(k,q,\valpha,\{S_i\}_{i\in[n]},s).
\]
The negligible failure probability from the thinning step only changes the
overall success probability by a negligible amount, so $\B$ succeeds with
probability at least $1/\poly(n)$.
\end{proof}

\section{Comparisons of Existential Bounds}\label{sec:comparison_existential}
Here, we give comparisons between our existential bound and those of \cite{SunWootters26} and DQI~\cite{DQI}. \Cref{fig:existential_balanced-case-comparison} shows the comparison for the case $\rho= 1/2$, and \Cref{fig:existential_saturation_bounds} gives a comparison of the saturation bound for various values of $\rho$.

\begin{figure}[!htbp]
    \centering
    \includegraphics[width=0.8\linewidth]{figures/existential_balanced_case_comparison.pdf}
    \caption{Comparison of existential satisfaction rates in the balanced case $\rho=1/2$.}
    \label{fig:existential_balanced-case-comparison}
\end{figure}
\begin{figure}[!htbp]
    \centering
    \includegraphics[width=0.8\linewidth]{figures/existential_saturation_bounds.pdf}
    \caption{Comparison of existential saturation bounds as functions of $\rho$. The Sun--Wootters curve is based on \cite[Theorems 1.6 and 1.9]{SunWootters26}. While \cite[Theorem 1.9]{SunWootters26} gives a bound for general $\rho$, it has a bump for $\rho>1/2$. On the other hand, \cite[Theorem 1.6]{SunWootters26} gives a better bound in the balanced case $\rho\approx 1/2$. By a thinning argument similar to the one discussed in \Cref{sec:thinning}, one can increase the density $\rho$ without sacrificing the satisfaction rate. Thus, in the regime $\rho\ge 1/2$, whenever the bound from \cite[Theorem 1.9]{SunWootters26} is worse, we plot the thinned balanced-case bound, which gives $R\approx 0.7495$.
    }
    \label{fig:existential_saturation_bounds}
\end{figure}
\section{Proof of \Cref{lem:Gibbs}}\label{sec:proof_Gibbs}
\begin{proof}[Proof of \Cref{lem:Gibbs}] 
Let $P$ be a probability distribution over $X$. 
Let $\tilde{P}$ be a probability distribution over $X$ defined as
\[
\tilde{P}(u)=\frac{2^{f(u)}}{\sum_{v\in X}2^{f(v)}}.
\]
for $u\in X$. 
Multiplying both sides by $\sum_{v\in X}2^{f(v)}$ and taking logarithms, we obtain
\[
f(u)=\log \sum_{v\in X}2^{f(v)}+\log \tilde{P}(u).
\]
for $u\in X$. 
Thus, 
\begin{align*}
\sum_{u\in X}P(u)f(u)
&=\sum_{u\in X}P(u)\left(\log \sum_{v\in X}2^{f(v)}+\log \tilde{P}(u)\right)\\
&=\log \sum_{v\in X}2^{f(v)}+ \sum_{u\in X}P(u)\log\tilde{P}(u)\\
&=\log \sum_{v\in X}2^{f(v)}+\sum_{u:P(u)>0}P(u)\log P(u)+ \sum_{u:P(u)>0}P(u)\left(\log \tilde{P}(u)-\log P(u)\right)\\
&=\log \sum_{v\in X}2^{f(v)}-H(P)+ \sum_{u:P(u)>0}P(u)\log\frac{\tilde{P}(u)}{P(u)}.
\end{align*}

Thus, we have 
\[
\sum_{u\in X}P(u)f(u)+H(P)=\log \sum_{v\in X}2^{f(v)}+ \sum_{u:P(u)>0}P(u)\log\frac{\tilde{P}(u)}{P(u)}.
\]
When $P=\tilde{P}$, $\log\frac{\tilde{P}(u)}{P(u)}=0$ for all $u$ such that $P(u)>0$, and thus the second term in the RHS is equal to $0$.  
This implies
\[
\sup_{P\in \mathcal{P}(X)}
\left\{P(u)f(u)+H(P)
\right\}
\ge \log\sum_{v\in X}2^{f(v)}.  
\]

On the other hand, by concavity of $\log$, Jensen's inequality gives
\begin{align*}
\sum_{u:P(u)>0}P(u)\log\frac{\tilde{P}(u)}{P(u)}
&\le \log\left(\sum_{u:P(u)>0}P(u)\frac{\tilde{P}(u)}{P(u)}\right)\\
&=\log\left(\sum_{u:P(u)>0}\tilde{P}(u)\right)\\
&\le \log\left(\sum_{u\in X}\tilde{P}(u)\right)\\
&=0.
\end{align*}
Thus, we have 
\[
P(u)f(u)+H(P)\le \log \sum_{v\in X}2^{f(v)}
\]
for every $P$. 

This implies 
\[
\sup_{P\in \mathcal{P}(X)}
\left\{P(u)f(u)+H(P)
\right\}
\le \log\sum_{v\in X}2^{f(v)}.  
\]

This completes the proof of \Cref{lem:Gibbs}. 
\end{proof}
\bibliographystyle{alpha}
\bibliography{abbrev3,crypto,reference}
\else
\bibliographystyle{alpha}
\bibliography{abbrev3,crypto,reference}

\fi
\end{document}